\newcounter{CN}
\DeclareMathAlphabet{\mathpzc}{OT1}{pzc}{m}{it}
\def\RR{{\mathbb R}}
\def\CC{{\mathbb C}}
\def\NN{{\mathbb N}}
\def\ZZ{{\mathbb Z}}
\def\A{{\mathcal A}}
\def\B{{\mathcal B}}
\def\D{{\mathcal D}}
\def\H{{\mathcal H}}
\def\I{{\mathcal I}}
\def\K{{\mathcal K}}
\def\R{{\mathrm R}}
\def\U{{\mathcal U}}
\def\W{{\mathcal W}}
\def\a{\alpha}
\def\d{\delta}
\def\g{\gamma}
\def\G{\Gamma}
\def\i{\iota}
\def\L{{\mathrm L}}
\def\s{\sigma}
\def\t{\tau}
\def\x{\xi}
\def\gt{\mathfrak t}
\def\Ad{{\hbox{\rm Ad\,}}}
\def\Aut{{\mathrm{Aut}}}
\def\End{{\hbox{End}}}
\def\id{{\rm id}}
\def\1{{\mathbbm 1}}
\def\Exp{{\rm Exp}}
\def\uone{{\rm U(1)}}
\def\diff{{\rm Diff}_+}
\def\diffs1{\diff(S^1)}
\def\psone{\diff^{1,\mathrm{ps}}}
\def\psonezero{\mathrm{Diff}_{+,0}^{1,\mathrm{ps}}}
\def\psoneone{\mathrm{Diff}_{+,1}^{1,\mathrm{ps}}}
\def\vect{{\rm Vect}}
\def\mob{{\rm M\ddot{o}b}}
\def\uMob{{\widetilde{\Mob}}}
\def\vir{{\rm Vir}}
\def\supp{{\rm supp\,}}
\def\psl2r{{\rm PSL}(2,\RR)}
\def\sl2r{{\rm SL}(2,\RR)}
\def\su11{{\rm SU}(1,1)}
\def\2dmob{{\overline{\psl2r}\times\overline{\psl2r}}}
\def\<{\langle}
\def\>{\rangle}
\def\Im{\mathrm{Im}\,}
\def\im{\mathrm{Im}\,}
\def\tremezzi{\mathcal{S}_\frac32(S^1, \RR)}
\def\dom{{\mathscr{D}}}
\DeclareMathOperator{\Mob}{M\ddot ob}
\newcommand{\fin}{\mathrm{fin}}
\newtheorem{theorem}{Theorem}[section]
\newtheorem{lemma}[theorem]{Lemma}
\newtheorem{corollary}[theorem]{Corollary}
\newtheorem{proposition}[theorem]{Proposition}
\theoremstyle{definition}
\theoremstyle{remark}
\newtheorem{remark}[theorem]{Remark}
\numberwithin{equation}{section}
\title{Solitons and nonsmooth diffeomorphisms in conformal nets}
\date{}
\author{
{\bf Simone Del Vecchio}\footnote{Supported by ERC advanced grant 669240 QUEST ``Quantum Algebraic Structures and Models'' and GNAMPA-INDAM}
\\
   Dipartimento di Matematica, Universit\`a di Roma Tor Vergata\\
   Via della Ricerca Scientifica 1, I-00133 Roma, Italy\\
   email: {\tt delvecch@mat.uniroma2.it}\\
\\
{\bf Stefano Iovieno}
\\
   Dipartimento di Matematica, Universit\`a di Roma La Sapienza\\
   Piazzale Aldo Moro 5, I-00185 Roma, Italy\\
   email: {\tt iovieno@mat.uniroma1.it}\\
\\
{\bf Yoh Tanimoto}\footnote{Supported by Programma per giovani ricercatori, anno 2014 ``Rita Levi Montalcini''
of the Italian Ministry of Education, University and Research.}
\\
   Dipartimento di Matematica, Universit\`a di Roma Tor Vergata\\
   Via della Ricerca Scientifica 1, I-00133 Roma, Italy\\
   email: {\tt hoyt@mat.uniroma2.it}
}
\begin{document}

\maketitle

\begin{abstract}
We show that any solitonic representation of a conformal (diffeomorphism covariant) net on $S^1$
has positive energy and
construct an uncountable family of mutually inequivalent solitonic representations of any conformal net,
using nonsmooth diffeomorphisms.
On the loop group nets, we show that these representations induce representations of the subgroup
of loops compactly supported in $S^1\setminus\{-1\}$ which do not extend to the whole loop group.

In the case of the $\uone$-current net, we extend the diffeomorphism covariance to
the Sobolev diffeomorphisms $\D^s(S^1), s > 2$, and show that the positive-energy
vacuum representations of $\diff(S^1)$ with integer central charges extend to $\D^s(S^1)$.
The solitonic representations constructed above for the $\mathrm{U}(1)$-current net and for Virasoro nets
with integral central charge are continuously covariant with respect to
the stabilizer subgroup of $\diff(S^1)$ of $-1$ of the circle.
\end{abstract}

\section{Introduction}
In two-dimensional quantum field theory, solitons appear in the presence of inequivalent vacuum sectors.
In \cite{Froehlich76}, Fr\"olich proposed an operator-algebraic formulation of solitons as superselection
sectors localized in a half-space. Existence of such solitons has been obtained
for a wide class of models \cite{Schlingemann96, Schlingemann98, Mueger99},
and general structural results have been obtained \cite{Fredenhagen93, Rehren98}.
In two-dimensional conformal field theory, the vacuum is unique due to dilation invariance
\cite{Roberts74} and translation-invariant states are not always localized in
half-space \cite{Tanimoto18-1}, yet solitons appear through 
$\alpha$-induction \cite{LR95, BE98, BE99-1}, and the interrelationship between solitons
has led to the operator-algebraic formulation of modular invariant \cite{BE99-2}.
In this way, solitons play a crucial role in the study of conformal field theories.

In the operator-algebraic framework, a conformal field theory is realized as
a family of von Neumann algebras satisfying certain axioms (conformal net),
and the superselection sectors, including solitons, are the equivalence classes of its representations.
When a conformal net has a subnet,
$\alpha$-induction yields solitons for the larger net from a sector of the smaller net.
Note that the Virasoro nets, the conformal nets generated by the stress-energy tensor alone,
do not have any M\"obius covariant subnet \cite{Carpi98}.
Therefore, it is a natural question whether the Virasoro nets admit any nontrivial soliton.
Recently Henriques in \cite{Henriques17-2} proved that the category of solitons $\mathrm{Sol}(\A)$
of a completely rational conformal net $\A$ is a bicommutant category whose Drinfel'd center corresponds
to the category of DHR sectors of $\A$. This fact implies the existence of non-trivial solitons
for all the conformal nets with central charge $c<1$ and $\mu$-index $>1$
(hence including the Virasoro nets with $c<1$),
yet the existence of such solitons is only implicit.
In this paper, we construct for any conformal net a family of concrete, proper, automorphic solitons
(with index $1$) parametrized by $\RR_+$ using nonsmooth diffeomorphisms.

The (smooth) diffeomorphism covariance is the defining property of conformal nets.
This large spacetime symmetry can be further extended to certain nonsmooth diffeomorphisms
depending on their regularity \cite{CW05, Weiner06}.
Here, we show that the representation cannot be extended to some less-smooth
diffeomorphisms, and exploit this to construct proper irreducible solitons
(a similar construction has been implicitly given in $\cite{LX04,KLX05}$ which yielded
non irreducible, type III solitons).
We also show that any soliton has positive energy, this time by exploiting unitarily representable
nonsmooth diffeomorphisms\footnote{Positivity of energy has been proved for finite index representations
without using conformal covariance \cite{BCL98}. Our proof depends on conformal covariance and normality
on half lines, but makes no assumption on the index.}.

This distinction between implementable and non-implementable diffeomorphisms is central in this work.
We say that a nonsmooth diffeomorphism $\gamma$ is unitarily implemented in a conformal net
if there is a unitary operator whose adjoint action realizes $\g$ on the quantum observables.
When a diffeomorphism is not implementable, its action may give rise to a new sector.
In any conformal net,
the Sobolev-class diffeomorphisms $\D^s(S^1)$ with $s>3$ are implementable \cite{CDIT18+}.
Here we show that, using the Tomita-Takesaki modular theory, nonsmooth diffeomorphism
which have discontinuous derivatives are not implementable, although they are implementable
when restricted to local algebras.
In this way, we obtain an uncountable family of inequivalent solitons for any conformal net.
On the other hand, the implementability of many nonsmooth diffeomorphisms is
inherited by sectors, and indeed positivity of energy in any soliton is proved in this way.

As an application, we construct irreducible unitary projective positive-energy representations
of the subgroup $\Lambda G$ of the loop group $LG$, consisting of loops with support not containing the point $-1$
which do not extend to $LG$. The existence of such representations
was marked as an open problem in \cite[P.174, Remark]{PS86}.
Similar representations are constructed for the group $B_0$ of diffeomorphisms of $S^1$ preserving the point $-1$.
These results can be seen as an application of the modular theory of von Neumann algebras to
the representation theory of infinite-dimensional groups.

We also pursue the question of which nonsmooth diffeomorphisms are implementable.
We take the $\mathrm{U}(1)$-current net (the derivative of the massless free field, or the Heisenberg algebra)
which has the central charge $c=1$, and show that it is covariant with respect
to Sobolev-class diffeomorphisms $\D^s(S^1), s>2$. This is done by the Shale(-Steinspring) criterion
of unitary implementation \cite{Shale62}, and improve the implementation with $s>3$ for general conformal nets \cite{CDIT18+}.
This implies that some unitary representations of $\diff(S^1)$ with integer $c$ can be
extended to $\D^s(S^1)$-diffeomorphisms with $s>2$.
This is to our knowledge the largest group which is implementable for some $c$,
and no other implementable diffeomorphism is known.
As a consequence, the solitons we construct in Section \ref{nonsmoothsoliton}
are continuously $B_0$-covariant in this case.

This paper is organized as follows: in Section \ref{preliminaries} we recall the notions of conformal net and its representation theory
together with some facts about the diffeomorphisms groups and loop groups.
In Section \ref{positivity} we prove that every soliton is translation covariant with positive energy.
The covariance can be further extended, but we do not know the continuity.
In Section \ref{nonsmoothsoliton} we construct a family of proper solitons arising from nonsmooth
diffeomorphisms. It is shown in Section \ref{DHR} that a soliton which is locally $\mob$-covariant
(not just covariant with respect to the universal covering $\uMob$) extends to a DHR representation.
Section \ref{examples} is dedicated to concrete examples: we use the results in Section \ref{general}
to prove that there exist irreducible positive energy representations of $B_0$ and $\Lambda G$ which do not
extend to $\diff(S^1)$ and $LG$, respectively.
Furthermore, in Section \ref{uone} we show that the $U(1)$-current net and the Virasoro nets
with positive integer central charge are $\D^s(S^1)$-covariant, $s>2$.
In Section \ref{outlook} we summarize open problems.
In Appendix, we prove that any conformal net is covariant with respect to
piecewise smooth $C^1$-diffeomorphisms.

\section{Preliminaries}\label{preliminaries}
\subsection{Conformal nets}
Let $\mathcal{I}$ be the set of nonempty, non-dense, connected open intervals of the unit circle $S^{1}$.
For $I \in \I$, $I^{c}$ denotes the interior of the complement of the interval $I\in\mathcal{I}$,
namely $I^{c}=(S^{1}\setminus I)^{\circ}$.
The M\"obius group $\mob = \psl2r$ acts on $S^1$ by linear fractional transformations (see Section \ref{moebius}).

A {\bf M\"obius covariant net on $S^{1}$} $(\A, U, \Omega)$ consists of
a family $\left\{\mathcal{A}(I), I\in\mathcal{I}\right\}$
of von Neumann algebras acting on a separable complex Hilbert space $\mathcal{H}$,
a strongly continuous unitary representation $U$ of $\mob$ and a ``vacuum'' vector $\Omega \in \H$
satisfying the following properties:
\begin{enumerate}[{(CN}1{)}]
\item {\bf Isotony}: if $I_{1}\subset I_{2}$, $I_{1},I_{2}\in \mathcal{I}$, then
$\mathcal{A}(I_{1})\subset\mathcal{A}(I_{2})$.
\item {\bf Locality}: if $I_{1}\cap I_{2}=\emptyset$, $I_{1},I_{2}\in \mathcal{I}$, then
$\mathcal{A}(I_{1})\subset\mathcal{A}(I_{2})^{\prime}$.
\item {\bf M\"obius covariance}: for $g\in \mob$, $I\in\mathcal{I}$,
$\Ad U(g)(\mathcal{A}(I)) = \mathcal{A}(gI)$.
\item {\bf Positivity of energy}: the representation $U$ has positive energy, i.e.\! the conformal Hamiltonian $L_{0}$ (the generator of rotations) has non-negative spectrum.
\item {\bf Vacuum}: $\Omega$ is the unique (up to a scalar) vector such that $U(g)\Omega=\Omega$ for $g\in\mob$,
and $\Omega$ is cyclic for $\bigvee_{I\in\mathcal{I}}\mathcal{A}(I)$.
\setcounter{CN}{\value{enumi}}
\end{enumerate}
Positivity of energy is actually equivalent to positivity of the generator of
translations \cite[Proposition 1]{Koester02}, see Section \ref{moebius}.
With these assumptions, the following automatically hold,
see \cite[Lemma 2.9, Theorem 2.19(ii)]{GF93}\cite[Section 3]{FJ96} and the arguments of \cite[Theorem 1.2.6]{Baumgaertel95}:
\begin{enumerate}[{(CN}1{)}]
\setcounter{enumi}{\value{CN}}
\item {\bf Haag duality}: for every $I\in\mathcal{I}$, $\mathcal{A}(I')=\mathcal{A}(I)'$ where $\mathcal{A}(I)'$ is the commutant of $\mathcal{A}(I)$.
\item {\bf Additivity}: if $I, I_{\alpha}\in\mathcal{I}$ and $I \subset \bigcup_\a I_\a$,
then $\mathcal{A}(I)\subset \bigvee_{\alpha}\mathcal{A}(I_{\alpha})$.
\item\label{bisognano} {\bf Bisognano-Wichmann property}:
if $I_{(0,\pi)}=\{z\in S^1: \text{Im}(z)>0\}\in\mathcal{I}$ and $\Delta_{I_{(0,\pi)}}$ is the modular operator associated to $\mathcal{A}(I_{(0,\pi)})$ and $\Omega$ then
\begin{align*}
\Delta^{it}_{I_{(0,\pi)}}=U(\delta(-2\pi t)),
\end{align*}
where $\delta$ is the one parameter group of dilations (as a subgroup of $\mob$ through the Cayley transform, see Section \ref{moebius}).
\item\label{cn:irreducibility} {\bf Irreducibility}: each $\A(I)$ is a type III factor and
$\bigvee_{I\in{\overline{\I}_\RR}} \mathcal{A}(I)=\B(\H)$, where ${\overline{\I}_\RR}$ is the set of intervals
not containing the point $-1$, see Section \ref{representations}.
\setcounter{CN}{\value{enumi}}
\end{enumerate}

By a {\bf conformal net} (or diffeomorphism covariant net)
we shall mean a M\"obius covariant net which satisfies the following:
\begin{enumerate}[{(CN}1{)}]
\setcounter{enumi}{\value{CN}}
\item $U$ extends to a projective unitary representation of $\diff(S^1)$ on $\mathcal{H}$
such that for all $I\in\mathcal{I}$ we have
\[
 \Ad U(\g)(\mathcal{A}(I)) = \mathcal{A}(\g I), \quad \g\in\diff(S^1),
\]
and 
\begin{align}\label{eq:diffcov2}
\Ad U(\g)(x) = x, \quad x\in \A(I),\quad \supp \g\in I^\prime,
\end{align}
where $\supp \g$ is the closure of the complement of the set of $z \in S^1$ such that $\g(z)=z$.
\setcounter{CN}{\value{enumi}}
\end{enumerate}
In a conformal net, the following is automatic \cite{MTW18}:
\begin{enumerate}[{(CN}1{)}]
\setcounter{enumi}{\value{CN}}
\item\label{split} {\bf The split property}:
if $\overline{I} \subset \tilde I$ where $\overline I$ is the closure of $I$,
then there is a type I factor $\mathcal{R}_{I,\tilde I}$ such that
$\A(I) \subset \mathcal{R}_{I,\tilde I} \subset \A(\tilde I)$.
\setcounter{CN}{\value{enumi}}
\end{enumerate}

\subsection{Representations of conformal nets}\label{representations}
\paragraph{DHR representations.}
A {\bf DHR (Doplicher-Haag-Roberts) representation} $\rho$ of a conformal net $\mathcal{A}$ is a family of maps
$\{\rho_I\}_{I\in\I}$ where $\rho_I$ is a normal ($\sigma$-weakly continuous) representation of the von Neumann algebra $\mathcal{A}(I)$
on a fixed Hilbert space $\mathcal{H}_{\rho}$ with the compatibility property
$\rho_{I_2}|_{\mathcal{A}(I_1)} = \rho_{I_1}, I_1\subset I_2$.

We say that two representations $\rho_1, \rho_2$ are unitarily equivalent if there exists an intertwining unitary operator
$U$ from $\mathcal{H}_{\rho_1}$ and $\mathcal{H}_{\rho_2}$, i.e. $U\rho_{1,I}(x)=\rho_{2,I}(x)U$ for every $x\in\mathcal{A}(I)$
and $I\in\mathcal{I}$.
A representation $\rho$ is said to be {\bf irreducible} if $\bigvee_{I\in \I}\rho_I(\mathcal{A}(I)) = \B(\H_\rho)$.
The collection of identity maps $\rho_0 = \{\rho_{0,I}\}$
where $\rho_{0,I}(x) = x, x \in \A(I)$ is called the {\bf vacuum representation}.

A group $\mathbf{G}$ may act on $S^1$.
In this paper, $\mathbf{G}$ will be either $\mob, \diff(S^1)$ or some groups which contain $\diff(S^1)$
such that $U$ can be extended to them (see Sections \ref{nonsmoothsoliton} and \ref{uone}).
A DHR representation $\rho$ is said to be {\bf $G$-covariant} if there exists a unitary projective
representation $U_\rho$ of $\mathbf{G}$ such that
\begin{align*}
\Ad U_\rho(\gamma)(\rho_I(x)) = \rho_{\gamma I}(\Ad U(\gamma)(x)),
\end{align*}
for all $\gamma\in \mathbf{G}$.
If $\mathbf{G}$ is a topological group and acts on $S^1$ continuously,
we often require that $U$ is continuous in the strong operator topology.

\paragraph{Solitons.}
Let ${\overline{\I}_\RR}$ be the set of open, non-empty, connected subsets of the real line $\mathbb{R}$,
identified with $S^1\setminus\{-1\} \subset \CC$ via Cayley transform (see Section \ref{moebius}).
Namely, ${\overline{\I}_\RR}$ is the family of bounded open intervals and open half-lines of $\mathbb{R}$.

A {\bf soliton} (or solitonic representation to be precise) $\sigma$ of a conformal net $\mathcal{A}$
is a family of maps $\{\sigma_I\}_{I\in{\overline{\I}_\RR}}$
where $\sigma_I$ is a normal representation of the von Neumann algebra $\mathcal{A}(I)$
on a fixed Hilbert space $\mathcal{H}_{\sigma}$ with the compatibility property
$\rho_{I_2}|_{\mathcal{A}(I_1)} = \rho_{I_1}, I_1\subset I_2$.
We say that the soliton $\sigma$ is {\bf proper} if there is no DHR representation of
the conformal net $\mathcal{A}$ which agrees with $\sigma$ when restricted to the family of intervals ${\overline{\I}_\RR}$.

Let $\mathbf{G}$ again be a group acting on $S^1$, and $\mathbf{G}_0 \subset \mathbf{G}$ be a subgroup whose
elements preserve the point $-1 \in S^1$.
A soliton $\sigma$ of $\mathcal{A}$
is {\bf $\mathbf{G}_0$-covariant} if there is a unitary projective representation $U_\sigma$ of $\mathbf{G}_0$
such that $\Ad U_\sigma(\gamma)(\sigma_I(x)) = \sigma_{\gamma I}(\Ad U(\gamma)(x))$
with $x\in\mathcal{A}(I)$. 
Similarly, a soliton $\sigma$ is {\bf locally $\mathbf{G}$-covariant}\footnote{In
\cite[Definition 2.2]{CHKLX15}, the notion ``$\mathbf{G}$-covariance'' is
defined through a unitary projective representation of the universal covering group $\widetilde{\mathbf{G}}$.
Differently from this, we distinguish explicitly projective representations of $\mathbf{G}$ and of
$\widetilde{\mathbf{G}}$ and add ``locally''.}
if there is a unitary projective representation $U_\sigma$ of $\mathbf{G}$
such that
for a neighborhood $\U$ of the unit element of $\mathbf{G}$ and $I\in \overline{I}_\RR$
such that $\gamma I \subset \RR$ for $\g \in \U$, it holds that 
$\Ad U_\sigma(\gamma)(\sigma_I(x)) = \sigma_{\gamma I}(\Ad U(\gamma)(x)), x\in \A(I), \g \in \U$.

Consider the case where $\mathbf{G}_0$ includes the translation group of $\RR$.
We say that a soliton $\sigma$ has {\bf positive energy} if the unitary representation
$U_\sigma$ above can be chosen in such a way that the restriction to the one-parameter subgroup
of translations is continuous in the strong operator topology and has a positive generator.
Note that, if $\sigma$ is not irreducible, $U_\sigma$ which implements covariance is not unique,
and other implementations may fail to have positive generator.

With $\RR_\pm$ considered as intervals in ${\overline{\I}_\RR}$,
we define the {\bf index} of $\sigma$ as the Jones index
of the inclusion $\sigma(\A(\RR_+)) \subset \sigma(\A(\RR_-))'$.
This is a natural generalization of the index of DHR representations.

\subsection{The spacetime symmetry groups}
\subsubsection{The M\"obius group}\label{moebius}
The group $\mathrm{SL}(2,\mathbb{R})$ of $2\times 2$ real matrices with determinant one acts on the compactified real line $\mathbb{R}\cup\left\{\infty\right\}$ by linear fractional transformations:
\begin{align*}
g:t\rightarrow gt\coloneqq\frac{at+b}{ct+d}\hspace{5mm}\text{for}\hspace{3mm}g=\left(
\begin{matrix}
a & b\\
c & d
\end{matrix}
\right)\in \mathrm{SL}(2,\mathbb{R}).
\end{align*}
The kernel of this action is $\left\{\pm \1\right\}$.
By identifying the compactified real line $\mathbb{R}\cup\{\infty\}$ with the circle $S^{1}$
via Cayley transform
\begin{align}\label{eq:cayley}
C:S^1\setminus\{-1\}\rightarrow \mathbb{R},\qquad
z\mapsto i\frac{1-z}{1+z},
\end{align}
with inverse
\begin{align}\label{eq:cayleyinv}
C^{-1}:\mathbb{R}\rightarrow S^1\setminus\{-1\}, \qquad
t\mapsto \frac{1+it}{1-it},
\end{align}
the group $\psl2r:=\mathrm{SL}(2,\mathbb{R})/\left\{\pm \1\right\}$ can be identified with
a subgroup of diffeomorphisms of the circle $S^{1}$, the M\"obius group $\mob$.

The following are important subgroups of $\psl2r$:
\begin{align*}
R(\theta)=\left(
\begin{matrix}
\cos(\theta/ 2) & \sin(\theta/ 2)\\
-\sin(\theta/ 2) & \cos(\theta/ 2) 
\end{matrix}
\right), \hspace{5mm}
\delta(t)=
\left(
\begin{matrix}
e^{t/ 2} & 0\\
0 & e^{-t/ 2} 
\end{matrix}
\right),\hspace{5mm}
\tau(t)=
\left(
\begin{matrix}
1 & t\\
0 & 1
\end{matrix}
\right),
\end{align*}
and they are called the rotation, dilation and translation subgroup, respectively,
and act in the following way:
\begin{align*}
R(\theta)z&=e^{i\theta}z\hspace{5mm}\text{on}\hspace{1.5mm}S^{1},\\
\delta(t)s&=e^{t}s\hspace{5mm}\text{on}\hspace{1.5mm}\mathbb{R},\\
\tau(t)s&=s+t\hspace{5mm}\text{on}\hspace{1.5mm}\mathbb{R}.
\end{align*}
The expressions of $\delta(t)$ and $\tau(t)$ in the circle picture are available in
\cite[(A.6)]{WeinerThesis}.

\subsubsection{The (smooth) diffeomorphism groups}\label{smooth}
\paragraph{The Lie group.}
Let us denote by $\diff(S^1)$ the group of orientation preserving, smooth diffeomorphisms
of the circle $S^1\coloneqq \lbrace z\in\CC :\vert z\vert=1\rbrace$ and $\vect(S^1)$ denote
the set of smooth vector fields on $S^1.$
The group $\diff(S^1)$  is an infinite-dimensional Lie group whose Lie algebra is identified
with the real topological vector space $\vect(S^1)$ of smooth vector fields on $S^1$ with
$C^\infty$-topology \cite{Milnor84}.
The exponential map $\Exp:\vect(S^1)\rightarrow\diff(S^1)$ maps $tf\in\vect(S^1)$
to the one-parameter group $\Exp(tf)\in\diff(S^1)$ of diffeomorphisms of $S^1$ satisfying
the ordinary differential equation
\begin{align*}
\frac{d\Exp(tf)(z)}{dt}=f(\Exp(tf)(z)), \qquad \Exp(0)(z) = z.
\end{align*}
We identify $\vect(S^1)$ with $C^\infty(S^1,\mathbb{R})$ and
for $f\in C^{\infty}(S^1,\mathbb{R})$ we denote by $f^\prime$ the derivative of $f$ with respect to the angle
$\theta$:
\[
  f^\prime(z)=\frac{d}{d\theta}f(e^{i\theta})\bigg\rvert_{e^{i \theta}=z}.
\]
We consider a diffeomorphism $\gamma\in\diff(S^1)$ as a map from $S^1 \subset \CC$
in $S^1$. With this convention, its action on $f\in \vect(S^1)$ is 
\begin{align*}
(\gamma_* f)(e^{i\theta})=
-ie^{-i\theta}\frac{d}{d\varphi}\gamma(e^{i\varphi})\bigg\rvert_{e^{i\varphi} = \gamma^{-1}(e^{i\theta})}f(\gamma^{-1}(e^{i\theta})). 
\end{align*}

\paragraph{The Lie algebra.}
The space $\vect(S^1)$ is endowed with the Lie algebra structure with
the Lie bracket given by
\[
 [f,g]=f^{\prime}g-f g^{\prime}.
\]
As a Lie algebra, $\vect(S^1)$ admits the Gelfand--Fuchs 2-cocycle 
\begin{align*}
\omega (f,g)=\frac{1}{48\pi}\int_{S^1}(f(e^{i\theta})g^{\prime\prime\prime}(e^{i\theta})-f^{\prime\prime\prime}(e^{i\theta})g(e^{i\theta}))d\theta.
\end{align*}
The Virasoro algebra $\vir$ is the central extension of the complexification of
the algebra generated by the trigonometric polynomials in $\vect(S^1)$ defined by the 2-cocycle $\omega$.
It can be explicitly described as the complex Lie algebra generated by $L_n$, $n\in\mathbb{Z}$,
and the central element $\kappa$, with brackets
\[
 [L_n,L_m]=(n-m)L_{n+m}+\delta_{n+m,0}\frac{n^3-n}{12}\kappa.
\]

\paragraph{Positive-energy representations.}
Consider a representation $\pi_V:\vir\rightarrow\End(V)$ of $\vir$ on a complex vector space $V$ endowed with a
non-degenerate, positive-definite scalar product $\langle\cdot,\cdot\rangle$. We call $\pi_V$ a
{\bf unitary positive energy representation} if the following hold:
\begin{itemize}
\item Unitarity: $\langle v,\pi_V(L_n)w\rangle=\langle \pi_V(L_{-n})v,w\rangle$
for every $v,w\in V$ and $n\in\ZZ$;
\item Positivity of the energy: $V=\bigoplus_{\lambda\in\RR_+\cup\lbrace 0\rbrace}V_{\lambda}$, where $V_{\lambda}\coloneqq \ker(\pi_V(L_0)-\lambda\1_V)$. The lowest eigenvalue of $\pi_V(L_0)$
is called lowest weight;
\item  Central charge: $\pi_V(\kappa)=c\1_V$;
\end{itemize}
There exists an irreducible unitary positive energy representation $\pi_{c,h}$ with central charge $c$
and lowest weight $h$ if and only if $c\ge 1$ and $h\ge 0$ (continuous series representation)
or $(c,h)=(c(m),h_{p,q}(m))$, where $c(m)=1-\frac{6}{(m+2)(m+3)}$, $h_{p,q}(m)=\frac{(p(m+1)-qm)^2-1}{4m(m+1)}$,
$m=3,4,\cdots$, $p=1,2,\cdots,m-1$, $q=1,2,\cdots,p$, (discrete series representation) \cite{KR87, DMS97}.
In this case the representation space $V$ is denoted by $\H^\fin(c,h)$.
We denote by $\mathcal{H}(c,h)$ the Hilbert space completion of the vector space $\H^\fin(c,h)$ associated
with the unique irreducible unitary positive energy representation of $\vir$
with central charge $c$ and lowest weight $h$.

\paragraph{The stress-energy tensor.}
In a (possibly infinite) direct sum representation $\pi_\H$ of $\pi_{c,h_j}$ with the same central charge $c$,
the conformal Hamiltonian $L_0$ is diagonalized,
and on the linear span of its eigenvectors $\mathcal{H}^{\fin}$ (the space of finite energy vectors),
the Virasoro algebra acts algebraically as unbounded operators.
With a slight abuse of notation, we denote by $L_n$ the elements of $\vir$ represented in $\mathcal{H}$.

For a smooth complex-valued function $f$ on $S^1$ with finitely many non-zero Fourier coefficients,
the (chiral) stress-energy tensor associated with $f$ is the operator
\[
 T(f)=\sum_{n\in\mathbb{Z}}L_n \hat{f}_n
\]
acting on $\mathcal{H}$, where
\[
 \hat{f}_n=\int_0^{2\pi}\frac{d\theta}{2\pi}e^{-in\theta}f(e^{i\theta})
\]
The stress-energy tensor $T$ can be extended to certain nonsmooth real functions
$f\in \tremezzi$
by the linear energy bounds, yielding a self-adjoint unbounded operator $T(f)$.
We will review these results in detail in Section \ref{piecewise}.

Such a representation integrates to a projective unitary representation of
the universal covering group $\widetilde{\diff(S^1)}$ of $\diff(S^1)$,
namely, there is a projective unitary representation $U$ of $\widetilde{\diff(S^1)}$
such that $U(\Exp(f)) = e^{iT(f)}$ up to a scalar\footnote{This scalar cannot be made trivial \cite[(5.11)]{FH05}.}.
If the $h_j$'s appearing in the direct summand differ from each other by an integer,
then $U$ reduces to a projective representation of $\diff(S^1)$ ($2\pi$-rotation is a scalar).

The stress-energy tensor $T$ satisfies the following covariance
\cite[Proposition 5.1, Proposition 3.1]{FH05}:
\begin{proposition}\label{pr:covariance}
The stress-energy tensor $T$ on $\mathcal{H}$ and its integration $U$ as above satisfy
\begin{align*}
 \Ad U(\gamma)(T(f)) &= T(\mathring\gamma_*({f}))
                        + \beta(\mathring\gamma, f) \\
 \beta(\mathring\gamma, f) &\coloneqq \frac{c}{24\pi}\int^{2\pi}_0\{\mathring\gamma,z\}|_{z = e^{i\theta}}f(e^{i\theta})e^{i2\theta}d\theta
\end{align*}
on vectors in $\mathcal{H}^{\mathrm{fin}}$, for $f\in\vect(S^1)$ and $\gamma\in\widetilde{\diff(S^1)}$,
where $\mathring\gamma \in \diff(S^1)$ is the image of $\gamma$ under the covering map.
Furthermore, the commutation relations
\[
 i[T(g),T(f)]=T(g^\prime f-f^\prime g)+ c \omega(g,f),
\]
hold for arbitrary $f,g\in C^\infty (S^1)$, on vectors $\psi\in \mathcal{H}^{\mathrm{fin}}$,
where
\[
 \{\mathring{\gamma},z\}=\frac{\frac{d^3}{dz^3}\mathring{\gamma}(z)}{\frac{d}{dz}\mathring{\gamma}(z)}
            -\frac{3}{2}\left(\frac{\frac{d^2}{dz^2}\mathring{\gamma}(z)}{\frac{d}{dz}\mathring{\gamma}(z)}\right)^2 
\]
is the Schwarzian derivative of $\mathring{\gamma}$ and $\frac{d}{dz}\mathring{\gamma}(z)=-i\bar{z}\frac{d}{d\theta}\mathring{\gamma}(e^{i\theta})\bigg\rvert_{e^{i\theta}=z}$.
\end{proposition}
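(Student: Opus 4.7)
Since this proposition is a reformulation of \cite[Propositions 5.1 and 3.1]{FH05}, my plan is to follow the standard strategy of first establishing the algebraic commutation relations on $\mathcal{H}^{\mathrm{fin}}$ and then integrating them to one‑parameter subgroups, with the Schwarzian derivative appearing as the finite transgression of the Gelfand--Fuchs cocycle $\omega$. For the commutation relations, I would expand $T(f)=\sum_{n}L_n\hat f_n$ and $T(g)=\sum_{m}L_m\hat g_m$ and work purely algebraically on $\mathcal{H}^{\mathrm{fin}}$, where each $L_n$ acts as a well-defined operator. Using the Virasoro bracket $[L_n,L_m]=(n-m)L_{n+m}+\delta_{n+m,0}\tfrac{n^3-n}{12}\kappa$ together with $\pi_\H(\kappa)=c\1$, the formal double sum gives
\[
i[T(g),T(f)]=\sum_{k}\Bigl(\sum_{n+m=k}(m-n)\hat g_m \hat f_n\Bigr)L_k + c\sum_{n}\hat g_n \hat f_{-n}\frac{n^3-n}{12}.
\]
The first term is exactly $T(g'f-f'g)$ once one recognises the Fourier coefficient of $g'f-f'g$, and the second term is exactly $c\,\omega(g,f)$.

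Next, for finite covariance, I would exploit the fact that the linear energy bound makes $T(f)$ a well-defined unbounded operator preserving (a suitable core dense in) $\mathcal{H}^{\mathrm{fin}}$ and that $U(\Exp(tg))=e^{itT(g)}$ up to a scalar. Given $\gamma_t=\Exp(tg)$, set
\[
A(t)=\Ad U(\gamma_t)(T(f)),\qquad B(t)=T((\gamma_t)_*f)+\beta(\gamma_t,f).
\]
On $\mathcal{H}^{\mathrm{fin}}$ one has $\tfrac{d}{dt}A(t)=i[T(g),A(t)]$; by the commutation relations of the first step, applied with $f$ replaced by $(\gamma_t)_*f$, the derivative of the first summand of $B(t)$ reproduces the $T(\cdot)$-part of the commutator (using that the vector-field derivative $\tfrac{d}{dt}(\gamma_t)_*f$ equals the Lie bracket $[g,(\gamma_t)_*f]$ up to sign), while the derivative of $\beta(\gamma_t,f)$ is tailored to match the central term. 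Since $A(0)=B(0)=T(f)$, a standard ODE-uniqueness argument on $\mathcal{H}^{\mathrm{fin}}$ yields $A(t)=B(t)$ for all $t$.

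The main obstacle is the last point: identifying the integrated cocycle $\beta(\gamma_t,f)=\frac{c}{24\pi}\int\{\gamma_t,z\}f(z)z^2\,d\theta$ as the unique group 1-cocycle with derivative $c\,\omega(g,f)$ at $t=0$. Concretely, one must verify the group cocycle identity
\[
\beta(\gamma\eta,f)=\beta(\gamma,\eta_* f)+\beta(\eta,f),
\]
together with $\tfrac{d}{dt}\beta(\gamma_t,f)|_{t=0}=c\,\omega(g,f)$; both follow from the chain rule for the Schwarzian $\{\gamma\eta,z\}=\{\gamma,\eta(z)\}\eta'(z)^2+\{\eta,z\}$ combined with a change-of-variables computation. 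This is the classical transgression relating Lie-algebra and Lie-group cohomology of $\vect(S^1)$. Once it is secured for one-parameter subgroups, the statement extends to arbitrary $\gamma\in\widetilde{\diff(S^1)}$ by writing $\gamma$ as a finite product of exponentials (possible by connectedness and local surjectivity of $\Exp$) and using the cocycle property of both $\Ad U$ and $\beta$ together with $\Ad U$ being a homomorphism (the projective scalars cancelling in the adjoint action).
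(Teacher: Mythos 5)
The paper offers no proof of this proposition: it is stated as a quotation of \cite[Propositions 3.1 and 5.1]{FH05}, so there is no internal argument to compare against. Your sketch reconstructs the standard route used there (and in Toledano-Laredo and Carpi--Weiner): the purely algebraic Virasoro computation for the commutation relations, an ODE-uniqueness argument comparing $A(t)=\Ad U(\Exp(tg))(T(f))$ with $B(t)=T(\Exp(tg)_*f)+\beta(\Exp(tg),f)$, and the identification of $\beta$ as the group cocycle transgressing the Gelfand--Fuchs cocycle via the Schwarzian chain rule. That architecture is correct, and your reduction of the central-term matching to the cocycle identity for $\beta$ plus the derivative $\tfrac{d}{dt}\beta(\gamma_t,f)|_{t=0}=c\,\omega(g,f)$ is exactly the right bookkeeping.

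Two points need repair before this would stand as a proof. First, the passage from one-parameter subgroups to general $\gamma$ cannot rest on ``local surjectivity of $\Exp$'': the exponential map of $\diff(S^1)$ is famously \emph{not} locally surjective \cite{Milnor84}, so a neighbourhood of the identity is not covered by exponentials. The conclusion you need --- that every element of $\widetilde{\diff(S^1)}$ is a finite product of exponentials --- is still true, but the correct justification is that the exponentials generate a nontrivial normal subgroup and $\diff(S^1)$ is algebraically simple \cite{Mather74}; the paper itself invokes precisely this fact in a footnote in the proof of Proposition \ref{pr:diffr}. Second, the domains are glossed over in a way that hides the analytic content: for general $f\in\vect(S^1)$ the vector $T(f)\psi$ with $\psi\in\mathcal{H}^{\fin}$ need not lie in $\mathcal{H}^{\fin}$, and $U(\gamma_t)$ does not preserve $\mathcal{H}^{\fin}$, so both the product $T(g)T(f)\psi$ and the differentiation of $A(t)\psi$ must be carried out on an invariant core such as $C^\infty(L_0)$ using the linear energy bounds on the modes $L_n$; likewise $\tfrac{d}{dt}T(\Exp(tg)_*f)\psi=T(\tfrac{d}{dt}\Exp(tg)_*f)\psi$ requires continuity of $f\mapsto T(f)\psi$ in a suitable norm. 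These estimates are where the actual work in \cite{FH05} and \cite{CW05} lies, and a complete write-up would have to supply them.
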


If we consider the Cayley transform \eqref{eq:cayley}\eqref{eq:cayleyinv},
a vector field $f\in \vect(S^1)$ in real line coordinates is given by
\[
 C_*(f)(t)=\frac{2}{(1+t^2)}f(C^{-1}(t)).
\]
With the Schwarz class functions $\mathscr{S}(\mathbb{R})$,
the stress energy tensor satisfies the following quantum-energy inequalities \cite[Theorem 4.1]{FH05}.
\begin{theorem}\label{th:qei}
Let $f\in\vect(S^1)$ with $C_*(f)\in \mathscr{S}(\mathbb{R})$ and $C_*(f)(t)\geq0$ $\forall t\in\mathbb{R}$.
For $\psi\in \dom(L_0)$, it holds that
\[
 (\psi,T(f)\psi)\geq - \frac{c}{12\pi}\int_{\mathbb{R}}\left(\frac{d}{dt}\sqrt{C_*(f)(t)}\right)^2 dt,
\]
where the derivative is given by
\begin{align*}
\frac{d}{dt}\sqrt{C_*(f)(t)}=
\begin{cases}   (\frac{d}{dt}C_*(f)(t))/(2\sqrt{C_*(f)(t)}) &\text{ if } C_*(f)(t)\neq 0\\
0 & \text{ if } C_*(f)(t)= 0.
\end{cases}
\end{align*}
\end{theorem}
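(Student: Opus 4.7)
The plan is to establish this quantum energy inequality by the point-splitting / Wick-ordering strategy used by Fewster and Hollands for chiral QEIs. I would first move the problem to the real line via the Cayley transform: using Proposition \ref{pr:covariance}, $T(f)$ on $S^1$ is unitarily equivalent (after a Schwarzian c-number correction that I would carry through the calculation explicitly) to a smeared stress-energy tensor $T_\RR(g)$ on $\RR$, where $g := C_*(f) \in \mathscr{S}(\RR)_{\geq 0}$. Setting $h := \sqrt{g}$, which lies in $H^1(\RR)$, the statement reduces to
\[
 \langle \psi, T_\RR(g) \psi \rangle \geq -\frac{c}{12\pi}\|h'\|_{L^2(\RR)}^2.
\]

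The second step is a sum-of-squares decomposition of $T_\RR(h^2)$. Formally, the chiral OPE
\[
T(x) T(y) = \frac{c/2}{(x-y)^4} + \frac{2T(y)}{(x-y)^2} + \frac{\partial T(y)}{x-y} + :\!T(x) T(y)\!:
\]
integrated against $h(x) h(y)\, dx\, dy$ yields, after two integrations by parts in the relative variable $x-y$, an operator identity of the schematic form
\[
 T_\RR(h^2) = \mathcal{Q}(h) - \frac{c}{12\pi}\|h'\|_{L^2}^2 \cdot \1,
\]
where $\mathcal{Q}(h)$ is a Wick-ordered bilinear in the modes of $T$. The key structural observation is that $\mathcal{Q}(h)$ is manifestly nonnegative on finite-energy vectors: writing it in terms of positive- and negative-frequency decompositions of $T(h)$ and using only the commutation relations from Proposition \ref{pr:covariance}, one realises $\mathcal{Q}(h)$ as a sum of the form $\sum_k B_k(h)^* B_k(h)$ on $\dom(L_0)$, with nonnegativity descending from the positivity of energy in $\pi_V$.

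To justify the formal OPE computation on $\dom(L_0)$, I would use the linear energy bound $\|T(h)\psi\| \leq \mathrm{const}\cdot \|h\|_{H^1}\|(1+L_0)\psi\|$ (the basis for extending $T$ to $\tremezzi$, to be reviewed in Section \ref{piecewise}) to make all smeared products converge in the strong operator sense on finite-energy vectors, and then extend to $\dom(L_0)$ by density. To sidestep the low regularity of $h = \sqrt g$ at the zeros of $g$, I would first prove the inequality for $g_\epsilon := g + \epsilon \chi$ with fixed $\chi \in \mathscr{S}(\RR)_{>0}$, where $\sqrt{g_\epsilon}$ is smooth and Schwartz, and then pass to the limit $\epsilon \downarrow 0$ via dominated convergence for the $L^2$-norm of the derivative and strong continuity of $T_\RR(g_\epsilon)\psi$ in $\epsilon$.

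The main obstacle will be Step 2: turning the formal OPE calculation into a bona fide operator identity on $\dom(L_0)$ with the exact constant $\frac{c}{12\pi}\|h'\|^2$. The subtleties are that the boundary terms produced by the two integrations by parts must vanish (which uses Schwartz decay of $g$ at infinity and the regularization $g_\epsilon$), and that the distributional meaning of the $(x-y)^{-4}$, $(x-y)^{-2}$ and $(x-y)^{-1}$ singularities must match the normal-ordering prescription used to define $\mathcal{Q}(h)$. Once this accounting is correct, the positivity of $\mathcal{Q}(h)$ is a formal consequence of the representation theory already recorded in Section \ref{smooth} and does not require additional input.
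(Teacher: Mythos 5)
This theorem is not proved in the paper at all: it is quoted verbatim from Fewster--Hollands \cite[Theorem 4.1]{FH05}, and your strategy --- pass to the line via the Cayley transform, split $T(h^2)$ into a manifestly nonnegative normal-ordered quadratic form plus the c-number $-\tfrac{c}{12\pi}\|h'\|_{L^2}^2$ coming from the vacuum two-point function of $T$, and control all domains with the linear energy bounds --- is essentially the argument given there, so the approach is sound and matches the source. The only step I would not wave through is the final limit $\epsilon\searrow 0$: dominated convergence does not obviously apply to $(g'+\epsilon\chi')^2/\bigl(4(g+\epsilon\chi)\bigr)$ since there is no evident dominating function near the zeros of $g$, and one should instead argue as in \cite{FH05}, proving the bound for the strictly positive regularizations and then using continuity of $\psi\mapsto(\psi,T(g_\epsilon)\psi)$ together with a lower-semicontinuity or Fatou-type argument for the right-hand side, which is allowed to be $+\infty$.
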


\paragraph{Translations}
The generator of translations is by definition
$\gt(t) \coloneqq \frac{\partial}{\partial s}\left(\tau(t)s\right)\bigg\vert_{s=0}=1$. The corresponding field on $S^1$ is
\begin{align*}
\gt(e^{i\theta}) = 1+\cos \theta.
\end{align*}

\paragraph{Subgroups $B_n$}
The subgroup $B_0 \subset \diff(S^1)$ of the stabilizers of $-1$
is a Lie subgroup with Lie algebra given by those vector fields $f\in\vect(S^1)$
such that $f(-1)=0$. The dilation and translation subgroups of $\diff(S^1)$ are in $B_0$.
Similarly, $B_n$ is the subgroup of $B_0$ whose element have vanishing
$1$st, $2$nd, $\cdots$, $n$-th derivatives.
The group $B_1$ still contains translations, but not dilations.
These groups have the natural $C^\infty$ topology, but we often treat them without topology.

\subsubsection{Piecewise smooth diffeomorphisms}\label{piecewise}
Let $\psone(S^1)$ be the group of piecewise smooth $C^1$-diffeomorphisms of $S^1$,
namely, $\g \in \psone(S^1)$ is a $C^1$-diffeomorphism and $S^1$ can be decomposed into a finitely many closed
intervals (with a possibly common end point) on each of which $\g$ is smooth and has the derivatives in all orders
at the end points.
It has been known that some elements of $\psone(S^1)$ can be implemented in a conformal net \cite{CW05, Weiner06}.
Let us recall these elements.

For a real-valued continuous function $f$ of the circle, set
\[
\Vert f\Vert_{\frac{3}{2}}\coloneqq \sum_{n\in\mathbb{Z}}\vert{\hat{f}}_n\vert(1+|n|^{\frac{3}{2}}),
\]
where $\hat{f}_n\coloneqq \frac{1}{2\pi}\int_0^{2\pi}e^{-in\theta}f(e^{i\theta})d\theta$
is the $n$-th Fourier coefficient\footnote{This should be distinguished from a sequence of functions $f_n$.} of $f$. 
We denote with $\mathcal{S}_{\frac{3}{2}}(S^1,\mathbb{R})$ the class of functions $f\in L^1(S^1,\RR)$ such that $\Vert f\Vert_{\frac{3}{2}}$ is finite endowed with the topology induced by the norm $\Vert\cdot\Vert_{\frac{3}{2}}$.
By \cite[Lemma 2.2]{Weiner06},
if $f$ is piecewise smooth and and $C^1$ on the whole $S^1$, then
$f \in \mathcal{S}_{\frac{3}{2}}(S^1,\mathbb{R})$.

Let $T$ be the stress-energy tensor on $\H = \bigoplus_j \H(c,h_j)$.
In \cite[Proposition 4.2, Theorem 4.4, Proposition 4.5]{CW05} it has been shown that,
if $f \in \tremezzi$, then the operator $T(f)=\sum_{n\in\mathbb{Z}}L_n f_n$ on the domain $\mathcal{H}^{\fin}$
is convergent and essentially self-adjoint on any core of $L_0$.
In addition, if $\|f-f_n\|_{\frac{3}{2}} \to 0$ for $f, f_n \in \tremezzi$, then
$T(f_n)\rightarrow T(f)$ in the strong resolvent sense.

From these results, it follows that certain piecewise smooth $C^1$-diffeomorphisms
of the form $\Exp(f)$ are implemented in a conformal net. Actually, we prove in Appendix \ref{c1pws}
that any conformal net is $\psone(S^1)$-covariant. We do not consider topology on $\psone(S^1)$.
It follows that any soliton is $\psoneone(S^1)$-covariant,
where $\psoneone(S^1) \coloneqq \{\gamma \in \psone(S^1): \gamma(-1) = -1, \gamma'(-1) = 1\}$
(Theorem \ref{th:positivitysol}).
The solitons we construct in Section \ref{nonsmoothsoliton} are $\psonezero(S^1)$-covariant
where $\psonezero(S^1) \coloneqq \{\gamma \in \psone(S^1): \gamma(-1) = -1\}$.

\subsubsection{The Groups of Sobolev-class diffeomorphisms}
For $s$ real, the Sobolev spaces $H^s(S^1)$ are defined by
\begin{align*}
 H^s(S^1)\coloneqq \{f\in L^2(S^1): \|f\|_{H^s} < \infty\}, \text{ where }
 \|f\|_{H^s} := \left(\sum_{k\in\ZZ} (1+k^2)^s|\hat f_k|^2\right)^\frac12.
\end{align*}
If $s > \frac32$, then the set $\D^s(S^1)$ of Sobolev-class diffeomorphisms
\begin{align*}
 \D^s(S^1)\coloneqq \{\g \in \diff^1(S^1): \tilde \g - \i \in H^s\},
\end{align*} 
where $\tilde \g$ is a lift of $\g$ to $\RR$, is a topological group \cite[Theorem B.2]{IKT13}
(see also \cite[Lemma 2.5]{CDIT18+}). We also have the following continuity of the action on $H^s(S^1)$ \cite[Theorem B.2]{IKT13}.
\begin{lemma}\label{lm:sobolevcomp}
For $s>3/2$, the map $(f,\g)\mapsto f\circ\g$ from $H^s(S^1)\times\D^s(S^1)$ into $H^s(S^1)$ is continuous.
\end{lemma}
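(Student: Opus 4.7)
The plan is to establish (i) that composition really does land in $H^s$ with a locally uniformly bounded operator norm, and then (ii) deduce joint continuity by a standard density argument, approximating the $H^s$ factor by smooth functions.

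For (i), the crucial input is the Sobolev embedding $H^s(S^1) \hookrightarrow C^1(S^1)$, valid because $s > 3/2 = 1 + 1/2$. Hence any $\g \in \D^s(S^1)$ is genuinely $C^1$, its derivative $\g'$ is continuous and uniformly bounded, and since $\g$ is a diffeomorphism, $1/\g'$ is bounded as well, with bound depending continuously on $\g$ in the $\D^s$-topology. Using either a Littlewood--Paley decomposition or the Gagliardo-type seminorm characterization of $H^s$ (for non-integer $s$), one obtains a Moser-style composition estimate of the form
\[
  \|f\circ\g\|_{H^s} \le \Phi\bigl(\|\tilde\g - \i\|_{H^s},\, \|1/\g'\|_\infty\bigr) \,\|f\|_{H^s},
\]
where $\Phi$ is a continuous function of its arguments. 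The proof of this estimate is the main technical obstacle: one has to combine the fact that $H^s(S^1)$ is a Banach algebra for $s>1/2$ (controlling products of derivatives arising from the chain rule) with careful bookkeeping for the fractional-order derivative, interpolating between integer-order bounds. In particular $\Phi$ depends continuously on $\g$, so $f \mapsto f\circ\g$ is bounded uniformly on compact neighborhoods of any fixed $\g_0\in\D^s(S^1)$.

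For (ii), given sequences $f_n \to f$ in $H^s$ and $\g_n \to \g$ in $\D^s$, write
\[
  f_n\circ\g_n - f\circ\g = (f_n - f)\circ\g_n + (f\circ\g_n - f\circ\g).
\]
The first term has $H^s$-norm at most $\Phi(\|\tilde\g_n - \i\|_{H^s}, \|1/\g_n'\|_\infty)\,\|f_n - f\|_{H^s}$, which tends to $0$ by the local uniform boundedness from (i). For the second term, pick a smooth approximant $\varphi$ of $f$ in $H^s$; then
\[
  \|f\circ\g_n - f\circ\g\|_{H^s} \le \|(f-\varphi)\circ\g_n\|_{H^s} + \|\varphi\circ\g_n - \varphi\circ\g\|_{H^s} + \|(\varphi-f)\circ\g\|_{H^s},
\]
and the first and third terms are made arbitrarily small by choice of $\varphi$ (again using the uniform bound), independently of $n$. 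For smooth $\varphi$, the middle term tends to $0$ because one can differentiate the chain rule explicitly: each derivative of $\varphi\circ\g_n$ is a polynomial in derivatives of $\g_n$ up to integer order at most $\lceil s\rceil$ multiplied by derivatives of $\varphi$ evaluated at $\g_n$, and these converge in $H^s$ by Lemma-style continuity of multiplication in the Banach algebra $H^s$ and by $\g_n \to \g$ uniformly with all the derivatives of $\varphi$ being continuous.

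The main obstacle is really the composition estimate underpinning step (i); once that is in hand, step (ii) is a soft density argument. Because the statement is already proved in \cite[Theorem B.2]{IKT13} and discussed in \cite[Lemma 2.5]{CDIT18+}, for a preliminaries section it would suffice to cite these references rather than redo the Littlewood--Paley bookkeeping.
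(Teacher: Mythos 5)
The paper gives no proof of this lemma at all: it is stated as a direct quotation of \cite[Theorem B.2]{IKT13}, which is exactly the course of action you recommend in your closing sentence. Your sketch (local uniform boundedness of $f\mapsto f\circ\g$ via a Moser-type composition estimate, followed by a density argument in the $H^s$ factor) is a faithful outline of how that reference establishes the result, with the genuinely hard step --- the composition estimate itself --- correctly identified and deferred to the literature rather than proved.
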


\section{General results on solitons}\label{general}
Throughout this Section, $(\A, U, \Omega)$ is a conformal net in the sense of Section \ref{preliminaries}.

It has been observed \cite[Section 3.3.1]{Henriques17-1} that, by \cite{Weiner06},
any soliton can be made translation covariant.
In the next Section we show further that it has always positive energy,
proving \cite[Conjecture 32]{Henriques17-1}.
We suspect that the converse implication \cite[Conjecture 34]{Henriques17-1} could be negative,
cf.\! \cite{Tanimoto18-1, Tanimoto11}.
In addition, we present a general scheme to construct solitons for any conformal net.
\subsection{Positivity of energy}\label{positivity}

Let us first observe that $\Exp(tg)$ makes sense if $g$ is $C^1$, because then
the existence and uniqueness of solution of the ordinary differential equation
are assured \cite[Chapter 1, Theorem 2.3]{CL55}. We need some preparatory results
on representations of these elements.

\begin{lemma}\label{lm:smoothaction}
 Let $g \in C^\infty(S^1, \RR)$ and
 $f$ be a real piecewise smooth and $C^1$-function on $S^1$.
 Then it holds that
 \begin{align*}
  \Ad e^{iT(g)}(T(f)) = T(\Exp (g)_*(f)) + \beta(\Exp (g),f)
 \end{align*}
\end{lemma}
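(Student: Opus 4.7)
The plan is to reduce to the known smooth case (Proposition \ref{pr:covariance}) by approximation in the $\|\cdot\|_{\frac32}$ topology. Since $f$ is piecewise smooth and $C^1$, we have $f\in\tremezzi$ by \cite[Lemma 2.2]{Weiner06}. I would take $f_n$ to be the $n$-th Fourier partial sum of $f$, so that $f_n$ is a trigonometric polynomial (hence smooth) and $\|f_n-f\|_{\frac32}\to 0$ by definition of the norm. For each smooth $f_n$, Proposition \ref{pr:covariance} applied to $\g=\Exp(g)\in\diff(S^1)$ (and using $U(\Exp(g))=e^{iT(g)}$ up to a scalar, which is irrelevant for $\Ad$) yields
\begin{equation*}
\Ad e^{iT(g)}(T(f_n)) = T(\Exp(g)_*(f_n)) + \beta(\Exp(g), f_n).
\end{equation*}

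The goal is now to pass to the limit $n\to\infty$ on both sides in the strong resolvent sense. On the left-hand side, the results of \cite[Proposition 4.5]{CW05} recalled in Section \ref{piecewise} give $T(f_n)\to T(f)$ in the strong resolvent sense, and conjugation by the fixed unitary $e^{iT(g)}$ preserves strong resolvent convergence, so $\Ad e^{iT(g)}(T(f_n))\to \Ad e^{iT(g)}(T(f))$. On the right-hand side, the scalar $\beta(\Exp(g), f_n)$ converges to $\beta(\Exp(g), f)$ because the Schwarzian $\{\Exp(g),z\}$ is bounded on $S^1$ and $f_n\to f$ in $L^1$ (indeed uniformly, since $\|\cdot\|_{\frac32}$ dominates the supremum norm).

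The crux is therefore to show $T(\Exp(g)_*(f_n))\to T(\Exp(g)_*(f))$ in the strong resolvent sense, which by \cite[Proposition 4.5]{CW05} follows from $\|\Exp(g)_*(f_n)-\Exp(g)_*(f)\|_{\frac32}\to 0$. Writing $\g=\Exp(g)$, we have
\begin{equation*}
(\g_* h)(e^{i\theta}) = -ie^{-i\theta}\tfrac{d}{d\varphi}\g(e^{i\varphi})\big\rvert_{e^{i\varphi}=\g^{-1}(e^{i\theta})}\, h(\g^{-1}(e^{i\theta})),
\end{equation*}
which factors as multiplication by the smooth function $\g'\circ\g^{-1}$ composed with precomposition by the smooth diffeomorphism $\g^{-1}$. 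The space $\tremezzi$ is a Banach algebra under pointwise multiplication (the convolution bound on Fourier coefficients combined with the elementary inequality $1+|n|^{\frac32}\le C((1+|k|^{\frac32})+(1+|n-k|^{\frac32}))$), so multiplication by a smooth function is continuous on $\tremezzi$. Hence it suffices to show that precomposition by a smooth diffeomorphism is a continuous operator on $\tremezzi$, which will give $\|\g_*(f_n)-\g_*(f)\|_{\frac32}\to 0$ and complete the argument.

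The main obstacle is precisely this last continuity property: the norm $\|\cdot\|_{\frac32}$ is a Wiener-type norm on Fourier coefficients, for which the behavior under composition is more delicate than in Sobolev spaces. I would handle it by estimating the Fourier coefficients of $h\circ\g^{-1}$ in terms of those of $h$, exploiting the smoothness of $\g^{-1}$ to control the nonlocal mixing of Fourier modes, in the spirit of the estimates underlying the implementation of diffeomorphisms in \cite{CW05, Weiner06}; alternatively, one can embed $\tremezzi$ into a suitable Sobolev space and use Lemma \ref{lm:sobolevcomp} together with direct control of the Wiener-norm correction from the smoothness of $\g$. Once this is granted, the equality of operators follows from uniqueness of the strong resolvent limits on both sides.
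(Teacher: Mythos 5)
Your overall skeleton (approximate $f$ by smooth functions, apply Proposition \ref{pr:covariance} to each approximant, pass to the limit in the strong resolvent sense on both sides) is the same as the paper's. But the step you yourself flag as ``the main obstacle'' --- continuity of $h\mapsto h\circ\g^{-1}$ on $\tremezzi$ --- is left as a sketch, and it is precisely the nontrivial point. The paper even records, in the remark following this lemma, that Weiner's \cite[Proposition 2.3]{Weiner06} asserts this kind of statement for general $f\in\tremezzi$ but that a proof of the convergence $f_n\circ\g\to f\circ\g$ in $\|\cdot\|_{\frac32}$ is missing there. So your first proposed route (direct Fourier-coefficient estimates for composition in the Wiener-type norm) is exactly the open point one is trying to avoid, and cannot be waved through ``in the spirit of'' existing estimates.

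Your second alternative is closer to what the paper actually does, but as stated it does not work: one cannot ``embed $\tremezzi$ into a suitable Sobolev space'' and then invoke Lemma \ref{lm:sobolevcomp}, because $\tremezzi$ only controls $|\hat f_k|\lesssim (1+|k|)^{-3/2}$ and hence does not embed into $H^s(S^1)$ for any $s>3/2$, while the composition lemma needs $s>3/2$ and the return embedding $H^s(S^1)\subset\tremezzi$ needs $s>2$. The correct move, which the paper makes, is to use the \emph{specific} regularity of $f$ (piecewise smooth and $C^1$, so $f''$ has bounded variation and $|k^2\hat f_k|\le \mathrm{Var}(f'')/|k|$) to place $f$ in $H^s(S^1)$ for a fixed $2<s<\frac52$; then approximate in $H^s$, use the continuity of $h\mapsto \Exp(g)_*(h)$ on $H^s$ from \cite[Lemma 2.5(a)]{CDIT18+}\cite{IKT13}, and finally transfer the convergence to $\tremezzi$ via the Cauchy--Schwarz embedding $H^s(S^1)\subset\tremezzi$ valid for $s>2$, at which point \cite[Proposition 4.5]{CW05} gives strong resolvent convergence. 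Your Fourier partial sums are acceptable approximants in this scheme (they converge in $H^s$ once $f\in H^s$), and your limit-passing on the left-hand side and on the scalar $\beta$ is fine; only the central continuity claim needs to be rerouted through $H^s$ as above rather than attempted directly in $\tremezzi$.
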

\begin{proof}
 Let us fix $s$ such that $2 < s < \frac52$. Note that $f\in H^s(S^1)$.
 Indeed, $f''$ is defined except a finite number of points and of bounded variation, and by the proof of \cite[Lemma 2.2]{Weiner06},
 we have $|k^2 \hat f_k| \le \left|\frac{\mathrm{Var}(f'')}{k}\right|$,
 where $\mathrm{Var}(f'')$ is the variation of $f''$, see \cite[Theorem 4.5]{Katznelson04}.
 From this it is immediate that $|k|^{2s} |\hat f_k|^2 \le \left|\frac{\mathrm{Var}(f'')^2}{k^{6-2s}}\right|$
 and the right-hand side is summable in $k$ as $6-2s > 1$, hence $f \in H^s(S^1)$.
 
 Next, let us observe that $H^s(S^1) \subset \mathcal{S}_\frac32(S^1)$.
 Indeed,
 \begin{align*}
  \sum_k (1+|k|)^\frac32 |\hat f_k| &\le \sum_k (1+|k|)^s |\hat f_k| \cdot (1+|k|)^{\frac32 - s}
  \le 2\sum_k (1+|k|^2)^\frac s2 |\hat f_k| \cdot (1+|k|)^{\frac32 - s}
 \end{align*}
 and the right-hand side can be seen as a scalar product of two $\ell^2(\ZZ)$ sequences
 (because $s > 2$),  hence it holds that $\|f\|_\frac32 \le \mathrm{Const.} \|f\|_{H^s}$,
 where the constant depends on $s$ but not on $f$.

 We know that there is a sequence $\{f_n\}\subset C^\infty(S^1, \RR), \|f-f_n\|_{H^s} \to 0$.
 For $f_n \in C^\infty(S^1, \RR)$, we have by Proposition \ref{pr:covariance}
 \begin{align}\label{eq:smoothactionsequence}
  \Ad e^{iT(g)}(e^{iT(f_n)}) = e^{i(T(\Exp (g)_*(f_n)) + \beta(\Exp (g),f_n))}.
 \end{align}
 By the above observations, we have $f_n \to f$ in $\tremezzi$.
 By \cite[Lemma 2.5(a)]{CDIT18+} (see also \cite[Lemma B.2]{IKT13}),
 $f\mapsto \Exp (g)_*(f)$ is continuous in $H^s(S^1)$, hence $\Exp(g)_*(f_n) \to \Exp(g)_*(f)$ in $\mathcal{S}_\frac32(S^1)$.
 By \cite[Proposition 4.5]{CW05}, $T(\Exp (g)_*(f_n)) \to T(\Exp (g)_*(f))$
 in the strong resolvent sense, and it is also clear that
 $\beta (\Exp (g),f_n) \to \beta(\Exp (g),f)$.
 Therefore, by taking the limit of \eqref{eq:smoothactionsequence},
 we obtain the claim.
\end{proof}
\begin{remark}
 If $f \in C^1$ and not $C^2$, then $f\notin H^s(S^1)$ for $s > \frac52$ since
 with such $s$ it holds that $H^s(S^1) \subset C^2(S^1)$ by the Sobolev-Morrey embedding.

In \cite[Proposition 2.3]{Weiner06}, it is claimed that the same conclusion holds for $f \in \mathcal{S}_\frac32(S^1)$,
but a proof of the convergence $f_n \circ \g \to f \circ \g$ in $\|\cdot\|_\frac32$ is missing.
Yet, the main results of the paper remain valid because one needs only the conclusion for $f$ which is
piecewise smooth and $C^1$.
\end{remark}

\begin{lemma}\label{lm:nonsmoothaction}
 Let $g,f \in C^\infty(S^1, \RR)$ and $g(-1) = g' (-1) = f(-1) = f'(-1) = 0$ and compactly supported.
 Let $I_\pm$ be disjoint intervals in $S^1$ one of whose boundary points is $-1$ (see Figure \ref{fig:intervals}.
 Let $f=f_- + f_+, f_\pm \in \mathcal{S}_{\frac32}(S^1)$
 be the decomposition of $f$ into two pieces cut at the point $-1$
 (which is possible by \cite[Lemma 2.2]{Weiner06}),
 and similarly introduce $g=g_- + g_+, g_\pm \in \mathcal{S}_{\frac32}(S^1)$,
 and assume that $\supp f_\pm, \supp g_\pm \subset I_\pm$.
 \begin{figure}[ht]
\centering
\begin{tikzpicture}[line cap=round,line join=round,>=triangle 45,x=1.0cm,y=1.0cm, scale = 0.7]
\clip(-5,-3) rectangle (5,3);
\draw [line width=1pt] (0,0) circle (2.5cm);
\draw [line width=3pt] (-2.5,0) arc (180:90:2.5) ;
\draw [line width=3pt] (-2.5,0) arc (180:240:2.5) ;

\draw [line width=1pt] (0,2.5) arc (0:30:0.5) ;
\draw [line width=1pt] (0,2.5) arc (0:-30:0.5) ;

\draw [line width=1pt] (-2.5,0) arc (90:120:0.5) ;
\draw [line width=1pt] (-2.5,0) arc (90:60:0.5) ;

\draw [line width=1pt] (-2.5,0) arc (270:300:0.5) ;
\draw [line width=1pt] (-2.5,0) arc (270:240:0.5) ;

\draw [line width=1pt] (-1.25,-2.16506) arc (330:360:0.5) ;
\draw [line width=1pt] (-1.25,-2.16506) arc (330:300:0.5) ;

\draw (-2.7,0) node[anchor=east] {$-1$};
\draw (-2.5,-1.5) node {$I_-$};
\draw (-2.2,1.9) node {$I_+$};
\end{tikzpicture}\caption{Intervals $I_\pm$.}
\label{fig:intervals}
\end{figure}
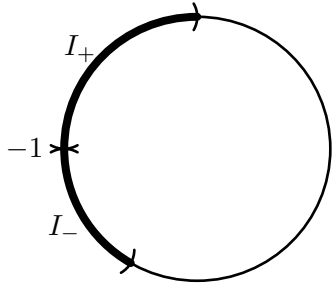

 Then it holds that
 \begin{align*}
  \Ad e^{iT(g_-)}(T(f_-)) = T(\Exp (g_-)_*(f_-)) + \beta(\Exp (g_-), f_-),
 \end{align*}
 where $\beta(\Exp (g_-), f_-)$ is defined by a similar formula as before:
 \begin{align}\label{eq:betanonsmooth}
   \beta(\Exp (g_-), f)
   \coloneqq \frac{c}{24\pi}\int_{\supp g_-}\{\Exp (g_-),z\}\bigg\vert_{z = e^{i\theta}}f(e^{i\theta})e^{i2\theta}d\theta,
 \end{align}
 where the integral is restricted to $\supp g_-$ on which the Schwarzian derivative is defined.
\end{lemma}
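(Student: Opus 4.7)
My plan is to deduce the identity from Lemma~\ref{lm:smoothaction} applied to the smooth vector field $g$ and the (piecewise smooth $C^1$) function $f_-$, and then show that the $g_+$-contribution is trivial on both sides of the resulting identity.

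First, I would write down
\begin{align*}
 \Ad e^{iT(g)}(T(f_-)) = T(\Exp(g)_*(f_-)) + \beta(\Exp(g), f_-)
\end{align*}
from Lemma~\ref{lm:smoothaction}. For the right-hand side, I observe that $g|_{I_-} = g_-|_{I_-}$ (since $\supp g_+ \subset I_+$ is disjoint from $I_-$) and that $g_-$ vanishes near $\partial I_-$, so the integral curves of $g$ starting in $I_-$ coincide with those of $g_-$ and never leave $I_-$. Because $\supp f_- \subset I_-$, this gives $\Exp(g)_*(f_-) = \Exp(g_-)_*(f_-)$ as functions on $S^1$; moreover, the integral defining $\beta(\Exp(g), f_-)$ is weighted by $f_-$ and is therefore effectively over $I_-$, where the Schwarzian of $\Exp(g)$ agrees with that of $\Exp(g_-)$ and vanishes outside $\supp g_-$ (since $\Exp(g_-)$ is the identity there), matching the integral in \eqref{eq:betanonsmooth}.

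Next, I would argue that on the left-hand side $e^{iT(g)} = e^{iT(g_-)} e^{iT(g_+)}$ and that $e^{iT(g_+)}$ commutes with $T(f_-)$. Both rest on the locality properties recalled in Section~\ref{piecewise}: for $h \in \tremezzi$ with compact support in an open interval $J$, the unitary $e^{itT(h)}$ is affiliated with $\A(J)$. Choosing disjoint closed subintervals $\bar J_\pm \subset I_\pm$ containing $\supp g_\pm$, with $J_-$ also containing $\supp f_-$, locality yields $[e^{itT(g_-)}, e^{isT(g_+)}] = 0 = [e^{itT(g_+)}, e^{isT(f_-)}]$ for all $s,t \in \RR$. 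Since $T(g) = T(g_-) + T(g_+)$ on the common core $\H^\fin$ by linearity of $T(\cdot)$, the strong commutativity of $T(g_\pm)$ gives $e^{iT(g)} = e^{iT(g_-)} e^{iT(g_+)}$, while the second commutation implies $\Ad e^{iT(g_+)}(T(f_-)) = T(f_-)$. Combining, the left-hand side reduces to $\Ad e^{iT(g_-)}(T(f_-))$, which yields the claim.

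The main obstacle will be the localization step: establishing that $e^{itT(h)} \in \A(J)$ for piecewise smooth $C^1$ functions $h$ supported in $J$ requires transferring localization from smooth approximants via the strong-resolvent continuity of $T(\cdot)$ in the $\|\cdot\|_{\frac32}$-topology from \cite[Proposition 4.5]{CW05}; the fact that $\supp g_\pm, \supp f_\pm$ lie strictly inside $I_\pm$ allows the approximating smooth functions to be chosen with slightly enlarged but still disjoint supports, so that locality passes to the limit. A minor ancillary point is verifying that $\Exp(g_-)$ equals the identity outside $\supp g_-$, which follows from the uniqueness of solutions to the ODE $\dot z = g_-(z)$ applied to constant curves at zeros of $g_-$.
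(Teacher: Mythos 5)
Your proposal is correct and follows essentially the same route as the paper: both apply Lemma \ref{lm:smoothaction} to the smooth $g$ acting on the piecewise smooth $C^1$ function $f_-$, identify $\Exp(g)_*(f_-)$ and $\beta(\Exp(g),f_-)$ with their $g_-$-versions by support considerations, and factor $e^{iT(g)}=e^{iT(g_-)}e^{iT(g_+)}$, using that $e^{iT(g_\pm)}$ and $e^{iT(tf_-)}$ are affiliated to the algebras of the disjoint intervals $I_\pm$ (the paper simply cites \cite[Proposition 2.3]{Weiner06} for this localization rather than re-deriving it from smooth approximants). The only caveats are cosmetic: $\supp g_\pm$ and $\supp f_-$ may accumulate at the point $-1$, so they cannot in general be enclosed in disjoint closed subintervals $\bar J_\pm\subset I_\pm$ --- affiliation to $\A(I_\pm)$ themselves is what one uses --- and the paper prefers to work with the unitaries $e^{iT(tf_-)}$ and differentiate in $t$ at the end (taking care of domains via essential self-adjointness) instead of manipulating the unbounded generators directly.
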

\begin{proof}
Let $t \in \RR$.
Since $f_-$ is piecewise smooth and $C^1$ and $g$ is smooth, by Lemma \ref{lm:smoothaction} we have
\[
   \Ad e^{iT(g)}(e^{iT(tf_-)}) = e^{iT(\Exp (g)_*(tf_-))} e^{i\beta(\Exp(g), tf_-)}.
\]
Furthermore, note that
\begin{align*}
 \beta(\Exp(g), tf_-)
 &= \frac{c}{24\pi}\int_0^{2\pi}\{\Exp (g),z\}\bigg\vert_{z = e^{i\theta}}\;tf_-(e^{i\theta})e^{i2\theta}d\theta \\
 &= \frac{c}{24\pi}\int_{\supp g_-}\{\Exp (g_-),z\}\bigg\vert_{z = e^{i\theta}}\;tf_-(e^{i\theta})e^{i2\theta}d\theta \\
 &=  \beta(\Exp(g_-), tf_-),
\end{align*}
because $\Exp(g_-)$ and $f_-$ has support contained in a common interval and
$\Exp(g_-)$ is smooth there.

Note that $g_\pm \in \tremezzi$, hence $e^{iT(tf_\pm)}$ and $e^{iT(g_\pm)}$ are affiliated to $\A(I_\pm)$
by \cite[Proposition 2.3]{Weiner06}, and it follows that $e^{iT(g)} = e^{iT(g_-)}e^{iT(g_+)}$.
By the assumed support property, we have
\[
   \Ad e^{iT(g)}(e^{iT(tf_-)})
   = \Ad (e^{iT(g_-)}\cdot e^{iT(g_+)})(e^{iT(tf_-)})
   = e^{iT(\Exp (g_-)_*(tf_-))} \cdot e^{i\beta(\Exp(g_-),tf_-)}.
\]
By taking the derivative with respect to $t$, we obtain
 \begin{align*}
  \Ad e^{iT(g)}(T(f_-)) = \Ad e^{iT(g_-)}(T(f_-)) = T(\Exp (g_-)_*(f_-)) + \beta(\Exp (g_-), f_-),
 \end{align*}
on the full domain.

\end{proof}

\begin{theorem}\label{th:positivitysol}
A soliton $\s$ of a conformal net $\mathcal{A}$ is $\psoneone(S^1)$-covariant,
the translations act continuously and have positive energy.
\end{theorem}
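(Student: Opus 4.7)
The plan is to construct a projective representation $U_\s$ of $\psoneone(S^1)$ on $\H_\s$, establish strong continuity of its restriction to translations, and finally derive positivity of the translation generator via modular theory.

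For the construction of $U_\s$, I would split any $\g\in\psoneone(S^1)$ as $\g=\g_\cc\cdot\g_\mathrm{n}$, with $\g_\cc\in\psone(S^1)$ of compact support in $\RR$ and $\g_\mathrm{n}\in\psoneone(S^1)$ a residue concentrated in a small neighborhood of $-1$ (such a decomposition exists because $\g(-1)=-1$ and $\g'(-1)=1$ make $\g$ close to the identity near $-1$, so a smooth cutoff yields both pieces). For $\g_\cc$, the Appendix result combined with \eqref{eq:diffcov2} and Haag duality gives $U(\g_\cc)\in\A(J)$ for any $J\in\overline{\I}_\RR$ containing $\supp\g_\cc$, so one sets $U_\s(\g_\cc):=\s_J(U(\g_\cc))$, independent of $J$ by compatibility. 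For $\g_\mathrm{n}=\Exp(f)$ with $f$ piecewise smooth $C^1$ and $f(-1)=f'(-1)=0$, I would split $f=f_-+f_+$ across $-1$; by \cite[Lemma 2.2]{Weiner06} each $f_\pm\in\tremezzi$, and by \cite[Proposition 2.3]{Weiner06} the unitaries $e^{iT(f_\pm)}$ are affiliated to $\A(I_\pm)$, so one can set
\[
U_\s(\g_\mathrm{n}):=\s_{I_-}(e^{iT(f_-)})\cdot\s_{I_+}(e^{iT(f_+)}),
\]
a product of two commuting normal images. The covariance identity $\Ad U_\s(\g)(\s_I(x))=\s_{\g I}(\Ad U(\g)(x))$ is then verified on generators $x=e^{iT(h)}$ with $h$ piecewise smooth $C^1$ supported in a subinterval of $I$ by applying the normal maps $\s_{I_\pm}$ to the identity of Lemma \ref{lm:nonsmoothaction}; the same computation also yields well-definedness of $U_\s$ up to a scalar under different factorizations.

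Strong continuity of $t\mapsto U_\s(\tau(t))$ is then immediate: the translation field $\gt(e^{i\th})=1+\cos\th$ satisfies $\gt(-1)=\gt'(-1)=0$, so $\tau(t)$ falls into the second construction and $U_\s(\tau(t))=\s_{I_-}(e^{iT(t\gt_-)})\cdot\s_{I_+}(e^{iT(t\gt_+)})$; each factor is strongly continuous in $t$ by the strong resolvent continuity of $t\mapsto T(t\gt_\pm)$ from \cite[Proposition 4.5]{CW05} combined with normality of $\s_{I_\pm}$.

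The main obstacle is positivity of the translation generator $P_\s$. My plan is to invoke Wiesbrock's theorem on half-sided modular inclusions. The translation covariance above gives $U_\s(\tau(a))\,\s(\A(\RR_+))\,U_\s(\tau(-a))=\s(\A((a,\infty)))\subset\s(\A(\RR_+))$ for $a\geq 0$, that is, a one-parameter semigroup of endomorphisms of the factor $\s(\A(\RR_+))$. Picking any unit vector $\Psi\in\H_\s$ cyclic and separating for $\s(\A(\RR_+))$ (such $\Psi$ exists because this is a properly infinite factor on a separable Hilbert space), the delicate technical step is to show that $\Psi$ is also cyclic for each $\s(\A((a,\infty)))$, $a>0$; I would obtain this by a Reeh--Schlieder-type argument internal to the soliton using analyticity of the orbit $t\mapsto U_\s(\tau(t))\Psi$ together with the strong continuity just proved. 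Granted this, Wiesbrock's theorem supplies a strongly continuous one-parameter unitary group with positive generator implementing the same semigroup of endomorphisms as $U_\s(\tau(\slot))$; the two groups must coincide up to a character of $\RR$, which is pinned to be trivial by the Borchers-type commutation relation between translations and the modular flow of $\s(\A(\RR_+))$ (transported from the vacuum via the localization provided by $\s_{I_\pm}$), yielding $P_\s\geq 0$.
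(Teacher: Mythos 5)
Your overall strategy---cut the group element into pieces on which $\s$ is normal and apply $\s$ factorwise---is the same as the paper's, but two steps are genuinely broken. First, the two-factor formula $U_\s(\tau(t))=\s_{I_-}(e^{iT(t\gt_-)})\,\s_{I_+}(e^{iT(t\gt_+)})$ cannot be made sense of. The field $\gt(e^{i\theta})=1+\cos\theta$ vanishes \emph{only} at $-1$, and removing one point from $S^1$ leaves a connected set, so a decomposition $\gt=\gt_-+\gt_+$ into pieces with disjoint supports contained in intervals forces a sharp cut at a second point where $\gt\neq 0$; the resulting pieces have jump discontinuities, are not in $\tremezzi$, and $T(\gt_\pm)$ is undefined. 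If instead you let the supports overlap, $T(\gt_-)$ and $T(\gt_+)$ no longer commute and $e^{itT(\gt)}\neq e^{itT(\gt_-)}e^{itT(\gt_+)}$. There is also an abstract obstruction: if $e^{itT(\gt)}=u_-u_+$ with $u_\pm$ affiliated to $\A(\RR_\mp)$, then by locality and innerness $\Ad(u_-u_+)$ would fix $\A(\RR_-)$ setwise, whereas translation by $t>0$ maps it onto the strictly larger $\A((-\infty,t))$. This is exactly why the paper uses a \emph{three}-factor decomposition $e^{itT(\gt)}=e^{itT(h_-\gt)}\cdot\bigl(e^{-itT(h_-\gt)}e^{itT(\gt)}e^{-itT(h_+\gt)}\bigr)\cdot e^{itT(h_+\gt)}$, where the middle factor is shown (via Lemma \ref{lm:nonsmoothaction}) to be localized in a \emph{bounded} interval; your construction for general $\g\in\psoneone(S^1)$ could be repaired along these lines (and the assumption that the residue near $-1$ is an exponential $\Exp(f)$ needs the Appendix, Lemma \ref{lm:glue}), but as written the translation group is not defined.

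Second, the positivity argument via half-sided modular inclusions does not get off the ground. Wiesbrock's theorem requires not merely that $\Psi$ be cyclic and separating for $\s(\A(\RR_+))$ and cyclic for $\s(\A((a,\infty)))$, but that the inclusion be half-sided modular \emph{with respect to the modular group of $(\s(\A(\RR_+)),\Psi)$}, i.e.\ $\Delta_\Psi^{it}\,\s(\A((a,\infty)))\,\Delta_\Psi^{-it}\subset\s(\A((a,\infty)))$ for one sign of $t$. For an arbitrary cyclic separating vector of a properly infinite factor there is no reason whatsoever for its modular flow to act geometrically; the Bisognano--Wichmann property is a statement about the vacuum vector in the vacuum representation, and a proper soliton need not possess any translation-invariant (or dilation-geometric) vector, so neither Wiesbrock nor a Borchers-type commutation relation can be invoked. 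The paper's route is entirely different and avoids modular theory: one approximates $\gt$ by the compactly supported fields $\gt_n$, obtains the lower bound $T(\gt_n)\ge\alpha/n$ from the Fewster--Hollands quantum energy inequalities (Theorem \ref{th:qei}) together with dilation invariance, transports this bound through $\s$ by normality on half-lines, and passes to the strong resolvent limit, which preserves positivity. Some such quantitative input is needed; the purely structural argument you propose cannot supply it.
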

\begin{proof}
Our strategy is to write the translation as a product of three elements localized in
half-lines or an interval, in each of which $\sigma$ is normal. For this purpose,
let $I_{(\theta_1, \theta_2)} = \{e^{i\theta}:\theta_1 < \theta < \theta_2\}$ be an interval on $S^1$
and we take a $C^{\infty}$-function
$h_+:S^1\setminus\{-1\}\rightarrow\RR$ which is equal to $0$ on $I_{(-\pi, 0)}$
and equal to $1$ on $I_{(\frac\pi 2,\pi)}$. Similarly, let $h_-(x)$ be a $C^\infty$-function
which is equal to $1$ on $I_{(-\pi,-\frac\pi 2)}$ and equal to $0$ on $I_{(0,\pi)}$.
They have disjoint supports.

Let us first prove the following relation:
\begin{align}\label{eq:nonsmoothtransformation}
\Ad e^{itT(h_-\gt)}(T(\gt)) = T(\Exp(th_-\gt)_* (\gt))+\beta(\Exp(th_-\gt),\gt),
\end{align}
where $\gt(e^{i\theta}) = 1+\cos\theta$ is the generator of translations (see Section \ref{smooth})
Note that $h_- \gt$ is supported in a certain interval $I_-$, one of whose boundary
is $-1$, hence so is $\Exp (t h_- \gt)$.
We decompose $\gt$ into two pieces $\gt_+, \gt_- \in \tremezzi$
such that $\gt_-(\theta) = \gt(\theta)$ on $I_-$ and $\gt_+ = \gt - \gt_-$.
Note that $\beta(\Exp(t h_-\gt),\gt_-) = \beta(\Exp(t h_-\gt),\gt)$,
since the supports of $\Exp(t h_-\gt)$ and of $\gt_+$ are disjoint, see \eqref{eq:betanonsmooth}.
As $h_-\gt$ coincides with $\gt$ on an interval, one of whose boundary point is $-1$,
we can apply Lemma \ref{lm:nonsmoothaction} to obtain
\begin{align}\label{eq:hminus}
\Ad e^{itT(h_-\gt)}(T(\gt_-)) &= T(\Exp(t h_-\gt)_* (\gt_-))+\beta(\Exp(t h_-\gt),\gt_-) \nonumber\\
&= T(\Exp(t h_-\gt)_* (\gt_-))+\beta(\Exp(t h_-\gt),\gt).
\end{align}
One the other hand, since $h_-\gt$ and $\gt_+$ have disjoint support (see Figure \ref{fig:ht}), we have
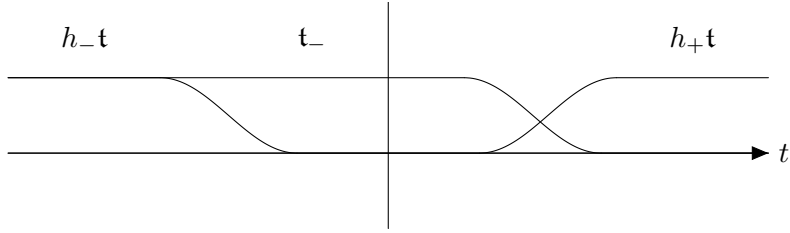
\begin{figure}[ht]
\centering
\begin{tikzpicture}[line cap=round,line join=round,>=triangle 45,x=1.0cm,y=1.0cm, scale = 1]
\clip(-6,-1) rectangle (6,2);
      \draw[->] (-5,0) -- (5,0) node[right] {$t$};
      \draw[->] (0,-1) -- (0,5);
      \draw[domain=-5:-3,smooth,variable=\x,black] plot ({\x},1);
      \draw[domain=-3:-1.2,smooth,variable=\x,black] plot ({\x}, {(cos ((\x+3)*100)+1)/2});
      \draw[domain=-1.2:5,smooth,variable=\x,black] plot ({\x},0);

      \draw[domain=-5:1,smooth,variable=\x,black] plot ({\x},1);
      \draw[domain=1:2.8,smooth,variable=\x,black] plot ({\x}, {(cos ((\x-1)*100)+1)/2});
      \draw[domain=2.8:5,smooth,variable=\x,black] plot ({\x},0);

      \draw[domain=-5:1.2,smooth,variable=\x,black] plot ({\x},0);
      \draw[domain=1.2:3,smooth,variable=\x,black] plot ({\x}, {(cos ((\x-3)*100)+1)/2});
      \draw[domain=3:5,smooth,variable=\x,black] plot ({\x},1);
      \node at(-4,1.5) {$h_-\gt$};
      \node at(-1,1.5) {$\gt_-$};
      \node at(4,1.5) {$h_+\gt$};
\end{tikzpicture}\caption{Vector fields $h_-\gt, \gt_-, h_+\gt$ in the real line picture.}
\label{fig:ht}
\end{figure}

\begin{align}\label{eq:hplus}
\Ad e^{itT(h_-\gt)}(T(\gt_+)) &= T(\gt_+).
\end{align}
Note that $\Exp(t h_-\gt)_*\gt = \Exp(t h_-\gt)_*\gt_+ + \Exp(t h_-\gt)_*\gt_-
= \gt_+ + \Exp(t h_-\gt)_*\gt_-$. By adding the sides of \eqref{eq:hminus} and \eqref{eq:hplus}, we obtain
on the intersection of the domains
\begin{align*}
\Ad e^{itT(h_-\gt)}(T(\gt)) &= T(\Exp(t h_-\gt)_* (\gt))+\beta(\Exp(t h_-\gt),\gt).
\end{align*}
The intersection of the operators in \eqref{eq:hminus}\eqref{eq:hplus} includes $C^\infty(L_0) = \bigcap_n \dom(L_0^n)$,
hence so does the sum. The right-hand side of the last expression is essentially self-adjoint
by \cite[Theorem 4.4]{CW05} (cited in Section \ref{piecewise}).
Hence the left-hand side is a self-adjoint extension of the right-hand side,
and therefore, they must coincide on the full domain.

Next, we write $e^{itT({\gt})}$ as
\[
 e^{itT({\gt})} = e^{itT(h_-\gt)}\cdot e^{-itT(h_-\gt)}e^{itT(\gt)}e^{-itT(h_+\gt)}\cdot e^{itT(h_+\gt)}.
\]
We claim that $e^{-itT(h_-\gt)}e^{itT(\gt)}e^{-itT(h_+\gt)}$ is localized on an interval
whose closure does not contain $-1$ (such an interval depends on $t$). 
This follows from \eqref{eq:nonsmoothtransformation}.
Indeed, $\Exp(th_-\gt)_* (\gt)$ agrees with $\gt$ in a neighborhood of $-1$ (depending on $t$) and
\begin{align*}
e^{-itT(h_-\gt)}e^{itT(\gt)}e^{-itT(h_+\gt)}
&=e^{itT(\Exp(h_-\gt)_* (\gt))}e^{i\beta(\Exp(th_-\gt),\gt)}\cdot e^{-itT(h_-\gt)}e^{-itT(h_+\gt)}\\
&= e^{itT(\Exp(h_-\gt)_* (\gt))}e^{i\beta(\Exp(th_-\gt),\gt)}e^{-itT(h_-\gt + h_+\gt)}, 
\end{align*}
where we used the linearity of $T$ on functions of class $\tremezzi$,
and the last expression is localized in a bounded interval:
as $h_-\gt + h_+\gt$ equals $\gt$ in a neighborhood of $-1 \in S^1$,
$\Ad e^{-itT(h_-\gt + h_+\gt)}$ implements the same action on
$\A(I_{t,\epsilon})$ for some neighborhood $I_{t,\epsilon}$
for small $t$ as the action of $\Ad e^{itT(\Exp(h_-\gt)_* (\gt))}$.
In other words, $\Ad e^{itT(\Exp(h_-\gt)_* (\gt))}e^{-itT(h_-\gt + h_+\gt)}$
is trivial on $\A(I_{t,\epsilon})$, which implies that
$e^{itT(\Exp(h_-\gt)_* (\gt))}e^{-itT(h_-\gt + h_+\gt)}$ is localized in
$I_{t,\epsilon}'$.

We introduce a representation of the translation group by
\begin{align*}
U_\sigma (t):=\sigma(e^{itT(h_-\gt)})\sigma(e^{-itT(h_-\gt)}e^{itT(\gt)}e^{-itT(h_+\gt)})\sigma(e^{itT(h_+\gt)})
\end{align*}
Note first the relation
$\beta(\gamma_1\circ\gamma_2,f)=\beta(\gamma_1,\gamma_{2 *}(f))+\beta(\gamma_2,f)$ for smooth $\gamma$,
which is equivalent to $0=\beta(\id, f) = \beta(\gamma^{-1},\gamma_*f)+\beta(\gamma,f)$.
This continues to hold for nonsmooth $\gamma = \Exp(th_\pm \gt)$ and $f= \gt$.
Indeed, $\Exp(th_\pm \gt)$ can be continued to a smooth diffeomorphisms with compact support, say $I$,
and let $I_\pm$ be the intervals obtained by removing $-1$ from $I$.
The function $\beta(\Exp(th_\pm \gt), \gt)$ is defined by the integral \eqref{eq:betanonsmooth},
which can be extended to $\gt_\pm$, the restriction of $\gt$ to $I_\pm$.
Hence the relation $\beta(\Exp(-t h_\pm\gt),\Exp(t_2 h_\pm\gt)_{*}(\gt_\pm))+\beta(\Exp(t_2 h_-\gt),\gt_\pm) = 0$
holds. Again by linearity in the second variable, we have for $t_1,t_2\in\RR$
\begin{align}\label{eq:betavanishing}
\begin{array}{rl}
0&=\beta(\id,\gt)=\beta(\Exp(-t_2 h_-\gt),\Exp(t_2 h_-\gt)_{*}(\gt))+\beta(\Exp(t_2 h_-\gt),\gt),\\
0&=\beta(\id,\gt)=\beta(\Exp(-t_1 h_+\gt),\Exp(t_1 h_+\gt)_{*}(\gt))+\beta(\Exp(t_1 h_+\gt),\gt).
\end{array}
\end{align}
By recalling that $h_-$ and $h_+$ have disjoint supports, with the help of \eqref{eq:nonsmoothtransformation}
and an analogous relation for $h_+$,
this yields a one-parameter group in $t$:
\begin{align*}
& U_\sigma (t_1)U_\sigma (t_2)\\
&=\sigma(e^{it_1T(h_-\gt)})\sigma(e^{-it_1T(h_-\gt)}e^{it_1T(\gt)}e^{-it_1T(h_+\gt)})\sigma(e^{it_1T(h_+\gt)})\\
&\quad\cdot \sigma(e^{it_2T(h_-\gt)})\sigma(e^{-it_2T(h_-\gt)}e^{-it_2T(\gt)}e^{it_2T(h_+\gt)})\sigma(e^{it_2T(h_+\gt)}) \\
&=\sigma(e^{it_1T(h_-\gt)})\sigma(e^{it_2T(h_-\gt)})\sigma(e^{-it_1T(h_-\gt)}e^{it_1T(\Exp(t_2 h_-\gt)_*(\gt))}e^{i\beta(\Exp(t_2 h_-\gt),\gt)}e^{-it_1T(h_+\gt)}) \\
&\quad\cdot \sigma(e^{it_1T(h_+\gt)}) \sigma(e^{-it_2T(h_-\gt)}e^{it_2T(\gt)}e^{-it_2T(h_+\gt)})\sigma(e^{it_2T(h_+\gt)}) \\
&=\sigma(e^{i(t_1+t_2)T(h_-\gt)})\sigma(e^{-it_1T(h_-\gt)}e^{it_1T(\Exp(t_2 h_-\gt)_*(\gt))}e^{i\beta(\Exp(t_2 h_-\gt),\gt)}e^{-it_1T(h_+\gt)})\\
&\quad\cdot \sigma(e^{-it_2T(h_-\gt)}e^{it_2T(\Exp(t_1h_+)_*\gt)}e^{i\beta(\Exp(t_1 h_+\gt),\gt)}e^{-it_2T(h_+\gt)})\sigma(e^{it_1T(h_+\gt)})\sigma(e^{it_2T(h_+\gt)}) \\
&=\sigma(e^{i(t_1+t_2)T(h_-\gt)})\\
&\quad\cdot \sigma(e^{-it_1T(h_-\gt)}e^{it_1T(\Exp(t_2 h_-\gt)_*(\gt))}e^{-it_1T(h_+\gt)}e^{-it_2T(h_-\gt)}e^{it_2T(\Exp(t_1h_+)_*\gt)}e^{-it_2T(h_+\gt)})\\
&\quad\cdot \sigma(e^{it_1T(h_+\gt)})\sigma(e^{it_2T(h_+\gt)})
\cdot e^{i\beta(\Exp(t_2 h_-\gt),\gt)}e^{i\beta(\Exp(t_1 h_+\gt),\gt)}\\
&=\sigma(e^{i(t_1+t_2)T(h_-\gt)})\\
&\quad\cdot \sigma(e^{-i(t_1+t_2)T(h_-\gt)}e^{it_1T(\gt)}e^{i\beta(\Exp(-t_2 h_-\gt),\Exp(t_2 h_-\gt)_{*}(\gt))}
e^{-it_1T(h_+\gt)}e^{it_2T(\Exp(t_1h_+)_*\gt)}e^{-it_2T(h_+\gt)})\\
&\quad\cdot \sigma(e^{it_1T(h_+\gt)})\sigma(e^{it_2T(h_+\gt)})
\cdot e^{i\beta(\Exp(t_2 h_-\gt),\gt)}e^{i\beta(\Exp(t_1 h_+\gt),\gt)}\\
&=\sigma(e^{i(t_1+t_2)T(h_-\gt)})\\
&\quad\cdot \sigma(e^{-i(t_1+t_2)T(h_-\gt)}e^{it_1T(\gt)}
e^{it_2T(\gt)}e^{i\beta(\Exp(-t_1 h_+\gt),\Exp(t_1 h_+\gt)_{*}(\gt))}e^{-i(t_1+t_2)T(h_+\gt)})\\
&\quad\cdot \sigma(e^{it_1T(h_+\gt)})\sigma(e^{it_2T(h_+\gt)})
\cdot e^{i\beta(\Exp(t_2 h_-\gt),\gt)}e^{i\beta(\Exp(-t_2 h_-\gt),\Exp(t_2 h_-\gt)_{*}(\gt))}e^{i\beta(\Exp(t_1 h_+\gt),\gt)}\\
&=\sigma(e^{i(t_1+t_2)T(h_-\gt)})\sigma(e^{-(t_1+t_2)T(h_-\gt)}e^{i(t_1+t_2)T(\gt)}e^{-i(t_1+t_2)T(h_+\gt)})\\
&\quad\cdot \sigma(e^{i(t_1+t_2)T(h_+\gt)})\cdot e^{i\beta(\Exp(t_2 h_-\gt),\gt)}e^{i\beta(\Exp(-t_2 h_-\gt),\Exp(t_2 h_-\gt)_{*}(\gt))}\\
&\quad\cdot e^{i\beta(\Exp(t_1 h_+\gt),\gt)}e^{i\beta(\Exp(-t_1 h_+\gt),\Exp(t_1 h_+\gt)_{*}(\gt))} \\
&=U_\sigma (t_1+t_2),
\end{align*}
where the scalars cancel by \eqref{eq:betavanishing}.
To show continuity of $U^\sigma(t)$, it is enough to see that $U^\sigma(t) \to \1$
as $t\to 0$ in the strong operator topology.
All the factors in the product
\[
 U^\sigma(t) = \sigma(e^{itT(h_-\gt)})\sigma(e^{-itT(h_-\gt)}e^{itT(\gt)}e^{-itT(h_+\gt)})\sigma(e^{itT(h_+\gt)})
\]
are unitary, and hence uniformly bounded.
The first and the third factors tend to $\1$ by normality of $\sigma$,
as they are supported in a fixed half-line.
As for the second factor, the product $e^{-itT(h_-\gt)}e^{itT(\gt)}e^{-itT(h_+\gt)}$
tends to $\1$ in the vacuum representation, and this is localized in a fixed interval
for small $t$, hence by normality of $\sigma$ the second factor tends to $\1$ as well.

It remains to prove the positivity of energy.
We do this by showing that $U_\sigma (t)$ can be obtained as a limit
in the strong resolvent sense of a sequence of one-parameter unitary groups
with positive generator.
Let $\gt_1$ be a $C^{\infty}$-vector field on $S^1$ such that $C_*(\gt_1)$ is equal to $1$ on
$(-\infty,1)$ and equal to $0$ on $(2,+\infty)$. From $\gt_1$ we construct a sequence of vector fields
(see Figure \ref{fig:tn})
\begin{figure}[ht]
\centering
\begin{tikzpicture}[line cap=round,line join=round,>=triangle 45,x=1.0cm,y=1.0cm, scale = 1]
\clip(-7,-1) rectangle (7,2);
      \draw[->] (-7,0) -- (7,0) node[below left] {$t$};
      \draw[->] (0,-1) -- (0,5);
      \draw[domain=-7:1,smooth,variable=\x,black] plot ({\x},1);
      \draw[domain=1:2.8,smooth,variable=\x,black] plot ({\x}, {(cos ((\x-1)*100)+1)/2});
      \draw[domain=2.8:5,smooth,variable=\x,black] plot ({\x},0);

      \draw[domain=-7:2,smooth,variable=\x,black] plot ({\x},1);
      \draw[domain=2:5.6,smooth,variable=\x,black] plot ({\x}, {(cos ((\x/2-1)*100)+1)/2});
      \draw[domain=5.6:6,smooth,variable=\x,black] plot ({\x},0);
      \node at(1,1.5) {$\gt_1$};
      \node at(2,1.5) {$\gt_2$};
\end{tikzpicture}\caption{Vector fields $\gt_n$ in the real line picture.}
\label{fig:tn}
\end{figure}
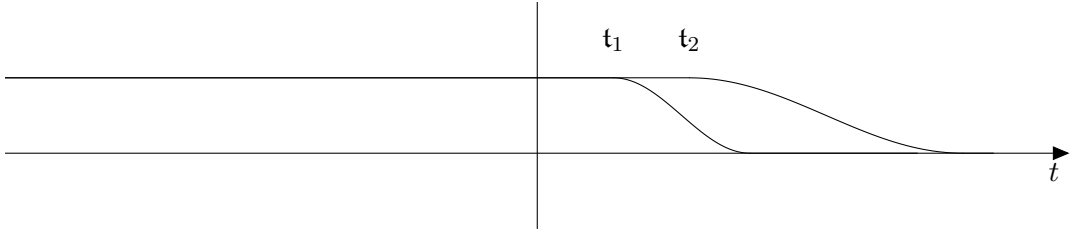
\begin{align*}
C_*(\gt_n)(t)\coloneqq C_*(\gt_1)\left(\frac{t}{n}\right),\qquad t\in\RR,\hbox{ }n\in\NN.
\end{align*}
We fix $2 < s < \frac52$ (cf.\! the proof of Lemma \ref{lm:smoothaction}).
Let us show that $\gt_n\rightarrow\gt$ in the $H^s(S^1)$-topology.
For this it is sufficient to show that $\left\lbrace\frac{d^3}{d\theta^3}\gt_n\right\rbrace_{n\in\NN}$ is a sequence of functions in $L^1(S^1)$ uniformly bounded in $n$ and that $\gt_n\rightarrow \t$ in $L^1(S^1)$:
this implies that $|k^3 \hat \gt_n(k)| < \mathrm{Const.}$, where $\hat \gt_n(k)$ is the $k$-th Fourier
coefficient of $\gt_n$, or equivalently, $|k^{2s} \hat \gt_n(k)^2| < \frac{\mathrm{Const.}}{k^{6-2s}}$,
and the right-hand side is summable in $k$ since $6-2s > 1$.
From the convergence $\gt_n \to \gt$ in $L^1$ we obtain the convergence of each $\hat \gt_n(k)$,
Therefore, by the Lebesgue dominated convergence theorem
(applied to the measurable set $\ZZ$ with the counting measure),
we obtain the convergence $\gt_n \to \gt$ in $H^s(S^1)$.
Let us make these necessary estimates separately.

Explicitly, as $\frac{dC}{d\theta}(e^{i\theta}) = \frac1{1+\cos\theta}$, we have
\begin{align*}
\gt_n(e^{i\theta}) = (1+\cos\theta)\cdot C_*(\gt_1)\left(\frac{C(e^{i\theta})}{n}\right)
\end{align*}
and $C_*(\gt_1)$ is bounded.
The third derivative of $\gt_n$ is 
\begin{align}\label{3derivative}
\frac{d^3}{d\theta^3}\gt_n(e^{i\theta}) &= \sin \theta\cdot C_*(\gt_1)\left(\frac{C(e^{i\theta})}{n}\right)
-\frac{2\cos\theta}{n(1+\cos\theta)}\frac{d}{dt}(C_*(\gt_1))\left(\frac{C(e^{i\theta})}{n}\right) \nonumber \\
 &\quad-\frac{\sin^2 \theta}{n(1+\cos\theta)^2}\frac{d}{dt}(C_*(\gt_1))\left(\frac{C(e^{i\theta})}{n}\right)
 +\frac{1}{n^3(1+\cos\theta)^2}\frac{d^3}{dt^3}(C_*(\gt_1))\left(\frac{C(e^{i\theta})}{n}\right).
\end{align}
Recall that $C_*(\gt_1)$ is constant on $(-\infty, 1) \cup (2,\infty)$.
The first term of the right-hand side of \eqref{3derivative} is clearly uniformly bounded in $n$ on $S^1$. For the second term of the right-hand side of \eqref{3derivative}, by the change of variable $t = C(e^{i\theta})$, we have:
\begin{align*}
\int_0^{2\pi}\left\vert\frac{2\cos \theta }{n(1+\cos\theta)}\frac{d}{dt}(C_*(\gt_1))\left(\frac{C(e^{i\theta})}{n}\right)\right\vert d\theta 
&= \frac1n\int_n^{2n}\left|2\cos(-i\log C^{-1}(t))\frac{d}{dt}(C_*(\gt_1))\left(\frac{t}{n}\right)\right\vert dt
\end{align*}
which is uniformly bounded in $n$. Next, by calculating directly one has
$1+\cos (-i\log C^{-1}(t)) = \frac2{1+t^2}$ and $\sin(-i\log C^{-1}(t)) = \frac{2t}{1+t^2}$, hence the third term is
\begin{align*}
&\int_0^{2\pi}\left|\frac{2\sin^2\theta}{n(1+\cos \theta)^2}\frac{d}{dt}(C_*(\gt_1))\left(\frac{C(e^{i\theta})}{n}\right)\right| d\theta
=\int_n^{2n}\left|\frac{2\sin^2\theta}{n(1+\cos\theta)}\frac{d}{dt}(C_*(\gt_1))\left(\frac{t}{n}\right)\right| dt\\
&=\frac1n\int_n^{2n}\left|\frac{4t^2}{1+t^2}\frac{d}{dt}(C_*(\gt_1))\left(\frac{t}{n}\right)\right| dt
\end{align*}
which is uniformly bounded in $n$, since the integrand is bounded.
The fourth term is also uniformly bounded in $n$ since
\begin{align*}
&\int_0^{2\pi}\left|\frac{1}{n^3(1+\cos\theta)^2}\frac{d^3}{dt^3}(C_*(\gt_1))\left(\frac{C(e^{i\theta})}{n}\right)\right|d\theta 
=\int_n^{2n}\left|\frac{1+t^2}{2n^3}\frac{d^3}{dt^3}(C_*(\gt_1))\left(\frac{t}{n}\right)\right| dt \\
&\le \frac1n\int_n^{2n}\left|\frac{1+4n^2}{2n^2}\frac{d^3}{dt^3}(C_*(\gt_1))\left(\frac{t}{n}\right)\right| dt
\end{align*}
Finally, we show that $\gt_n\rightarrow\gt$ in $L^1(S^1)$ using the boundedness of $C_*(\gt_1)$:
\begin{align*}
&\int_0^{2\pi}\left| \gt(e^{i\theta})-\gt_n(e^{i\theta})\right| d\theta
=\int_0^{2\pi}\left| (1+\cos\theta)\left(1-C_*(\gt_1)\left(\frac{C(e^{i\theta})}{n}\right)\right)\right| d\theta\\
&=\int_n^{+\infty}\left| (1+\cos(-i\log C^{-1}(t)))^2\left(1-C_*(\gt_1)\left(\frac{t}{n}\right)\right)\right| dt \\
&=\int_n^{+\infty}\left| \frac4{(1+t^2)^2}\left(1-C_*(\gt_1)\left(\frac{t}{n}\right)\right)\right| dt
\longrightarrow 0 \qquad(\text{as } n\rightarrow\infty)
\end{align*}
This completes the proof that $\gt_n\to\gt$ in $H^s(S^1), 2 < s < \frac52$.

For each fixed $t$,
the representation $U_\sigma (t)$ can be obtained as the limit of $\sigma(e^{itT(\gt_n)})$
in the strong topology. Indeed,
\[
 \sigma(e^{itT(\gt_n)}) = \sigma(e^{itT(h_-\gt_n)})\sigma(e^{-itT(h_-\gt_n)}e^{itT(\gt_n)}e^{-itT(h_+\gt_n)})\sigma(e^{itT(h_+\gt_n)})
\]
Note that $h_-, h_+, \gt_n$ belong to $H^s(S^1)$, and the product is (jointly) continuous
\cite[Lemma 2.4]{CDIT18+}\cite[Lemma B.4]{IKT13} as $\frac32 < 2 < s$, hence
both $h_-\gt_n$ and $h_+\gt_n$ are convergent in $H^s(S^1)$, and by the argument of
Lemma \ref{lm:smoothaction}, they are convergent in $\tremezzi$,
hence the corresponding operators are convergent in the strong resolvent sense.
Furthermore, each of these sequences are localized in a fixed interval or a half line,
hence by the normality of $\s$ on half lines, the convergence follows.
In other words, if we write $U^\sigma(t) =: e^{itT^\sigma}$,
then $\sigma(T(\gt_n))$ (the generator of $\sigma(e^{itT(\gt_n)})$, which is defined by local normality)
is convergent to $T^\sigma$ in the strong resolvent sense.

We have by Theorem \ref{th:qei} that $T(\gt_1)\geq \alpha$ for some $\a\in\RR$.
By the fact that the Schwarzian derivative of a M\"obius transformation is 0,
it follows that the quantum energy inequalities of Theorem \ref{th:qei}
are invariant under dilations, and therefore,
\[
 T(\delta^n_*(\gt_1))=T(n\gt_n)\geq \alpha,
\]
which implies
\[
 T(\gt_n)\geq \frac{\alpha}{n}.
\]
Since $T(\gt_n)$ is localized on a half-line, by local normality of $\sigma$,
we have $\sigma(T(\gt_n)) \ge \frac\a n$.
By \cite[Theorem VIII.23]{RSII} and the convergence in the strong resolvent sense,
$T^\sigma$ is positive as well. 

By Proposition \ref{pr:psone},
the net $(\A, U, \Omega)$ is locally $\psone(S^1)$-covariant.
Any element $\g \in \psoneone(S^1)$
can be decomposed into a product $\g = \g_-\circ(\g_-^{-1}\circ\g\circ\g_+^{-1})\circ\g_+$, where
$\g_\pm \in\psoneone(S^1)$
as in the proof for $U^\sigma(t)$. It is straightforward to see that
the definition
\[
 U^\sigma(\g) = \sigma(U(\g_-))\sigma(U(\g_-^{-1}\circ\g\circ\g_+^{-1}))\sigma(U(\g_+))
\]
does not depend on the decomposition of $\g$.
If $I$ is a left half-line, we can choose $\g_-$ such that $I \cap \supp \g_+ = \emptyset$.
Now for $x \in \A(I)$ the covariance $\sigma(\Ad U(\g)(x)) = \Ad U^\sigma(\g)(\sigma(x))$
follows because the both sides are localized in $I_-$, and by the definition
$U^\sigma(\g\circ \g_+^{-1}) = \sigma(U(\g\circ\g_+^{-1}))$.
\end{proof}

\subsection{Solitons from nonsmooth diffeomorphisms}\label{nonsmoothsoliton}
Here we construct families of continuously many proper solitons for any conformal net $\A$, using the diffeomorphism covariance.
\paragraph{Nonsmooth extendable diffeomorphisms.}
Let $\diff(S^1,-1)\subset \diff^0(S^1) $ be the class of orientation preserving homeomorphism $\nu$ of $S^1$,
which have the following properties
\begin{itemize}
\item $\nu(-1)=-1$,
\item $\nu$ is a smooth map from $S^1\setminus \{-1\}$ onto $S^1\setminus \{-1\}$,
and the left and right derivatives at all orders exist at the point $-1$, and the first order left and right derivatives are nonzero.
\end{itemize}
By Borel's theorem \cite[Theorem 1.2.6]{Hoermander90},
for any sequence of real numbers $\{\lambda_n\}_{n \ge 1}$,
there is a smooth function $f$ such that $f(-1) = -1$ and
$\frac{d^n f}{d\theta^n}(-1) = \lambda_n$.
Let us take $\lambda_n$ as the left derivatives of $\nu$ at $\theta = -1$.
Then we can find a smooth function $f$ such that $f(-1) = -1$
$\frac{d^n f}{d\theta^n}(-1) = \lambda_n$.
Let $I$ be an interval containing $-1$ and call $I_-, I_+$ the subintervals
resulting from $I$ by removing $-1$.
There is a function $\tilde f$ which agrees with $\nu$ on $I_-$
and with $f$ on $I_+$.
This is smooth even at $\theta = -1$, because all the left and right derivatives coincide.
Now, since the first derivative at $\theta = -1$ is nonzero, it defines a diffeomorphism
of a small interval containing $\overline{I_-}$, hence the restriction to this small interval
can be continued to a diffeomorphism of $S^1$ (see \cite[Lemma 3.9]{CDIT18+}).
Let us call it $\nu_{I_-}$, then $\nu_{I_-} \in \diff(S^1)$ and
agrees with $\nu$ on $I_-$.
Similarly, we can find $\nu_{I_+} \in \diff(S^1)$ which agrees with $\nu$ on $I_+$.

\paragraph{Irreducible solitons.}
Let $\mathcal{A}$ be a conformal net on $S^1$ on the Hilbert space $\mathcal{H}$ and $U$ its associated projective representation of
$\diff(S^1)$.
For $\nu\in\diff(S^1,-1)$ and for every $I\in{\overline{\I}_\RR}$ we choose $\nu_I\in\diff(S^1)$ which agrees with $\nu$ on $I$
(there is such $\nu_I$ even if one of the endpoint of $I$ is $-1$ (half-lines in the $\RR$ picture)
by the remark above).
We denote by $\sigma_\nu$ the family of maps $\sigma_\nu\coloneqq\{\sigma_\nu^I\}$ where 
\begin{align*}
\sigma_\nu^I:\mathcal{A}(I)&\longrightarrow \B(\H)\\
 x&\longmapsto\sigma_\nu^I(x)\coloneqq\Ad U(\nu_I)(x)
\end{align*}
and $I\in{\overline{\I}_\RR}$, $\nu\in\diff(S^1,-1)$. 

\begin{proposition}\label{pr:welldefsol}
Let $\nu\in\diff(S^1,-1)$. Then $\sigma_\nu=\{\sigma_\nu^I\}$ is an irreducible soliton of
the conformal net $\mathcal{A}$ with index $1$.
\end{proposition}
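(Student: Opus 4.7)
The plan is to verify, in order, that each $\sigma_\nu^I$ is well-defined independently of the auxiliary choice of $\nu_I\in\diff(S^1)$, that it is a normal representation of $\A(I)$, that the family $\{\sigma_\nu^I\}$ is compatible under inclusion, that $\sigma_\nu$ is irreducible, and finally that its index equals $1$. Normality and compatibility will be essentially automatic once well-definedness is established, so the real work lies in proving independence of the extension and in exploiting Haag duality for the index.

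The crucial step is independence. Given two extensions $\nu_I,\nu_I'\in\diff(S^1)$ agreeing with $\nu$ on $I$, the smooth diffeomorphism $\g\coloneqq\nu_I'\circ\nu_I^{-1}$ is the identity on $\nu_I(I)=\nu(I)$. To conclude that $\Ad U(\g)$ acts trivially on $\A(\nu(I))$, I would approximate $\nu(I)$ from inside by a nested sequence of open subintervals $J_n\nearrow\nu(I)$ with $\overline{J_n}\subset\nu(I)$. Since $\g$ is the identity on the open neighborhood $\nu(I)$ of $\overline{J_n}$, the support $\supp\g$ does not meet $\partial J_n$ and hence $\supp\g\subset J_n'$; the diffeomorphism covariance axiom \eqref{eq:diffcov2} then gives $\Ad U(\g)|_{\A(J_n)}=\id$ for each $n$, and additivity (CN8) together with normality yields $\Ad U(\g)|_{\A(\nu(I))}=\id$. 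Applied to $\Ad U(\nu_I)(x)\in\A(\nu(I))$ for $x\in\A(I)$, this gives $\Ad U(\nu_I')(x)=\Ad U(\nu_I)(x)$, as required. The fact that each $\sigma_\nu^I$ is normal then follows from its being the restriction to $\A(I)$ of the spatial automorphism $\Ad U(\nu_I)$, which realizes a $*$-isomorphism $\A(I)\to\A(\nu(I))$. Compatibility for $I_1\subset I_2$ in $\overline{\I}_\RR$ is immediate: one may simply take $\nu_{I_1}=\nu_{I_2}$, whose validity is guaranteed by the independence just proved.

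For irreducibility, the key observation is that $\nu$, being an orientation-preserving homeomorphism of $S^1$ fixing $-1$, restricts to a bijection of $\overline{\I}_\RR$ onto itself, and therefore
\[
 \bigvee_{I\in\overline{\I}_\RR}\sigma_\nu^I(\A(I)) \;=\; \bigvee_{I\in\overline{\I}_\RR}\A(\nu(I)) \;=\; \bigvee_{J\in\overline{\I}_\RR}\A(J) \;=\; \B(\H),
\]
the last equality being the irreducibility axiom (CN10). For the index, I note that $\RR_+,\RR_-\in\I$, that $\nu(1)\ne -1$ since $\nu$ is injective, and that the two open arcs $\nu(\RR_\pm)$ are precisely the two connected components of $S^1\setminus\{-1,\nu(1)\}$; in particular $\nu(\RR_-)'=\nu(\RR_+)$. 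Haag duality (CN7) therefore delivers
\[
 \sigma_\nu(\A(\RR_-))' \;=\; \A(\nu(\RR_-))' \;=\; \A(\nu(\RR_-)') \;=\; \A(\nu(\RR_+)) \;=\; \sigma_\nu(\A(\RR_+)),
\]
so the inclusion $\sigma_\nu(\A(\RR_+))\subset\sigma_\nu(\A(\RR_-))'$ is an equality and the Jones index is $1$. The only subtle point is the approximation argument in the first step: one must make sure that $\overline{J_n}\subset\nu(I)$ so that $\g$ is constant on a full open neighborhood of $\partial J_n$, guaranteeing that $\supp\g$ lies in the open set $J_n'$ and not merely in its closure.
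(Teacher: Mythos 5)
Your proposal is correct and follows essentially the same route as the paper: well-definedness/compatibility via the covariance axiom \eqref{eq:diffcov2} applied to $\nu_{I}'\circ\nu_I^{-1}$, irreducibility from (CN\ref{cn:irreducibility}) since $\nu$ permutes ${\overline{\I}_\RR}$, and index $1$ from Haag duality together with $(\nu\RR_+)'=\nu\RR_-$. Your inner-approximation argument with $\overline{J_n}\subset\nu(I)$ is in fact slightly more careful than the paper's one-line appeal to \eqref{eq:diffcov2} (whose hypothesis literally requires the support to lie in the open complement, whereas one only knows it lies in its closure), but this is a refinement of the same argument rather than a different proof.
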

\begin{proof}
Normality on each $I$ follows because each map $\sigma_\nu^I$ is given
by the adjoint action $\Ad U(\nu_I)$.
We show that the family of maps $\sigma_\nu$ is compatible, namely that, if $I\subset \hat I$ for $I,\hat I\in{\overline{\I}_\RR}$,
then $\sigma_\nu^{\hat I}\restriction_{\mathcal{A}(I)}=\sigma_\nu^I$.
By definition, $\nu_I, \nu_{\hat I} \in\diff(S^1)$ agree with $\nu$ on $I$ and $\hat I$,
respectively, hence they agree on $I$.
Then on $\mathcal{A}(I)$ we have 
\[
\Ad U(\nu_I)= \Ad U(\nu_{\hat I})\circ \Ad U(\nu_{\hat I}^{-1}\circ\nu_I) = \Ad U(\nu_{\hat I}),
\]
because $\nu_{\hat I}^{-1}\circ\nu_I$ is a diffeomorphism of the circle localized in
$I'$ and in this case $\Ad U(\nu_{\hat I}^{-1}\circ\nu_I)$ acts trivially on $I$ by \eqref{eq:diffcov2}.

Irreducibility follows because
$\bigvee_{I\in {\overline{\I}_\RR}}\sigma(\A(I)) = \bigvee_{I\in {\overline{\I}_\RR}}\sigma(\A(\nu I)) = \bigvee_{I\in {\overline{\I}_\RR}}\A(I)$
and the weak closure of the right-hand side is $\B(\H)$ by (CN\ref{cn:irreducibility}).
The index of $\sigma_\nu$ is $1$ because
\[
 \sigma_\nu(\A(\RR_+)) = \A(\nu \RR_+) = \A((\nu \RR_+)')' = \A(\nu \RR_-)' = \sigma_\nu(\A(\RR_-))'.
\]
\end{proof}

Now we show that if $\nu$ has different left and right derivatives,
then $\sigma_\nu$ is a proper soliton.
Modular theory is used as a tool to show non-triviality of the constructed soliton.
Let us introduce the notation for left and right derivatives (they are in $\RR_+$ because
$\nu$ is orientation-preserving):
\[
 \partial_\pm \nu(-1) = -i\lim_{\theta \to 0^\pm}\frac{\nu(-e^{i\theta})-\nu(-1)}{\theta}, 
\]
We also denote their ratio by $r(\nu) := \partial_+ \nu(-1)/\partial_- \nu(-1) \in \RR_+$.
For this purpose, we need the results from Appendix \ref{c1pws}.
Actually, in order to show that there are proper solitons,
we can take $\nu$ which coincides with dilations with different parameter
in a neighborhood of $-1$. In this way, $\nu_\pi$ is a product of $\psi_t$ below
for some $t$ and an element in $\diff(S^1)$, and one does not have to invoke Proposition \ref{pr:psone}.

\begin{theorem}\label{th:typeIsol}
Let $\nu\in\diff(S^1,-1)$ and assume that $r(\nu) \neq 1$. Then $\sigma_\nu$ is a proper, irreducible,
$\psonezero(S^1)$-covariant soliton of $\A$.
Furthermore, let $\nu_1,\nu_2 \in\diff(S^1,-1)$. Then
$\sigma_{\nu_1}$ and $\sigma_{\nu_2}$ are unitarily equivalent if and only if
$r(\nu_1) = r(\nu_2)$.
\end{theorem}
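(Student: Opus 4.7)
My plan is to combine a modular-theoretic obstruction (for a convenient ``special'' $\nu$) with the implementability of piecewise smooth $C^1$-diffeomorphisms from Proposition \ref{pr:psone}, reducing the general case to that special one through the equivalence classification.

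\textbf{$\psonezero(S^1)$-covariance.} Theorem \ref{th:positivitysol} already gives $\psoneone(S^1)$-covariance of any soliton. The quotient $\psonezero(S^1)/\psoneone(S^1)$ is one-dimensional, generated by the derivative at $-1$, and is covered by the dilation subgroup $\{\delta(t)\}_{t\in\RR}\subset\psonezero(S^1)$. Repeating the decomposition argument of Theorem \ref{th:positivitysol} with the dilation vector field in place of $\gt$---both vanish at $-1$ and admit a splitting into a bounded middle piece and two half-line factors---produces an implementation of $\{\delta(t)\}$ in the soliton sector, which combined with $U^\sigma|_{\psoneone(S^1)}$ yields $\psonezero(S^1)$-covariance.

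\textbf{Properness for a special $\nu$.} First I treat $\nu$ which coincides with the dilation $\delta(t_\pm)$ on a neighborhood of $-1$ in $\RR_\pm$, with $t_+\neq t_-$, so $r(\nu)=e^{t_--t_+}\neq 1$. Suppose for contradiction that $\sigma_\nu$ extends to a DHR representation on $\H$. Since $\sigma_\nu$ is irreducible on $\H$ and $\bigvee_{I\in{\overline{\I}_\RR}}\A(I)=\B(\H)$ by (CN\ref{cn:irreducibility}), this extension is an inner automorphism of $\B(\H)$, implemented by some unitary $W$. Then $\Ad W|_{\A(\RR_\pm)}=\Ad U(\nu_{\RR_\pm})$ and Haag duality give $W^{-1}U(\nu_{\RR_\pm})\in\A(\RR_\mp)$, whence $U(\nu_{\RR_+}^{-1}\circ\nu_{\RR_-})=A\cdot B$ (up to a scalar) with $A\in\A(\RR_-)$, $B\in\A(\RR_+)$, and $AB=BA$ by locality. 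Choosing the smooth extensions $\nu_{\RR_\pm}$ so that $\nu_{\RR_+}^{-1}\circ\nu_{\RR_-}$ equals $\delta(t_--t_+)$ on a neighborhood of $-1$ and the identity on $\RR_-$ (feasible since $\nu$ is already a dilation near $-1$), the remaining correction is supported in $\RR_+$ and can be absorbed into $B$, yielding $U(\delta(t_--t_+))=A'B'$ with the same commutation. By Bisognano--Wichmann, $U(\delta(s))=\Delta_{\RR_+}^{-is/(2\pi)}$, and the commutation $A'\in\A(\RR_+)'$ forces $\Ad B'=\Ad\Delta_{\RR_+}^{-i(t_--t_+)/(2\pi)}$ on $\A(\RR_+)$, making the modular automorphism of $\A(\RR_+)$ at the vacuum inner at the nonzero parameter $-(t_--t_+)/(2\pi)$. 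This contradicts the type III$_1$ property of $\A(\RR_+)$, which is standard for conformal nets.

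\textbf{Equivalence classification and properness for general $\nu$.} If $r(\nu_1)=r(\nu_2)$, then $\eta:=\nu_2\circ\nu_1^{-1}$ has $r(\eta)=1$, is smooth on $S^1\setminus\{-1\}$ and $C^1$ at $-1$, hence lies in $\psone(S^1)$. Proposition \ref{pr:psone} provides a unitary $U(\eta)$ implementing it, and since $\nu_{2,I}^{-1}\circ\eta\circ\nu_{1,I}$ is the identity on $I$ (so $U(\nu_{2,I}^{-1}\circ\eta\circ\nu_{1,I})\in\A(I)'$ by \eqref{eq:diffcov2} extended via Proposition \ref{pr:psone}), $U(\eta)$ intertwines $\sigma_{\nu_1}$ and $\sigma_{\nu_2}$. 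In particular every $\nu$ with $r(\nu)\neq 1$ is equivalent to a special $\nu_0$ of the dilation-near-$-1$ form above, and is therefore proper by the previous step. Conversely, if $\sigma_{\nu_1}\cong\sigma_{\nu_2}$ via $V$, rewriting the intertwining relation in terms of $y:=\sigma_{\nu_1}^I(x)\in\A(\nu_1(I))$ gives $VyV^{-1}=\Ad U(\nu_{2,I}\circ\nu_{1,I}^{-1})(y)=\sigma_{\nu_2\circ\nu_1^{-1}}^{\nu_1(I)}(y)$ for all $y\in\A(\nu_1(I))$, so $\sigma_{\nu_2\circ\nu_1^{-1}}=\Ad V$ uniformly on $\bigvee_{J\in{\overline{\I}_\RR}}\A(J)$ and is not proper; the properness result applied to $\nu_2\circ\nu_1^{-1}$ then forces $r(\nu_2\circ\nu_1^{-1})=r(\nu_2)/r(\nu_1)=1$. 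The main obstacle is the modular step: arranging the extensions $\nu_{\RR_\pm}$ so that $U(\nu_{\RR_+}^{-1}\circ\nu_{\RR_-})$ genuinely factors as a pure dilation near $-1$ times a correction localized on a single side of $-1$, and invoking the standard type III$_1$ property of conformal net algebras.
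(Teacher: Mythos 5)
Your endgame (a factorization of $U(\delta(s))$ forcing the modular automorphism of $\A(\RR_+)$ to be inner on a type III factor) is the same contradiction the paper reaches, and your equivalence classification and the reduction of a general $\nu$ to a special one are essentially the paper's argument. But two earlier steps have genuine gaps. First, the $\psonezero(S^1)$-covariance: cutting the dilation generator does not work. The dilation field is $\sin\theta$ in the circle picture, which vanishes only to \emph{first} order at $-1$; its one-sided truncations are continuous but not $C^1$ there, hence not in $\tremezzi$, so the Carpi--Weiner smearing/convergence results and Lemma \ref{lm:nonsmoothaction} do not apply to the pieces. (The paper explicitly records in the Outlook that the translation-cutting trick fails for dilations, and whether general solitons are dilation covariant is left open.) For these particular solitons the correct route is the one the paper takes: $r(\delta(t)\circ\nu\circ\delta(-t))=r(\nu)$, so $\sigma_\nu$ and $\sigma_{\delta(t)\circ\nu\circ\delta(-t)}$ are unitarily equivalent by the classification you prove anyway, and the intertwiner implements $\delta(t)$.

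Second, and more seriously, the first step of your properness argument is unjustified: irreducibility of $\sigma_\nu$ and index $1$ do \emph{not} imply that a hypothetical DHR extension is implemented by a single unitary $W$ on $\H$. Such an extension would be an irreducible DHR automorphism, and localized automorphisms are generally not inner (innerness is precisely triviality of the sector); nothing in your setup rules out a nontrivial sector at this stage. The paper supplies the missing mechanism: every DHR representation is rotation covariant by \cite{DFK04}, which yields a unitary $U^\nu(R_\pi)$, and the composite $\Ad U(\psi_t\circ\nu_\pi^{-1})\circ\sigma_\nu\circ\Ad U(R_\pi)\circ\sigma_{\nu^{-1}}\circ\Ad U(R_\pi)$ is then \emph{manifestly} implemented by $U(\psi_t\circ\nu_\pi^{-1})U^\nu(R_\pi)U(R_\pi)$ while acting trivially on $\A(I_{(-\pi,0)})$ and as $\Ad U(\delta(t))$ on $\A(I_{(0,\pi)})$; Haag duality then places the implementer inside $\A(I_{(0,\pi)})$ and the modular-theoretic contradiction follows. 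Note also that the ``correction localized on a single side of $-1$'' you hope to arrange is exactly a one-sided dilation of $\psi_t$-type, which is only piecewise smooth; so even the factorization step you flag as an obstacle cannot be carried out without the $\psone(S^1)$-implementability of Proposition \ref{pr:psone} (or the composite construction above). Without the rotation-covariance input, your argument does not close.
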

\begin{proof}
Let us first consider the function 
\begin{align*}
\psi_t(e^{i\theta})\coloneqq\begin{cases} 
      e^{i\theta} & \text{if } \theta\in [-\pi,0) \\
      \delta(t)(e^{i\theta}) & \text{if } \theta\in[0,\pi) 
   \end{cases}
\end{align*}
which is smooth except $-1,1$. It is constructed just by gluing the identity map on
$[-\pi,0)$ and the dilation on $[0,\pi)$ in the circle picture (they correspond to $\RR_+$ and $\RR_-$
in the $\RR$ picture, respectively).
By passing to the $\RR$ picture (which does not affect $r(\nu)$),
it is immediate that $r(\psi_t) = e^{-t}$.

We may assume that $\nu(1)=1$, because if $\nu(1)\neq 1$, then
there is a smooth diffeomorphism $\underline \nu$ such that $\supp \underline\nu$ does not contain $-1$ and $\underline \nu(1) = \nu(1)$.
We then have $\underline \nu^{-1}\circ \nu(1) = 1$.
As such $\underline \nu^{-1}$ is represented by $U(\underline \nu^{-1})$,
the questions of properness, irreducibility and covariance of $\sigma_\nu$
are equivalent to that of $\sigma_{\underline \nu^{-1}\circ\nu} = \Ad U(\underline \nu^{-1})\circ \s_\nu$.

From $\nu\in\diff(S^1,-1)$ with $\nu(1)=1$, we construct a homeomorphism $\nu_\pi$ of $S^1$
which is smooth except two points, the points $-1$ and $1$:
\begin{align*}
\nu_\pi \coloneqq \nu\circ R_{\pi}\circ \nu^{-1}\circ R_{\pi},
\end{align*}
where $R_{\pi}$ the rotation by $\pi$.
If $t = -\log r(\nu)$, then $\psi_t\circ \nu_\pi^{-1}\in\psone(S^1)$ because
the discrepancy of the derivatives at $1$ and $-1$ cancel exactly.

We show that $\sigma_\nu$ is a proper soliton, i.e.\! it does not extend to a  DHR representation.
Let us suppose the contrary, namely that it were a restriction of a DHR representation.
We denote the extension by $\sigma_\nu$.
Then $\sigma_\nu$ is rotation covariant \cite[Theorem 6]{DFK04}, namely there is a unitary representation of the universal covering of $S^1$, $\theta\rightarrow U^\nu (R_\theta)$, such that	
\begin{align*}
\Ad U^\nu (R_\theta)\circ \sigma_\nu =\sigma_\nu\circ \Ad U(R_\theta).
\end{align*}
Furthermore, $\s_\nu$ is invertible, because $\s_\nu$ has index $1$.

Consider the following composition as a DHR representation:
\begin{align*}
\rho := \Ad U(\psi_t\circ \nu_\pi^{-1}) \circ \sigma_\nu \circ \Ad U(R_{\pi}) \circ \sigma_{\nu^{-1}}\circ \Ad U(R_{\pi})
\end{align*}
It follows that this is implemented by a unitary
$U_\rho := U(\psi_t\circ \nu_\pi^{-1})U^\nu(R_{\pi})U(R_{\pi})$ since
\begin{align*}
\rho &=\Ad U(\psi_t\circ \nu_\pi^{-1}) \circ \sigma_\nu \circ \Ad U(R_{\pi}) \circ \sigma_{\nu^{-1}}\circ \Ad U(R_{\pi}) \\
&=\Ad U(\psi_t\circ \nu_\pi^{-1}) \circ \Ad U^\nu(R_{\pi})\circ \sigma_\nu \circ \sigma_{\nu^{-1}}\circ \Ad U(R_{\pi}) \\
&=\Ad U(\psi_t\circ \nu_\pi^{-1}) \circ \Ad U^\nu(R_{\pi})\circ \Ad U(R_{\pi}).
\end{align*}

On the other hand, by construction we have that $\rho(x)=x$ for $x\in\mathcal{A}(I_{(-\pi,0)})$
where $I_{(-\pi,0)} = I_{(0,\pi)}'$
and $\rho(x)=\Ad U(\delta(t))$ for $x\in\mathcal{A}(I_{(0,\pi)})$.
Therefore, the unitary $U_\rho$ must belong to $\mathcal{A}(I_{(0,\pi)})$ by Haag duality.
At the same time, by the Bisognano-Wichmann property (CN\ref{bisognano}),
the dilation $\Ad U(\delta(t))$ is the modular automorphism $\sigma_{\A(I_{(0,\pi)})}^{t/2\pi}$
of $\mathcal{A}(I_{(0,\pi)})$ with respect to the vacuum vector $\Omega$.
This is a contradiction because the modular automorphisms for $t\neq 0$ cannot be inner for
$\mathcal{A}(I_{(0,\pi)})$ which is a type III factor
\cite[Theorem 10.29]{SZ79}.
Therefore, we conclude that $\sigma_\nu$ does not extend to a DHR representation.

Next, let $\nu_1, \nu_2 \in \diff(S^1,-1)$.
It holds that $r(\nu_1) = r(\nu_2)$, if and only if $r(\nu_1^{-1}\circ\nu_2) = 1$,
and by Proposition \ref{pr:psone} and the argument of the previous paragraphs,
if and only if $\nu_1^{-1}\circ\nu_2$ is implemented by a unitary, or in other words,
if and only if $\sigma_{\nu_1}$ and $\sigma_{\nu_2}$ are unitarily equivalent.

Finally, let us prove $\psonezero(S^1)$-covariance.
We already know from Theorem \ref{th:positivitysol} that any soliton is $\psoneone(S^1)$-covariant,
hence it is enough to show that $\s_\nu$ is dilation covariant.
This follows because $r(\d(t)\circ\nu\circ\d(-t)) = r(\nu)$,
hence $\s_\nu$ and $\s_{\d(t)\circ\nu\circ\d(-t)}$ are unitarily equivalent.
This unitary implements the dilation $\d(t)$.
\end{proof}

\paragraph{Type III solitons.}
Instead of considering functions $\nu\in\diff(S^1, -1)$, we can do a similar construction using a function
$\nu$ with the following properties:
\begin{itemize}
\item $\nu$ is smooth on $S^1\setminus \{-1\}$ and the left and right derivatives at all orders
exist at the point $-1$ and the first order left and right derivatives are nonzero.
\item $\nu$ is injective and orientation preserving.
\item $\nu(S^1\setminus \{-1\})$ is a proper interval $I_\nu$ of $S^1$.
\end{itemize} 
 
If we take such a $\nu$, $\sigma_\nu$ still yields a soliton of the conformal net $\mathcal{A}$,
since the arguments of Proposition \ref{pr:welldefsol} remain valid except for irreducibility and index.
This type of construction was implicitly presented in \cite{LX04} and \cite{KLX05},
namely, by taking the construction of soliton of the net $\A\otimes\A$ in \cite{LX04}
and restrict it to $\A\otimes \CC\1$.
In this case, one obtains solitons $\sigma_\nu$ which are of type III
(namely $\left(\bigcup_{I\subset {\overline{\I}_\RR}}\sigma_\nu(\mathcal{A}(I))\right)''$
is a type III factor).
For completeness we show that this type of construction also yields a proper soliton.
We further prove that such a soliton is locally $\widetilde{\D^s(S^1)}$-covariant with $s>3$,
namely, there is a unitary representation $U^\sigma$ of $\widetilde{\D^s(S^1)}$ such that
$\Ad U^\sigma(\g)(\A(I)) = \A(\g I)$ as long as $\g$ is contained in a neighborhood $\U$
of the unit element of $\widetilde{\D^s(S^1)}$ such that $\g I \subset \RR$ for $\g \in \U$.

\begin{theorem}\label{th:type3}
For $\nu$ as above, $\sigma_\nu$ is a proper soliton of type $\text{III}$,
locally $\widetilde{\D^s(S^1)}$-covariant with $s>3$.
For any pair $\nu_1, \nu_2$, two solitons $\sigma_{\nu_1}$ and $\sigma_{\nu_2}$ are unitarily equivalent.
\end{theorem}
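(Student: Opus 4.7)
The plan is to prove four things: that $\sigma_\nu$ is indeed a soliton, that its weak closure is a type III factor, that it is locally $\widetilde{\D^s(S^1)}$-covariant for $s>3$, that it does not extend to a DHR representation, and that all the $\sigma_\nu$'s are mutually equivalent. Most of these follow by adapting the arguments of Proposition \ref{pr:welldefsol} and Theorem \ref{th:typeIsol}, the genuinely new input being the equivalence.

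First I verify the soliton property. For any $I\in\overline{\I}_\RR$, the hypotheses on $\nu$ (smoothness on $S^1\setminus\{-1\}$, existence of one-sided derivatives of all orders at $-1$, nonzero first derivatives) allow one to apply the Borel-theorem construction of the paragraph before Proposition \ref{pr:welldefsol}, producing a $\nu_I\in\diff(S^1)$ agreeing with $\nu$ on $I$. Normality and compatibility then follow verbatim from Proposition \ref{pr:welldefsol}, since whenever $I\subset\hat I$ the diffeomorphism $\nu_{\hat I}^{-1}\circ\nu_I$ is localized in $I'$ and $\Ad U$ acts trivially by \eqref{eq:diffcov2}. To see type III, I show $\bigvee_{I\in\overline{\I}_\RR}\sigma_\nu(\A(I))=\A(I_\nu)$: the inclusion $\subset$ is immediate from $\nu(I)\subset I_\nu$; for $\supset$, an arbitrary subinterval $J\subset I_\nu$ with closure contained in the open interval $I_\nu$ can be written as $\nu(I)$ for some $I\in\overline{\I}_\RR$, and additivity concludes. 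Since $I_\nu$ is a proper interval, $\A(I_\nu)$ is a type III factor by (CN\ref{cn:irreducibility}).

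For local $\widetilde{\D^s(S^1)}$-covariance ($s>3$), I use that $\D^s(S^1)$ is implementable in any conformal net by \cite{CDIT18+}. Pick a neighborhood $\U$ of the identity in $\widetilde{\D^s(S^1)}$ and an $I\in\overline{\I}_\RR$ such that $\mathring\g\,I\subset\RR$ for $\g\in\U$. Define
\[
U^\sigma(\g)\;:=\;U\bigl(\widetilde{\nu\circ\mathring\g\circ\nu^{-1}}\bigr),
\]
where $\nu\circ\mathring\g\circ\nu^{-1}:I_\nu\to I_\nu$ is a Sobolev-class diffeomorphism of $I_\nu$ (for $\g$ close enough to the identity) extended by the identity outside $I_\nu$ to an element of $\D^s(S^1)$. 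The computation of Section \ref{representations} shows $\Ad U^\sigma(\g)(\sigma_\nu^I(x))=\sigma_\nu^{\mathring\g I}(\Ad U(\mathring\g)(x))$, since on $\nu(I)$ the diffeomorphism $\widetilde{\nu\circ\mathring\g\circ\nu^{-1}}$ equals $\nu\circ\mathring\g\circ\nu^{-1}$ and the defect is localized away from $\nu(I)$.

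For properness I adapt the modular-theoretic contradiction of Theorem \ref{th:typeIsol}. Assuming $\sigma_\nu$ extends to a DHR representation $\hat\sigma_\nu$, this extension is Möbius covariant by \cite[Theorem 6]{DFK04}. Combining an analogue of $\psi_t$ and rotations $R_\pi$ with the DHR implementers, I produce an automorphism of the vacuum sector which, by construction, is trivial on $\A(I_{(-\pi,0)})$ and acts as dilation on $\A(I_{(0,\pi)})$; by Haag duality the implementing unitary lies in $\A(I_{(0,\pi)})$, but by Bisognano--Wichmann it would implement the modular automorphism of the type III factor $\A(I_{(0,\pi)})$ at nonzero time, contradicting \cite[Theorem 10.29]{SZ79}. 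The extra care here is that $\nu^{-1}$ is only defined on $I_\nu$, so in place of $\sigma_{\nu^{-1}}$ I work with a local inverse soliton on $I_\nu$; concretely, for a subinterval $J$ of $I_\nu$ I use the extension of $\nu^{-1}|_J$ to $\diff(S^1)$ and glue.

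Finally, for equivalence, given $\nu_1,\nu_2$ as in the hypotheses, the composition $\nu_2\circ\nu_1^{-1}:I_{\nu_1}\to I_{\nu_2}$ is smooth in the interior and has matching one-sided derivatives of all orders at the endpoints by assumption, so Borel's theorem provides an extension $\rho\in\diff(S^1)$. Setting $V:=U(\rho)$, one has on $\A(I)$ for $I\in\overline{\I}_\RR$
\[
V\sigma_{\nu_1}^I(x)V^* \;=\; U(\rho\circ\nu_{1,I})\,x\,U(\rho\circ\nu_{1,I})^*,
\]
and since $\rho\circ\nu_{1,I}$ agrees with $\nu_{2,I}$ on $I$, the diffeomorphism $\nu_{2,I}^{-1}\circ\rho\circ\nu_{1,I}$ is localized in $I^c$; by \eqref{eq:diffcov2} the right-hand side equals $\sigma_{\nu_2}^I(x)$. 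I expect properness to be the main obstacle, precisely because one must work around the failure of $\nu$ to be a diffeomorphism of $S^1$ when importing the argument from Theorem \ref{th:typeIsol}.
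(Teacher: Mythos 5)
Your treatment of the soliton property, of type III (via $\bigvee_{I\in{\overline{\I}_\RR}}\sigma_\nu(\A(I))=\A(I_\nu)$), and of the unitary equivalence of the $\sigma_{\nu_i}$ is correct and the last of these coincides with the paper's argument. However, both remaining claims have genuine gaps. For local $\widetilde{\D^s(S^1)}$-covariance, the formula $U^\sigma(\g)=U(\widetilde{\nu\circ\mathring\g\circ\nu^{-1}})$ with ``extension by the identity outside $I_\nu$'' does not define an element of $\D^s(S^1)$: if $\mathring\g(-1)\neq-1$ (e.g.\ $\mathring\g$ an arbitrarily small rotation, so this cannot be avoided by shrinking $\U$), then $\nu\circ\mathring\g\circ\nu^{-1}$ is only defined on $I_\nu$ minus a point and ``wraps around'' the two ends of $I_\nu$, so the identity extension is not even a homeomorphism of $S^1$; and even when $\mathring\g(-1)=-1$, a mismatch of $\mathring\g'(-1)$ with $1$ destroys the $C^1$ (let alone $H^s$, $s>3$) regularity at $\partial I_\nu$. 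Moreover $U^\sigma$ must be a projective representation of all of $\widetilde{\D^s(S^1)}$, not just an assignment near the identity. The paper circumvents this by first proving that all $\sigma_\nu$ are equivalent, then specializing to the square-root map $\nu(e^{i\theta})=e^{i\theta/2}$ and setting $U^\sigma(\g)=U(q^{(2)}(\g))$, where $q^{(2)}:\widetilde{\D^s(S^1)}\to\D^s(S^1)^{(2)}\subset\D^s(S^1)$ is the embedding of the double cover; the correct implementer acts nontrivially on $I_\nu^{\,\prime}$ rather than by the identity.

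For properness, your plan to transplant the modular-theoretic contradiction of Theorem \ref{th:typeIsol} does not go through. That argument relies on $\sigma_\nu$ being an invertible automorphic soliton so that $\sigma_\nu\circ\Ad U(R_\pi)\circ\sigma_{\nu^{-1}}\circ\Ad U(R_\pi)$ is a globally defined endomorphism implemented by $\nu\circ R_\pi\circ\nu^{-1}\circ R_\pi$, and it produces a contradiction only because $r(\nu)\neq1$ forces a nontrivial dilation. Here $\sigma_\nu$ is not invertible ($\nu^{-1}$ exists only on $I_\nu$, and $R_\pi$ moves $I_\nu$ out of itself), the ``local inverse soliton and glue'' step is not actually constructed, and the theorem makes no hypothesis on the one-sided derivatives of $\nu$ at $-1$: for $\nu$ with equal one-sided derivatives (e.g.\ the square-root map) your putative dilation defect is trivial and no contradiction arises. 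The paper instead proves properness by a completely different mechanism: for $I\ni-1$ with $I\setminus\{-1\}=I_-\cup I_+$, the images $\nu(I_\pm)$ are separated, so the split property (CN\ref{split}) gives $\sigma_\nu(\A(I_-)\vee\A(I_+))\simeq\A(I_-)\otimes\A(I_+)$, while normality of a DHR extension on $\A(I)$ would force $\sigma_\nu(\A(I_-)\vee\A(I_+))\simeq\A(I_-)\vee\A(I_+)$, contradicting \cite[page 292, Example b)]{Buchholz74}. You would need to adopt an argument of this kind (or supply the missing construction in full and handle the $r(\nu)=1$ case) for the properness claim to stand.
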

\begin{proof}
We must show that the representation $\sigma_\nu$ does not extend to a representation of the net $\mathcal{A}$ of the circle. Let $I_-$ and $I_+$ two disjoint intervals which are made by removing the point $-1$
from an interval $I \ni -1$.
As $\nu(I_-)$ and $\nu(I_+)$ are separated by nonzero distance, by the split property (CN\ref{split}),
\begin{align*}
\sigma_\nu(\mathcal{A}(I_-)\vee\mathcal{A}(I_+))
&= \sigma_\nu^{I_-}(\mathcal{A}(I_-))\vee\sigma_\nu^{I_+}(\mathcal{A}(I_+)) \\
&= \mathcal{A}(\nu(I_-))\vee \mathcal{A}(\nu(I_+)) \\
&\simeq \mathcal{A}(\nu(I_-))\otimes \mathcal{A}(\nu(I_+)) \\
&=\sigma_\nu^{I_-}(\mathcal{A}(I_-))\otimes\sigma_\nu^{I_+}(\mathcal{A}(I_+)) \\
&\simeq \mathcal{A}(I_-)\otimes \mathcal{A}(I_+),
\end{align*}
where $\simeq$ means unitary equivalence mapping the algebra on the left (respectively right) to the algebra on the left
(respectively right).
Let us assume that $\sigma$ extended to a DHR representation.
Then it would be normal on $\A(I)$, and hence
$\sigma_\nu(\mathcal{A}(I_-)\vee\mathcal{A}(I_+))
\simeq \mathcal{A}(I_-)\vee \mathcal{A}(I_+)$.
But we know from \cite[page 292, Example b)]{Buchholz74}
that $\mathcal{A}(I_-)\vee\mathcal{A}(I_+)$ is not isomorphic to $\mathcal{A}(I_-)\otimes\mathcal{A}(I_+)$,
so this is a contradiction, and we conclude that $\sigma_\nu$ does not extend to a DHR representation.

Let us next show the unitary equivalence between $\sigma_{\nu_1}$ and $\sigma_{\nu_2}$.
This follows because $\nu_2 \circ \nu_1^{-1}$ is a diffeomorphism
from $I_{\nu_1}$ to $I_{\nu_2}$, which can be extended to $\nu_{1,2} \in \diff(S^1)$.
Now $U(\nu_{1,2})$ intertwines $\sigma_{\nu_1}$ and $\sigma_{\nu_2}$ because
by definition, on $\A(I)$ with $I \in {\overline{\I}_\RR}$,
\[
 \sigma_{\nu_2} = \Ad U(\nu_2^I) = \Ad U(\nu_{1,2})U(\nu_1^I) = \Ad U(\nu_{1,2}) \circ \sigma_{\nu1},
\]
where we used the fact that $\nu_{1,2}\circ \nu_1^I = \nu_2^I$ when
restricted to $I$.

With the result of the previous paragraph, in order to show local $\widetilde{\D^s(S^1)}$-covariance,
we can take a specific $\nu$. We take $\nu$ as the square root map, namely,
$\nu(e^{i\theta}) = e^{i\theta/2}$, where $\theta \in [-\pi, \pi)$.
Note that $\D^s(S^1)$ contains a copy of the $2$-cover $\D^s(S^1)^{(2)}$ of $\D^s(S^1)$,
and the square root map $\nu$ locally intertwines the action of $\D^s(S^1)^{(2)}$ on
the half-circle $[-\frac\pi 2, \frac \pi 2)$ and the action of $\D^s(S^1)$ on $S^1$
(see \cite[Section 2]{LX04}, \cite[Section 3.1 \S3]{WeinerThesis}).
Therefore, for $\g \in \widetilde{\D^s(S^1)}$ and $I \in {\overline{\I}_\RR}$ as in the definition of local covariance
and with the quotient map $q^{(2)}: \widetilde{\D^s(S^1)}\to\D^s(S^1)^{(2)}$,
$\Ad U(q^{(2)}(\g))\circ \sigma_{\nu} = \sigma_{\nu}\circ \Ad U(\g)$.
Namely, $\sigma_\nu$ is locally $\widetilde{\D^s(S^1)}$-covariant.
\end{proof}
As $\sigma_\nu$ is not irreducible, the implementation of $\widetilde{\D^s(S^1)}$ is not unique.
It is also possible to use the $n$-th root map in such a way that the covariance is
implemented by the restriction of $U$ to $\D^s(S^1)^{(n)}$.
The representation $\g \mapsto U(q^{(2)}(\g))$ used here has positive energy
by \cite[Corollary 3.6]{Weiner06} (applied to the vacuum representation).


\subsection{M\"obius covariance implies DHR}\label{DHR}
In Theorem \ref{th:type3} we saw that there is a proper, locally $\widetilde{\D^s(S^1)}$-covariant soliton
on any conformal net $\A$. In particular, it is locally $\uMob$-covariant.
In contrast, here we show that there is no locally $\mob$-covariant proper soliton.
The key is that $\Ad U^\sigma(R_{2\pi})$ is trivial in this case.
This is essentially contained in \cite[Lemma 2.6]{CHKLX15} and \cite[Proposition 19]{CKL08},
but we give a direct proof in our present setting.
\begin{proposition}\label{pr:mobDHR}
 Let $\s$ be a locally $\mob$-covariant soliton of a conformal net $\A$ with a representation $U^\s$
 of $\mob$.
 Then $\s$ extends to a DHR representation of $\A$.
\end{proposition}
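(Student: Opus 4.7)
The plan is to extend $\{\sigma_I\}_{I \in {\overline{\I}_\RR}}$ to a DHR family $\{\tilde\sigma_I\}_{I \in \I}$ by using $U^\sigma$ to rotate any interval whose closure contains $-1$ away from that point. For each $I \in \I$, I would pick $g \in \mob$ with $gI \in {\overline{\I}_\RR}$, which exists by the transitivity of $\mob$ on ordered pairs of distinct points of $S^1$, and define
\[
 \tilde\sigma_I(x) := \Ad U^\sigma(g)^{-1}\bigl(\sigma_{gI}(\Ad U(g)(x))\bigr), \qquad x \in \A(I),
\]
taking $g = e$ when $I \in {\overline{\I}_\RR}$, so that $\tilde\sigma$ is an extension of $\sigma$. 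Each $\tilde\sigma_I$ is normal as a composition of normal maps.

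The crucial step is well-definedness of $\tilde\sigma_I$, which reduces to the global covariance identity
\[
 \sigma_{hJ}(\Ad U(h)(y)) = \Ad U^\sigma(h)(\sigma_J(y)), \qquad y \in \A(J),
\]
for every $h \in \mob$ and every $J \in {\overline{\I}_\RR}$ with $hJ \in {\overline{\I}_\RR}$. I would prove this by connecting $h$ to $e$ inside the open subset $V_J := \{k \in \mob : kJ \in {\overline{\I}_\RR}\}$, which is nonempty and path-connected: the space of intervals in ${\overline{\I}_\RR}$ is connected, $\mob$ acts transitively on it, and the stabilizer of $J$ is a connected one-parameter subgroup, so a continuous lift plus an adjustment within the stabilizer produces a path from $e$ to $h$ in $V_J$. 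Given such a path, I would subdivide it into $e = k^{(0)}, k^{(1)}, \ldots, k^{(n)} = h$ so that each ratio $k^{(i)}(k^{(i-1)})^{-1}$ lies in the local covariance neighborhood $\U$ and each $k^{(i)} J$ stays in ${\overline{\I}_\RR}$, and then iterate local covariance along the subdivision. Because $\mathrm{Ad}\, U^\sigma$ descends to a genuine homomorphism $\mob \to \Aut(\B(\H))$ (the projective scalars dropping out under $\Ad$), the telescoped composition collapses to $\Ad U^\sigma(h)$; this is exactly where the hypothesis of $\mob$-covariance, rather than only $\widetilde{\mob}$-covariance, is essential, since the triviality of $\Ad U^\sigma(R_{2\pi})$ guarantees that the result depends only on $h \in \mob$ and not on the chosen path.

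The remaining DHR axioms are routine: compatibility $\tilde\sigma_{I_2}|_{\A(I_1)} = \tilde\sigma_{I_1}$ for $I_1 \subset I_2$ follows by choosing a common $g$ with $gI_2 \in {\overline{\I}_\RR}$ (so also $gI_1 \in {\overline{\I}_\RR}$) and invoking the compatibility of $\sigma$ on ${\overline{\I}_\RR}$. The main technical obstacle is thus the path-connectedness of $V_J$ together with the careful iteration of local covariance along the subdivision; once this is in place, verifying the DHR axioms is essentially bookkeeping.
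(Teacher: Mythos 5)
Your proof is correct and takes essentially the same route as the paper: extend $\sigma$ to $\A(I)$ with $-1\in\overline I$ by conjugating with a group element that moves $I$ into $S^1\setminus\{-1\}$, and obtain well-definedness by iterating local covariance along a path in $\mob$ keeping the interval in ${\overline{\I}_\RR}$, the hypothesis that $U^\s$ is a projective representation of $\mob$ itself (so that $\Ad U^\s(R_{2\pi})$ is trivial) being exactly what makes the telescoped product depend only on the endpoint. The only difference is that the paper uses rotations $R_\theta$ alone, which trivializes your path-connectedness and lifting step (the relevant set of angles is visibly one of two arcs); also note that your $V_J$ is closed rather than open, though this is harmless since only path-connectedness and compactness of the chosen path are used.
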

\begin{proof}
 To extend $\s$ to $\A(I)$ where $I$ contains $-1$, we choose a rotation $R_\theta$
 such that $R_\theta I$ does not contain $-1$.
 We define $\s^I := \Ad U^\sigma(R_{-\theta})\circ \s \circ \Ad U(R_\theta)$,
 and show that this is well-defined.
 Indeed, assume that there are two such $0 < \theta_1, \theta_2 < 2\pi$.
 Then, either the rotation from $0$ to $\theta_1-\theta_2$ or the rotation $0$ to $\theta_1 - \theta_2 + 2\pi$
 brings $R_{\theta_2}I$ to $R_{\theta_1}I$ inside $S^1 \setminus \{-1\}$.
 As $R_{2\pi}$ is a scalar, without the loss of generality,
 we may consider the case of $\theta_1 - \theta_2$.
 By the assumption of locally $\mob$-covariance,
 \begin{align*}
  \Ad U^\sigma(R_{-{\theta_1}})\circ \s \circ \Ad U(R_{\theta_1})
  &= \Ad U^\sigma(R_{-{\theta_2}})U^\sigma(R_{-{\theta_1+\theta_2}})\circ \s \circ \Ad U(R_{\theta_1-\theta_2})U(R_{\theta_2}) \\
  &= \Ad U^\sigma(R_{-{\theta_2}})\circ \s \circ \Ad U(R_{\theta_2}),
 \end{align*}
 hence the definition does not depend on $\theta$ under the condition above.
 The compatibility condition follows from this, because if $\theta$ can be used for a larger interval,
 it works also for a smaller interval.
\end{proof}

There are also many irreducible locally $\uMob$-covariant solitons:
consider an inclusion of conformal nets $\A\subset \B$, take
a certain DHR representation $\rho$ of $\A$ and its $\a$-induction to $\B$.
For a generic $\rho$, the $\a$-induction is not a DHR representation
but just a soliton, and often such soliton can be decomposed into irreducible ones.
For concrete examples, see e.g.\! \cite[Section 5]{CHKLX15}
(this is due to Sebastiano Carpi, we thank him for his comments).

\section{Applications to infinite-dimensional groups}\label{examples}

Some conformal nets arise from a particular representation (often called the ``vacuum representation'')
of infinite-dimensional groups. Our construction in Section \ref{nonsmoothsoliton}
gives rise to a new class of representation of certain subgroups of them.

\subsection{The Virasoro net and representations of \texorpdfstring{$B_0$}{B0}}\label{sec:B0}

The Virasoro net with central charge $c$ is the conformal net induced by
the stress-energy tensor $T_{c,0}$ in the vacuum representation $\mathcal{H}(c,0)$ of the Virasoro algebra $\vir$:
\[
 \vir_c(I)=\{e^{iT_{c,0}(f)} : f\in C^{\infty}(S^1), \text{real-valued}, \text{supp}f\subset I\}^{\prime\prime}.
\]
With the lowest weight vector $\Omega_c$ and the unitary projective representation of $U_c$
of $\diff(S^1)$, $(\vir_c, U_c, \Omega_c)$ is a conformal net, see \cite[Section2.4]{Carpi04}.

The solitons in Section \ref{nonsmoothsoliton} give rise to positive energy representations of $B_0$
which do not extend to positive-energy representations of $\diff(S^1)$.
Let $\nu\in\diff(S^1,-1)$.
For any $\g \in B_0$, $\nu\circ \g\circ \nu^{-1}$ is $C^1$,
as the discontinuity of the first derivative of $\nu$ at the point of infinity gets cancelled.
We set
\begin{align*}
\alpha_\nu: B_0&\rightarrow \psone(S^1)\\
\g&\mapsto \nu\circ \g \circ \nu^{-1}.
\end{align*}
Clearly $\alpha_\nu$ is an homomorphism of $B_0$ into $\psone(S^1)$,
and $U_c$ extends to $\psone(S^1)$ by Proposition \ref{pr:psone} (although we do not know continuity).
We construct a projective unitary representation $U_c^\nu$ of $B_0$ by
\begin{align}\label{eq:repB0}
U_c^\nu(g)\coloneqq (U_c\circ \alpha_\nu)(g).
\end{align}
\begin{proposition}\label{pr:diffr}
Let $\nu\in\diff(S^1,-1)$ and assume that $r(\nu) \neq 1$.
The representation $U_c^\nu$ defined in \eqref{eq:repB0}
is not a restriction of any positive-energy representation of $\diff(S^1)$.
In addition, $U_c^{\nu_1}\simeq U_c^{\nu_2}$ if and only if $r(\nu_1)=r(\nu_2)$.
\end{proposition}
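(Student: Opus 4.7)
The key observation is that $U_c^\nu$ is, up to phase, the unique implementer of $B_0$-covariance for the soliton $\sigma_\nu$ constructed in Theorem~\ref{th:typeIsol}. Indeed, for $g \in B_0$ the composition $\nu \circ g \circ \nu^{-1}$ lies in $\psone(S^1)$ because the possible discontinuity of $\nu'$ at $-1$ cancels under conjugation by any $g \in B_0$ with single-valued first derivative at $-1$; hence $U_c^\nu(g) = U_c(\nu g \nu^{-1})$ is well-defined by Proposition~\ref{pr:psone}. Using the identity $\nu g\nu^{-1}\circ\nu_I = \nu_{gI}\circ g$ on $I$ together with the defining formula for $\sigma_\nu$, one obtains
\[
\Ad U_c^\nu(g)\bigl(\sigma_\nu^I(x)\bigr) = \sigma_\nu^{gI}\bigl(\Ad U_c(g)(x)\bigr), \quad x\in\A(I),
\]
so $U_c^\nu$ implements the $B_0$-covariance of $\sigma_\nu$ on $\mathcal{H}(c,0)$, uniquely up to a phase by irreducibility of $\sigma_\nu$.

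For the non-extendability, suppose for contradiction that there is a positive-energy projective representation $V$ of $\diff(S^1)$ on $\mathcal{H}(c,0)$ with $V|_{B_0}=U_c^\nu$. Combined with $\sigma_\nu$, the rotations $V(R_\phi)$ would give $\sigma_\nu$ a locally $\mob$-covariant structure. Verifying the intertwining $\Ad V(R_\phi)(\sigma_\nu^I(x)) = \sigma_\nu^{R_\phi I}(\Ad U_c(R_\phi)(x))$ for $I, R_\phi I\in{\overline{\I}_\RR}$ is done by localizing: pick $g\in B_0$ coinciding with $R_\phi$ on a neighborhood of $I$, so that $V(g) = U_c(\nu g\nu^{-1})$ implements the $B_0$ intertwining from the previous paragraph, while $V(R_\phi)V(g)^{-1}$ acts trivially on $\sigma_\nu(\A(I))$ because $R_\phi\circ g^{-1}$ is supported away from $I$. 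Proposition~\ref{pr:mobDHR} would then extend $\sigma_\nu$ to a DHR representation of $\A$, contradicting the properness asserted in Theorem~\ref{th:typeIsol}. The subtlety here is that $V$ is only prescribed on $B_0$, so the rotation covariance must be pinned down indirectly via localization.

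For the equivalence claim, if $r(\nu_1)=r(\nu_2)$ then the unitary $W$ from Theorem~\ref{th:typeIsol} intertwining the solitons $\sigma_{\nu_i}$ also intertwines the $B_0$-covariance implementers $U_c^{\nu_i}$ up to a character $\chi:B_0\to\mathrm{U}(1)$; this character is absorbed into a projective phase rescaling of $W$ (using connectedness of $B_0$ and continuity on its translation and dilation subgroups from Theorem~\ref{th:positivitysol}), yielding $U_c^{\nu_1}\simeq U_c^{\nu_2}$. Conversely, given $W$ intertwining the $B_0$-representations, one propagates the equality $W\sigma_{\nu_1}^I W^* = \sigma_{\nu_2}^I$ across $B_0$-orbits in ${\overline{\I}_\RR}$ via the covariance formula of the first paragraph, reducing verification to a convenient base interval $I_0$ where the explicit formulas $\sigma_{\nu_i}^{I_0}(x)=\Ad U_c(\nu_{i,I_0})(x)$ can be matched using the action of $W$ and $U_c^{\nu_i}$; Theorem~\ref{th:typeIsol} then forces $r(\nu_1)=r(\nu_2)$. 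The most delicate step is this converse direction, where one must reconstruct the soliton equivalence from the $B_0$-representation intertwiner, relying on the fact that $B_0$ acts transitively on the three orbit types in ${\overline{\I}_\RR}$ (bounded intervals, $\RR_+$, $\RR_-$) so that checking one representative per orbit suffices.
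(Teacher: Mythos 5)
Your high-level strategy is the same as the paper's: turn a hypothetical positive-energy extension $V$ of $U_c^\nu$ into a local rotation covariance of the soliton $\sigma_\nu$, then conclude via Proposition \ref{pr:mobDHR} and Theorem \ref{th:typeIsol}, and reduce the equivalence claim to the unitary equivalence of the solitons. Your first paragraph (that $U_c^\nu$ implements the $B_0$-covariance of $\sigma_\nu$) and the treatment of the equivalence claim are essentially sound, up to phase bookkeeping.

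The gap is in the localization step. You assert that $V(R_\phi)V(g)^{-1}$ acts trivially on $\sigma_\nu(\A(I))$ ``because $R_\phi\circ g^{-1}$ is supported away from $I$''. But $R_\phi\circ g^{-1}$ sends $-1$ to $R_\phi(-1)\neq -1$, so it does not lie in $B_0$; the hypothesis prescribes $V$ only on $B_0$, and an abstract positive-energy representation of $\diff(S^1)$ is not assumed to act geometrically (or locally) on the net or on the soliton. Support considerations therefore say nothing about $\Ad V(R_\phi\circ g^{-1})$. One can try to repair this using only the projective group law, $\Ad V(R_\phi)(V(\gamma)) = c\,V(R_\phi\circ\gamma\circ R_{-\phi})$ for $\gamma$ and $R_\phi\circ\gamma\circ R_{-\phi}$ both in $B_0$ --- this is in fact the paper's route --- but then the undetermined phases $c$ become the issue: the two normal homomorphisms $\Ad V(R_\phi)\circ\sigma_\nu^I$ and $\sigma_\nu^{R_\phi I}\circ\Ad U_c(R_\phi)$ agreeing on the unitary generators $e^{itT(f)}$ only up to characters in $t$ is not enough to conclude that they coincide (the self-adjoint generators could differ by additive constants), and by irreducibility no other choice of implementer can do better. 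The paper closes this by first showing that $U_c^\nu$ is irreducible (via (CN\ref{cn:irreducibility})), so that $V$ must be a $(c',h')$-representation by the classification of irreducible positive-energy representations of $\diff(S^1)$, and then controlling the phases via Proposition \ref{pr:covariance}: the Schwarzian cocycle $\beta(R_\theta,f)$ vanishes because $R_\theta$ is a M\"obius transformation. Without these two inputs --- the classification and the resulting exact (phase-including) conjugation covariance --- your derivation of the local $\mob$-covariance of $\sigma_\nu$ does not go through.
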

\begin{proof}
 The representation $U_c^\nu$ is irreducible, since $\bigvee_{I\in {\overline{\I}_\RR}} \A(I) = \B(\H)$
 by (CN\ref{cn:irreducibility}) and the left-hand side is generated by
 $\{U_c(\gamma):\supp \gamma \Subset \RR\} = \{U_c^\nu(\gamma):\supp \gamma \Subset \RR\}$.
 Suppose that $U_c^\nu$ were a restriction of a positive-energy representation of $\diff(S^1)$.
 By irreducibility, it would have to be a $(c',h')$-representation of Section \ref{smooth}.

 Any $(c',h')$-representation $U_{c',h'}$  is a projective representation of $\diff(S^1)$, in particular,
 the $2\pi$-rotation is a scalar.
 Furthermore, it holds\footnote{This equation holds including phase: 
 $\Ad U_{c',h'}(R_\theta)$ does not give phase by Proposition \ref{pr:covariance},
 where the phase $\beta(R_\theta, f)$ vanishes since $R_\theta$ is a M\"obius transformation,
 and exponentials $\Exp(f)$ generate the whole group $\diff(S^1)$, since $\diff(S^1)$ is algebraically simple \cite{Mather74}.}
 that $\Ad U_{c',h'}(R_\theta)(U_c^\nu(\gamma)) = U_c^\nu(R_\theta\circ\gamma\circ R_{-\theta})$
 as long as $\theta$ is small enough so that $R_\theta\, \supp \gamma \subset S^1\setminus\{-1\}$,
 because $U_{c',h'}$ is an extension of the projective representation $U_c^\nu$.
 Therefore, the restriction of it to $\mob$ makes
 the soliton $\sigma_\nu$ of $\vir_c$ locally $\mob$-covariant.
 By Proposition \ref{pr:mobDHR}, $\sigma_\nu$ would extend to a DHR representation,
 but  this contradicts with Theorem \ref{th:typeIsol}.
 This shows that $U_c^\nu$ does not extend to any $(c',h')$-representation.
 
 The claim about unitary equivalence also follows from Theorem \ref{th:typeIsol},
 by passing to $\sigma_{\nu_1}$ and $\sigma_{\nu_2}$.
\end{proof}

In Section \ref{uone}, we show that the representations $U_n$, where $n$ is a positive integer, extend to
to $\D^s(S^1), s>2$ by continuity.  As $\D^s(S^1)$ includes $\psone(S^1)$ if $2<s<5/2$, the representations $U^\nu_n$ are strongly continuous when $n\in\ZZ_+$ and
the solitons $\s_\nu$ are continuously covariant with respect to $B_0$, see Proposition \ref{pr:continuityu}.

\subsection{Loop group nets and representations of \texorpdfstring{$\Lambda G$}{Lambda G}}
We review the loop group nets, following \cite[Section III.2]{KoesterThesis}.
These results are based on \cite{GW84, Kac90, GF93, Toledano-Laredo99-1}.

Let $G$ be a simple, compact, connected and simply connected Lie group.
The group of smooth maps from $S^1$ to $G$ is denoted by $LG$.   
With $\Lambda G$ we denote the group of smooth maps $\mathbb{R}\rightarrow G$ with compact support.
This group $\Lambda G$ is identified through Cayley transform
with the subgroup of $LG$ of elements whose support does not contain $-1$.

Lie algebra $L\mathfrak{g}$ consisting of smooth maps from $S^1$ to $\mathfrak{g}$
with the pointwise operation is called the loop algebra, and it is the Lie algebra of
the loop group $LG$ in the infinite-dimensional sense (see \cite{PS86}).
A 2-cocycle of $L\mathfrak{g}$ is a bilinear form $\omega:L\mathfrak{g}\times L\mathfrak{g}\rightarrow \RR$ such that
\begin{align*}
\omega([x,y],z)+\omega([y,z],x)+\omega([z,x],y)=0.
\end{align*}
With such an $\omega$ it is possible to construct a central extension of $\widetilde{L\mathfrak{g}}$
of $L\mathfrak{g}$ by a one-dimensional centre: as a vector space,
$\widetilde{L\mathfrak{g}}=L\mathfrak{g}\oplus\RR$ with bracket
\begin{align*}
[(x,a),(y,b)]\coloneqq ([x,y],\omega(x,y))
\end{align*}
with $x,y\in L\mathfrak{g}$ and $a,b\in\RR$.
If $\mathfrak{g}$ is simple, every continuous $G$-invariant 2-cocycle $\omega$ has the form
\begin{align*}
\omega(x,y)=\frac{1}{2\pi}\int_0^{2\pi}\langle x(\theta),y'(\theta)\rangle d\theta
\end{align*}
where $\langle\cdot,\cdot\rangle$ is a symmetric invariant form on $\mathfrak{g}$,
which is unique up to a scalar.

The constant functions in $L\mathfrak g$ can be identified with $\mathfrak g$.
Let us fix a basis $\{j_0^a\}$ in $\mathfrak g$.
The complexification $\widetilde{L\mathfrak{g}}_\CC$ of $\widetilde{L\mathfrak{g}}$ contains
functions $e^{in\theta}$ multiplied with an element $j^a_0$ of the basis in $\mathfrak g$.
Let us denote these elements by $j^a_n$. They satisfy the commutation relation
\[
 [j_m^a, j_n^b] = if^{ab}_c j_{m+n}^c + \omega(j^a_0, j^b_0)m\delta_{m,-n}.
\]
One can construct the ``vacuum'' representations $\pi^G_{\ell,0}$, namely, a representation which contains
a vector $\Omega$ such that $\pi^G_{\ell,0}(j_n^a)\Omega = 0$ for $n \ge 0$.
The central element $(0,a)$ is represented by a scalar $\ell$, which is called the level.
One can introduce a scalar product on this representation space in such a way that
$\pi^G_{\ell,0}(j_n^a)^\dagger = \pi^G_{\ell,0}(j_{-n}^a)$ and $\|\Omega\| = 1$.
Such a scalar product is positive definite if and only if the level $\ell$ is positive integer.
Furthermore, for a real element $f \in \widetilde{L\mathfrak{g}}$,
$\pi^G_{\ell,0}(f)$ is an essentially self-adjoint operator on the domain
generated by $\Omega$ and $\{\pi^G_{\ell,0}(j_{-n}^a)\}$.

A projective unitary representation $V$ of a group on a Hilbert space $\H$ is a map
from the group in  $\U(\H)$ such that $V(g)V(h)=c(g,h)V(gh)$ for some scalar $c(g,h)$.
The vacuum representation of $L\mathfrak g$ at level $\ell$
integrates to a projective unitary representations of $LG$:
for a real element $f \in \widetilde{L\mathfrak{g}}$,
it holds that $V_{\ell,0}(\Exp f) = e^{i\pi^G_{\ell,0}(f)}$ up to a scalar.
Furthermore, there is a projective unitary representation $U$ of $\diff(S^1)$
such that $U(\g)V_{\ell,0}(g)U(\g)^* = V_{\ell,0}(g\circ \g^{-1})$.

A projective unitary representation $V$ of $LG$ on $\H$ is said to have positive energy
if there exists a strongly continuous unitary representation $U$ of the rotation group $\mathbb{T}$
on the same Hilbert space with positive generator such that
\begin{align*}
U(R_\theta)V(g)U(R_\theta)^*=V(g_\theta),
\end{align*} 
where $g_\theta(e^{i\varphi})\coloneqq g(e^{i(\varphi-\theta)})$.
Correspondingly, we say that a projective unitary representation $V$ of $\Lambda G$ has positive energy
if there exists a strongly continuous unitary representation $U$ of the translation group $\RR$
such that
\begin{align*}
U(\tau_t)V(g)U(\tau_t)^* = V(g_t)
\end{align*}
where $g_t(s)=f(s-t)$.
We have \cite[Proposition 9.2.6]{PS86}:
\begin{proposition}
The restriction to $\Lambda G$ of a positive energy representation of $LG$ is a positive energy representation of $\Lambda G$.
\end{proposition}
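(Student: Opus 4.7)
The plan is to extend the rotation representation inherent in the positive energy assumption to a full M\"obius representation on $\H$ intertwining $V$, and then to invoke the Koester equivalence between positivity of the rotation generator and positivity of the translation generator (cited right after the axioms in Section \ref{preliminaries}).

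Concretely, let $V$ be a positive energy representation of $LG$ on $\H$ with rotation representation $U_\mathbb{T}$ of positive generator. The Sugawara construction, applied on each irreducible summand of $V$ and implicit in the passage above concerning the vacuum sectors $V_{\ell,0}$, produces a projective unitary representation $U$ of $\mob$ on $\H$ whose restriction to the rotation subgroup agrees with $U_\mathbb{T}$ (up to a projective scalar on each summand, absorbable into the projective ambiguity of $U$) and which satisfies the full M\"obius covariance
\[
  U(\gamma)\,V(g)\,U(\gamma)^{*} \;=\; V(g\circ\gamma^{-1}), \qquad g\in LG,\ \gamma\in \mob.
\]
The rotation generator of $U$ inherits positivity from that of $U_\mathbb{T}$.

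The translation subgroup $\{\tau(t)\}\subset \mob$ fixes the point $-1\in S^{1}$ and, via the Cayley transform, acts on $\RR$ as $s\mapsto s+t$; hence it preserves $\Lambda G\subset LG$, and for $g\in\Lambda G$ one has $(g\circ\tau(-t))(s)=g(s-t)=g_{t}(s)$. Setting $U_{\RR}(t):=U(\tau(t))$ therefore yields a strongly continuous one-parameter unitary group satisfying
\[
 U_{\RR}(t)\,V(g)\,U_{\RR}(t)^{*} \;=\; V(g_{t}), \qquad g\in\Lambda G.
\]
By \cite[Proposition 1]{Koester02}, positivity of the rotation generator of the M\"obius representation $U$ is equivalent to positivity of the generator of the translation subgroup, so $U_{\RR}$ has positive generator. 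This exhibits $V|_{\Lambda G}$ as a positive energy representation of $\Lambda G$.

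The only nontrivial step is the M\"obius (or $\diff(S^{1})$) extension of $U_{\mathbb{T}}$ compatible with $V$; this is classical for positive energy representations of loop groups and is implicit in the paper's own discussion of $V_{\ell,0}$, so no essentially new obstacle arises beyond reducing to the irreducible summands to invoke the Sugawara construction.
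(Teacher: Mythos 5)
Your argument is essentially correct, but note that the paper does not prove this proposition at all: it simply quotes it from \cite[Proposition 9.2.6]{PS86}. Your route --- extend the given rotation action to a projective positive-energy representation $U$ of $\mob$ intertwining $V$ (via Sugawara on the irreducible summands, which is the content of \cite[Theorem 7.4]{GW84}, invoked by the paper itself in the very next proof), observe that the translation subgroup of $\mob$ preserves $\Lambda G$ and implements $g\mapsto g_t$, and then transfer positivity from the rotation generator to the translation generator by \cite[Proposition 1]{Koester02} --- is a legitimate self-contained derivation, entirely in the spirit of the tools the paper deploys elsewhere. Two points should be made explicit, though neither is fatal. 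First, reducing to irreducible summands uses complete reducibility of positive-energy representations of $LG$, and applying Sugawara uses that each summand is a lowest-weight representation at positive integral level; both are part of the classification theory in \cite{PS86, GW84}, so your proof leans on heavier machinery than Pressley--Segal's own Section 9.2 argument, though it is not circular. Second, the claim that the rotation generator of $U$ ``inherits positivity from $U_\mathbb{T}$'' is imprecise: the Sugawara $L_0$ and the given generator of $U_\mathbb{T}$ agree only up to an undetermined additive scalar on each summand, so positivity is not literally inherited. What saves the argument is that the Sugawara $L_0$ is nonnegative on every lowest-weight summand for intrinsic reasons (it is a normally ordered sum of squares, with lowest eigenvalue $C_2(\lambda)/2(\ell+h^\vee)\ge 0$), so the direct sum is still bounded below by $0$ and Koester's equivalence applies. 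With that correction the proof goes through.
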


Let $V^G_{\ell,0}$ be the vacuum representation of level $\ell$, which has positive energy
with respect to $U$.
With the family of von Neumann algebras 
\begin{align*}
\A_{G,\ell}(I)\coloneqq \left\lbrace V^G_{\ell,0}(g):\supp g\subset I\right\rbrace''
\end{align*} 
$(\A_{G,\ell}, U, \Omega)$ is a conformal net. They are called the {\bf loop group nets}
with group $G$ at level $\ell$.

\begin{proposition}
There exist irreducible positive energy representations of $\Lambda G$ which do not extend to positive energy representations of $LG$.
\end{proposition}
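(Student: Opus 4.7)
The plan is to mimic the strategy of Proposition \ref{pr:diffr} for $B_0$, this time transferring an irreducible proper soliton of the loop group net to a representation of $\Lambda G$. Fix $\nu\in\diff(S^1,-1)$ with $r(\nu)\ne 1$ and consider the proper irreducible soliton $\sigma_\nu$ of $\A_{G,\ell}$ provided by Theorem \ref{th:typeIsol}. For $g\in\Lambda G$, its support sits inside some bounded interval $I\in{\overline{\I}_\RR}$, so I would \emph{define}
\[
  V^\nu(g)\coloneqq \sigma_\nu^I(V^G_{\ell,0}(g)).
\]
The compatibility $\sigma_\nu^{\hat I}\restriction_{\A_{G,\ell}(I)}=\sigma_\nu^I$ established in Proposition \ref{pr:welldefsol} makes this independent of the choice of $I$, and multiplicativity up to a projective phase follows because for $g_1,g_2\in\Lambda G$ one can pick a single bounded $I$ containing $\supp g_1\cup\supp g_2$ and use that $\sigma_\nu^I$ is a $*$-homomorphism of $\A_{G,\ell}(I)$.

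Next I would verify the two required structural properties of $V^\nu$. \emph{Irreducibility}: the algebras $\{V^G_{\ell,0}(g):\supp g\Subset I\}''=\A_{G,\ell}(I)$ with $I\in{\overline{\I}_\RR}$ generate, via $\sigma_\nu$, the algebra $\bigvee_I\sigma_\nu(\A_{G,\ell}(I))$, which equals $\B(\H)$ by the irreducibility statement of Proposition \ref{pr:welldefsol}; hence $V^\nu(\Lambda G)''=\B(\H)$. \emph{Positive energy}: by Theorem \ref{th:positivitysol} the translation subgroup acts continuously on $\sigma_\nu$ through unitaries $U^\sigma(\tau_t)$ with positive generator, and the intertwining property of $\sigma_\nu$ yields
\[
  V^\nu(g_t)=\sigma_\nu(U(\tau_t)V^G_{\ell,0}(g)U(\tau_t)^*)=U^\sigma(\tau_t)V^\nu(g)U^\sigma(\tau_t)^*,
\]
so $(V^\nu,U^\sigma|_{\RR})$ is a positive-energy representation of $\Lambda G$.

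For the non-extendability, I would argue by contradiction. Suppose $V^\nu$ were the restriction to $\Lambda G$ of a positive-energy projective representation $\tilde V$ of $LG$ on $\H$. By the Sugawara/Toledano-Laredo extension of positive-energy $LG$-representations \cite{Toledano-Laredo99-1,PS86} (together with the irreducibility of $V^\nu$, which forces $\tilde V$ to be a sum of vacuum-type irreducibles), $\tilde V$ extends to a projective representation of $LG\rtimes\widetilde{\diff(S^1)}$; in particular there is a projective representation $U^{\tilde V}$ of $\uMob$ with $\Ad U^{\tilde V}(\g)(\tilde V(g))=\tilde V(g\circ\g^{-1})$ for $g\in LG$, $\g\in\uMob$. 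Restricting to $g\in\Lambda G$ and to $\g\in\uMob$ small enough that $g\circ\g^{-1}\in\Lambda G$, and using $V^G_{\ell,0}(g\circ\g^{-1})=\Ad U(\g)(V^G_{\ell,0}(g))$, one obtains
\[
  \Ad U^{\tilde V}(\g)\circ\sigma_\nu(x)=\sigma_\nu\circ\Ad U(\g)(x),\qquad x\in\A_{G,\ell}(I),\ \g I\subset\RR,
\]
on the generators $x=V^G_{\ell,0}(g)$, and then on all of $\A_{G,\ell}(I)$ by normality. Since the centre of $\uMob$ acts as a scalar in each irreducible positive-energy summand, this is exactly local $\mob$-covariance of $\sigma_\nu$. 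Proposition \ref{pr:mobDHR} would then force $\sigma_\nu$ to extend to a DHR representation of $\A_{G,\ell}$, contradicting the properness established in Theorem \ref{th:typeIsol}. Unitary inequivalence of $V^{\nu_1}$ and $V^{\nu_2}$ when $r(\nu_1)\ne r(\nu_2)$ can also be read off from the inequivalence of the corresponding solitons in Theorem \ref{th:typeIsol}.

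The delicate step is the transfer of positive-energy Möbius covariance from $\tilde V$ to $\sigma_\nu$: one must invoke that every positive-energy representation of $LG$ integrates, via Sugawara, to a $\uMob$-action implementing the geometric reparametrisation of loops, and ensure that the resulting unitaries $U^{\tilde V}(\g)$ have the local covariance property with respect to $\sigma_\nu$ that feeds into Proposition \ref{pr:mobDHR}. The remaining steps are bookkeeping once this classical fact is accepted.
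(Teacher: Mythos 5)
Your proposal is correct and follows essentially the same route as the paper: define the representation as $\sigma_\nu\circ V^G_{\ell,0}$, obtain positive energy from Theorem \ref{th:positivitysol}, and derive the contradiction by upgrading a hypothetical $LG$-extension to local $\mob$-covariance of $\sigma_\nu$ (the $2\pi$-rotation being a scalar since a positive-energy representation of $LG$ carries a genuine representation of the rotation group) and then invoking Proposition \ref{pr:mobDHR} against the properness of Theorem \ref{th:typeIsol}. The only cosmetic difference is that the paper cites Goodman--Wallach for the automatic diffeomorphism covariance of irreducible positive-energy $LG$-representations where you cite the Sugawara/Toledano-Laredo construction.
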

\begin{proof}
Fix a level $\ell$ and consider the conformal net $\A_{G,\ell}$.
Then we can construct a representation $V^{G,\nu}_{\ell,0}$ of $\Lambda G$ by
\begin{align*}
V^{G,\nu}_{\ell,0}:=\s_\nu\circ V^G_{\ell,0},
\end{align*}
where $\s_\nu$ is a proper soliton of the conformal net $\A_{G,\ell}$ with $\nu\in\diff(S^1,-1)$
as in Section \ref{nonsmoothsoliton}.
By Theorem \ref{th:positivitysol}, $V^{G,\nu}_{\ell,0}$ it has positive energy.

Suppose that $V^{G,\nu}_{\ell,0}$ were the restriction $V$ of a positive energy representation of $LG$.
Then $V^{G,\nu}_{\ell,0}$ would also be irreducible as a representation of $LG$ as in Proposition \ref{pr:diffr}.
Such $V$ must have positive energy by \cite[Theorem 7.4]{GW84}, namely,
there is a projective unitary representation $U^\nu$ of $\diff(S^1)$ whose restriction to the rotations
has positive generator and $U^\nu(\g)V(g)U^\nu(\g) = V(g\circ \g^{-1})$.
As the restriction of $V$ to $\Lambda G$ is $V^{G,\nu}_{\ell,0}$,
$U^\nu$ makes $V^{G,\nu}_{\ell,0}$ locally $\diff(S^1)$-covariant, especially it is locally $\mob$-covariant
(and not just locally $\uMob$-covariant, namely the $2\pi$-rotation is trivial).
Accordingly, the soliton $\s_\nu$ is also locally $\mob$-covariant.
By Proposition \ref{pr:mobDHR}, it should extend to a DHR representation of the net.
This contradicts with
Theorem \ref{th:typeIsol}. Therefore, $V^{G,\nu}_{\ell,0}$ does not arise from the restriction
of any positive energy representation of $LG$.
\end{proof}

The existence of such representation has been marked as an open problem in \cite[P.174, Remark]{PS86}.
The type III soliton from Section \ref{nonsmoothsoliton} gives another such representation, since
it cannot be locally $\mob$-covariant as we saw in Theorem \ref{th:type3} and Proposition \ref{pr:mobDHR}.

\section{Sobolev diffeomorphism covariance of the \texorpdfstring{$\mathrm{U}(1)$}{U(1)}-current net}\label{uone}
Here we take the $\mathrm{U}(1)$-current net where there is the criterion of Shale-Steinspring
to determine whether an automorphism of the algebra can be unitarily implemented.
Indeed, following the strategy of \cite{Vromen13},
we show that $\D^s(S^1)$-diffeomorphisms are implemented with $s > 2$, and this group
includes $\psone(S^1)$ and $\diff^3(S^1)$ and also $C^1$-piecewise M\"obius group \cite{WeinerThesis}.

Let $\K$ be a complex Hilbert space with the scalar product $\<\cdot,\cdot\>$
The $C^*$-algebra generated by the operators $W(f)$, $f\in\K$,
satisfying the relations $W(f)W(g)=e^{-i\Im \<f,g\>/2}W(f+g)$ and $W(0)=\1$ is called the CCR algebra.
There is a representation of the CCR algebra on the symmetric Fock space
$\G_+(\K) = \bigoplus \mathrm{Sym}\K^{\otimes n}$
with the Fock vacuum $\Omega$ and $W(f)\Omega = \sum_n \bigoplus_j \frac1{n!} f^{\otimes n}$.
We denote the Weyl operators in this representation with the same $W(f)$.
If $f\in\K$ and $A$ is a {\it real linear}, invertible operator on $\K$
which preserves the symplectic bilinear form $\im\<\cdot,\cdot\>$,
then the map $W(f)\mapsto W(Af)$ is a *-automorphism of the CCR algebra.
Such a *-automorphism is unitary implemented in the Fock representation
if and only if
$\frac12 J[A,J]$ is an Hilbert-Schmidt operator, where $J$ is the multiplication by the imaginary unit \cite[Theorem 4.1]{Shale62}.
We partly use the conventions of \cite[Section 5.3]{Ottesen95}.
For any $f\in\K$ the Weyl operators $W(f)$ on $\Gamma_+(\K)$
satisfy strong continuity:
if $f_n\rightarrow f$ in $\K$ then $\Vert(W(f_n)-W(f))\xi\Vert\rightarrow 0$ for every $\xi\in\Gamma_+(\K)$.

Let $C^{\infty}(S^1,\RR)$ be the space of real-valued smooth function on $S^1$.
We define a seminorm on it by
\begin{align}\label{eq:uonenorm}
\Vert f\Vert \coloneqq\sum_{k\in\mathbb{N}}k|\hat{f}_k|^2.
\end{align}
We introduce a complex structure on $C^{\infty}(S^1,\RR)$ by means of the operator $J$:
\begin{align*}
J\left(\sum_{k\in\mathbb{Z}\setminus\lbrace 0\rbrace}f_k e_k\right)\coloneqq \sum_{k\in\mathbb{N}}(if_k)e_k+\sum_{k\in\mathbb{N}}(-if_{-k})e_{-k},
\end{align*}
where $e_k(e^{i\theta})\coloneqq e^{ik\theta}$.
The space $C^{\infty}(S^1,\RR)$ quotiented by the null space with respect to the norm
${\lbrace f\in C^{\infty}(S^1,\RR):\Vert f\Vert=0\rbrace}$
is equipped with the complex structure $J$.
With $J$ as the imaginary unit, the quotiented space becomes the complex Hilbert space $\H_1$.
This space admits the irreducible unitary representation $U_1$ of $\mob = \psl2r$ with lowest weight 1.
The action of $\mob$ on $C^{\infty}(S^1,\RR)$
\begin{align*}
U_1(\g)(f)\coloneqq f\circ\g^{-1}
\end{align*} 
extends to $\H_1$.
For a function $f\in C^{\infty}(S^1,\mathbb{R})$ we denote with $[f]$ its image in $\H_1$.
The seminorm is induced by the complex scalar product
\begin{align*}
\<f,g\>\coloneqq \frac{1}{2}\sum_{k\in\mathbb{N}}k(\overline{\hat{f}_k}\hat{g}_k+\hat{f}_k\overline{\hat{g}_k}).
\end{align*}
The M\"obius group acts on $\Gamma_+(\H_1)$
via the second quantization, and we denote it by $U(\g)\coloneqq \G_+(U_{1}(\g))$.
The adjoint action of $U(\g)$ on the Weyl operators is particularly simple:
\begin{align*}
\Ad U(\g)W([f])=W(U_1(\g)[f]).
\end{align*}

The family of von Neumann algebras
\begin{align*}
\mathcal{A}_{U(1)}(I)\coloneqq \lbrace W([f]):f\in C^{\infty}(S^1,\mathbb{R}),\text{ }\supp(f)\subset I\rbrace''
\end{align*}
with the Fock vacuum vector $\Omega \in \Gamma_+(\H_1)$ and
the representation $U$ is a M\"obius covariant net \cite{GLW98}.
The representation $U$ of $\psl2r$ can be extended to a projective representation $U$ of
$\diff(S^1)$ in such a way that $\A_{U(1)}$ is actually a conformal net, see \cite[Theorem 9.3.1]{PS86}.
We show that $U$ can be extended to $\D^s(S^1), s > 2$.

In the following, elements in the universal covering $\widetilde{\D^s(S^1)}$ are considered as
maps $\tilde \g$ from $\RR \to \RR$ such that $\tilde \g(\theta + 2\pi) = \tilde \g(\theta) + 2\pi$.
\begin{lemma}\label{lm:lambda}
Let $\g\in\D^s(S^1)$, $s>3/2$, the image of $\tilde{\g}\in\widetilde{\D^s(S^1)}$ through the covering map and $\lambda_{m,n}\coloneqq \frac{1}{2\pi}\int_0^{2\pi} e^{-im\theta}e^{in\tilde{\g}(\theta)}d\theta$,
where $m,n$ are either $m<0, n>0$ or $m>0, n<0$. Then there exists $C_{s,\g}\ge 0$ such that 
\begin{align*}
|\lambda_{m,n}|\le \frac{C_{s,\g}}{\left(|m|+|n|\right)^{s-1}}.
\end{align*}
\end{lemma}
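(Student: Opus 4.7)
The inequality is a non-stationary phase estimate, to be obtained by integration by parts.

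Set $\phi(\theta) := n\tilde\g(\theta) - m\theta$. The first step is a lower bound on the phase derivative. Since $s > 3/2$ gives the Sobolev embedding $H^s(S^1) \hookrightarrow C^1(S^1)$ and $\g$ is an orientation-preserving diffeomorphism, $\tilde\g'$ is a continuous $2\pi$-periodic function bounded below by some $\delta_\g > 0$. Under the hypothesis $mn < 0$, the two summands of $\phi'(\theta) = n\tilde\g'(\theta) - m$ share the same sign, so
\begin{align*}
|\phi'(\theta)| = |m| + |n|\,\tilde\g'(\theta) \geq c_\g(|m|+|n|), \qquad c_\g := \min(1,\delta_\g).
\end{align*}

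The second step is integration by parts. Writing $e^{i\phi} = (i\phi')^{-1}(e^{i\phi})'$ and using that both $\phi'$ and $e^{i\phi}$ are $2\pi$-periodic (the latter because $m,n\in\ZZ$ and $\tilde\g(\theta+2\pi)=\tilde\g(\theta)+2\pi$), the boundary contributions cancel and one obtains
\begin{align*}
\lambda_{m,n} = \frac{i}{2\pi}\int_0^{2\pi}\frac{d}{d\theta}\!\left(\frac{1}{\phi'(\theta)}\right)e^{i\phi(\theta)}d\theta.
\end{align*}
For smooth $\tilde\g$ this equals $-\frac{i}{2\pi}\int n\tilde\g''(\phi')^{-2}e^{i\phi}d\theta$, and the lower bound on $|\phi'|$ combined with $|n|\leq |m|+|n|$ yields $|\lambda_{m,n}| \leq C_\g/(|m|+|n|)$. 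Iterating the identity $k$ times gives $|\lambda_{m,n}| \leq C_{k,\g}/(|m|+|n|)^k$ for smooth $\tilde\g$, with constant controlled by $\delta_\g^{-1}$ and polynomially by $\|\tilde\g\|_{C^k}$.

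The third step is passing to the fractional exponent $s-1$ for $\tilde\g$ merely in $H^s$. I would interpolate between the trivial estimate $|\lambda_{m,n}| \leq 1$ and the integer-order bounds obtained from smooth approximations $\tilde\g_j \to \tilde\g$ in $H^s$, using real interpolation of Sobolev scales and the fact that the constants depend continuously on $\|\tilde\g\|_{H^s}$ and $\delta_\g$. Equivalently, one may view $\lambda_{m,n}$ as the $m$-th Fourier coefficient of $e^{in\tilde\g(\cdot)}$ and reinterpret the integration-by-parts identity as a Sobolev duality pairing, with $1/\phi'$ lying in $H^{s-1}$ and norm of order $(|m|+|n|)^{-1}$.

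The principal obstacle is the genuinely fractional range $3/2 < s < 5/2$: here $\tilde\g''$ is only a distribution in $H^{s-2}$, so the pointwise identity above must be reinterpreted distributionally. The technical heart of the argument is the estimate $\|(1/\phi')'\|_{H^{s-2}} \leq C_{s,\g}(|m|+|n|)^{-(s-1)}$, which would follow from standard algebra and composition estimates in fractional Sobolev spaces applied to $1/(n\tilde\g' - m)$, together with the uniform lower bound from the first step. Balancing the $|n|$-dependent growth of Sobolev norms of $e^{in\tilde\g}$ against the gain $(|m|+|n|)^{-1}$ per integration by parts is what pins the exponent to exactly $s-1$.
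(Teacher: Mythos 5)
Your first two steps (the lower bound $|\phi'|\ge c_\g(|m|+|n|)$ and the integration by parts in the smooth case) are correct, but the entire difficulty of the lemma sits in your third step, and the key estimate you propose there does not add up. Take $\tilde\g=\id+\epsilon\sin$ as a test case: then $(1/\phi')'=-n\tilde\g''/(\phi')^2$ has $L^2$-norm of order $(|m|+|n|)^{-1}$, so for $s>2$ its $H^{s-2}$-norm is \emph{at least} $(|m|+|n|)^{-1}$, contradicting your claimed bound $\|(1/\phi')'\|_{H^{s-2}}\le C(|m|+|n|)^{-(s-1)}$. For $3/2<s<2$ the claimed bound is weaker and may hold, but then pairing with $\|e^{i\phi}\|_{H^{2-s}}\sim(|m|+|n|)^{2-s}$ yields only $(|m|+|n|)^{3-2s}$, which coincides with the target $(|m|+|n|)^{1-s}$ exactly at $s=2$ and is worse otherwise. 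The point is that the extra decay must come from the oscillation of $e^{i\phi}$ tested against the distribution $\tilde\g''$; a norm bound on $(1/\phi')'$ alone discards that oscillation. Your fallback via interpolation between the trivial bound and integer-order bounds for smooth approximants $\tilde\g_j$ is also not automatic: the constants $C_{k,\g_j}$ blow up as $\tilde\g_j\to\tilde\g$ in $H^s$ whenever $s<k+1$, and one would need a quantitative $K$-functional argument balancing that blow-up against the approximation rate. So as written the proof has a genuine gap precisely in the fractional range that the lemma is about.

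The paper avoids all of this with a change of variables (Segal's trick): for $t=|n|/(|m|+|n|)$ the convex combination $\tilde\g_t=t\tilde\g+(1-t)\id$ satisfies $-m\theta+n\tilde\g(\theta)=\pm(|m|+|n|)\tilde\g_t(\theta)$, and substituting $\varphi=\tilde\g_t(\theta)$ exhibits $\lambda_{m,n}$ as the $\mp(|m|+|n|)$-th Fourier coefficient of $(\tilde\g_t^{-1})'$. Since $(\tilde\g_t^{-1})'\in H^{s-1}(S^1)$ with norm bounded uniformly in $t\in[0,1]$, a single Fourier coefficient at frequency $k$ is bounded by $\|(\tilde\g_t^{-1})'\|_{H^{s-1}}|k|^{-(s-1)}$, which is exactly the claim --- no stationary phase, product estimates, or interpolation needed. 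If you want to salvage your route, you would have to carry out the paraproduct/oscillatory-pairing analysis honestly; the change-of-variables argument is the far shorter path.
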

\begin{proof}
As in the proof of \cite[Proposition 5.3]{Segal81}, consider the path $\tilde\g_t$ in $\widetilde{\D^s(S^1)}$:
\[
 [0,1]\ni t\mapsto \tilde{\g}_t\coloneqq t\tilde{\g}+(1-t)\id\in\widetilde{\D^s}(S^1).
\]
This is indeed a path in $\widetilde{\D^s(S^1)}$, because $\tilde \g_t'(\theta) > 0$.

For $0\le t \le 1,$ we have $\left(\tilde{\g}_t^{-1}\right)'\in H^{s-1}(S^1)$ by \cite[Theorem B.2(ii), Lemma B.1]{IKT13}.
From the definition of the norm $\|f\|_{s-1} = \left(\sum_k (1+k^2)^{s-1}|\hat f_k|^2\right)^\frac12$
where $\hat f_k$ is the $k$-th Fourier component, it follows that
\[
 \left|\widehat{\left(\tilde{\g}_t^{-1}\right)'}_{\mp(|m|+|n|)}\right|
 \le \frac{\left\Vert\left(\tilde{\g}_t^{-1}\right)'\right\Vert_{s-1}}{\left(|m|+|n|\right)^{s-1}}
 \le \frac{2\pi C_{s,\g}}{\left(|m|+|n|\right)^{s-1}},
\]
where $\sup_t \left\{\left\Vert\left(\tilde{\g}^{-1}_t\right)'\right\Vert\right\}
=: 2\pi C_{s,\g}$ which is finite, because $t\mapsto \tilde{\g}_t$ is continuous in $\D^s(S^1)$
and their first derivatives are uniformly separated from $0$.

By setting $t=\frac{|n|}{|m|+|n|}$, with $+$ corresponding to the case $m<0,n>0$ and $-$ corresponding to $m>0,n<0$,
we have
\begin{align*}
\lambda_{m,n}=\frac{1}{2\pi}\int_0^{2\pi} e^{\pm i(|n|+|m|)\tilde{\g}_t(\theta)}d\theta=\frac{1}{2\pi}\int_0^{2\pi} e^{\pm i(|n|+|m|)\varphi}\left(\tilde{\g}_t^{-1}\right)'(\varphi)d\varphi
= \frac1{2\pi}\widehat{\left(\tilde{\g}_t^{-1}\right)'}_{\mp(|m|+|n|)},
\end{align*}
therefore, $|\lambda_{m,n}| \le \frac{C_{s,\g}}{\left(|m|+|n|\right)^{s-1}}$ as desired.
\end{proof}

Note that the map $V(\g)[f]\coloneqq [f\circ\g^{-1}]$ for $\g\in\D^s(S^1)$ is well-defined,
because the kernel of $[\cdot]$ is the constant functions and they remain constant after
composition by $\g^{-1}$.

In order to estimate the Hilbert-Schmidt norm of $A_{V(\g)}\coloneqq \frac{1}{2}J[V(\g),J]$,
note that $A_{V(\g)}$ is anti-complex linear \cite[Section 5.3]{Ottesen95}, namely,
$A_{V(\g)}J = -JA_{V(\g)}$.
Therefore, its Hilbert-Schmidt norm on the complex Hilbert space
$\H_1$ is just the half of its Hilbert-Schmidt norm on $\H_1$ as a real Hilbert space.
If we put the norm defined by \eqref{eq:uonenorm} on $C^\infty(S^1, \CC)$,
we obtain a complex Hilbert space $\H_1^\CC$ which is naturally isomorphic to the direct sum
of two copies of $\H_1$, where the complex structure is given by $J$ above.
This space $\H_1^\CC$ has the basis $\left\{\frac1{\sqrt k}e_k\right\}_{k \in \ZZ, k\ne 0}$,
where $e_m(\theta) = e^{im\theta}$,
and the operator $A_{V(\g)}$ can be extended diagonally and its Hilbert-Schmidt norm
on $\H_1$ as a real linear operator is the same as its Hilbert-Schmidt norm
on $\H_1^\CC$.
\begin{proposition}\label{pr:implementation}
Let $\g\in\D^s(S^1)$, $s>2$.
Then there is a unitary operator $U(\g)$ which implements the action
to the CCR algebra corresponding to the map $V(\g)$, namely,
$\Ad U(\gamma)(W([f])) = W([f\circ\gamma^{-1}])$, for $f\in C^{\infty}(S^1,\RR)$.
\end{proposition}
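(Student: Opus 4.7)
The plan is to apply the Shale--Stinespring criterion stated just before the proposition: we must exhibit $V(\g)[f]\coloneqq[f\circ\g^{-1}]$ as an invertible real-linear symplectic operator on $\H_1$ and then verify that $A_{V(\g)}=\frac12 J[V(\g),J]$ is Hilbert--Schmidt. Well-definedness and invertibility on $\H_1$ follow from Lemma~\ref{lm:sobolevcomp} applied to both $\g,\g^{-1}\in\D^s(S^1)$, using that the seminorm \eqref{eq:uonenorm} is dominated by $\|\cdot\|_{H^s}$ for $s>\tfrac12$. Preservation of the symplectic form $\im\<\cdot,\cdot\>$, which on real representatives reduces to a multiple of $\int f g'\,d\theta$, is immediate by change of variables.

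For the Hilbert--Schmidt estimate, a direct computation using $J^2=-1$ shows that $A_{V(\g)}$ agrees with the $J$-antilinear part $V_-\coloneqq\tfrac12(V(\g)+JV(\g)J)$ of $V(\g)$, in accordance with the anti-complex-linearity remark made in the excerpt. Extending $V(\g)$ diagonally to $\H_1^\CC$ with its orthonormal basis $\{|k|^{-1/2}e_k\}_{k\ne 0}$, on which $J$ acts by multiplication by $i\,\sign(k)$, $V_-$ is precisely the part of the extended operator whose matrix elements connect indices of opposite sign. Explicitly,
\[
(V_-)_{m,n}=\sqrt{\tfrac{|m|}{|n|}}\,\mu_{m,n}\qquad(mn<0),\qquad
\mu_{m,n}\coloneqq\frac{1}{2\pi}\int_0^{2\pi}e^{-im\theta}e^{in\widetilde{\g^{-1}}(\theta)}\,d\theta,
\]
and since $\g^{-1}\in\D^s(S^1)$, Lemma~\ref{lm:lambda} (applied to $\g^{-1}$) gives $|\mu_{m,n}|\le C_{s,\g}(|m|+|n|)^{1-s}$ whenever $mn<0$.

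Thus the Hilbert--Schmidt norm squared of $A_{V(\g)}$ is bounded, up to a constant depending on $s$ and $\g$, by
\[
\sum_{m,n>0}\frac{m}{n(m+n)^{2(s-1)}},
\]
together with the analogous sum from the sign-swapped quadrant. Using $m\le m+n$ bounds the summand by $1/(n(m+n)^{2s-3})$; summing in $m$ produces a term of order $n^{4-2s}$, and the residual sum $\sum_n n^{3-2s}$ converges exactly when $s>2$. The Shale--Stinespring criterion then yields the desired unitary $U(\g)$ implementing $\Ad U(\g)(W([f]))=W([f\circ\g^{-1}])$.

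The main obstacle is the sharpness of the threshold: both the decay in Lemma~\ref{lm:lambda} and the $\sqrt{|m|/|n|}$ factor in the matrix elements are essentially tight, so $s>2$ is exactly the border of summability of the double sum. Consequently the integral comparison in the last step must be arranged carefully, as there is no slack to absorb errors from less precise estimates, and it is this combinatorial convergence, rather than the preceding algebraic identifications, that is the quantitative heart of the proof.
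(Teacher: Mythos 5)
Your proof is correct and follows essentially the same route as the paper's: the Shale criterion, the off-diagonal matrix elements $\sqrt{|m|/|n|}\,\lambda_{m,n}$ on the basis $\{|k|^{-1/2}e_k\}$ of $\H_1^\CC$, the decay from Lemma~\ref{lm:lambda}, and the same convergence threshold $s>2$ (you sum iteratively in $m$ then $n$ where the paper substitutes $p=|m|+|n|$, but these are equivalent). Your explicit application of Lemma~\ref{lm:lambda} to $\g^{-1}$ rather than $\g$ is in fact slightly more careful than the paper's notation, which writes $\lambda_{m,n}$ for $\<e_m,V(\g)e_n\>_{L^2}$ even though $V(\g)e_n=e_n\circ\g^{-1}$.
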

\begin{proof}
Let $f,g\in C^{\infty}(S^1,\RR)$. The (real) symplectic bilinear form
$\sigma([f],[g])\coloneqq \Im \<f,g\>$
can be written as follows:
\begin{align*}
\sigma([f],[g])=\frac{1}{4\pi}\int_0^{2\pi} f(e^{i\theta})g'(e^{i\theta})d\theta.
\end{align*}
As $\g\in\D^s(S^1)$, $s>2$, $\g$ is in $\diff^1(S^1)$ and
the map $V(\g)$ preserves the symplectic form $\sigma(\cdot,\cdot)$.

Following \cite[Theorem 24]{Vromen13}, we only need to show that the Hilbert-Schmidt norm of the operator
$[V(\g),J]$ is finite. As remarked above,
we can compute it on $\H_1^\CC$ with the basis $\left\{\frac1{\sqrt k}e_k\right\}_{k \in \ZZ, k\neq 0}$.
The scalar product $\left\<\frac1{\sqrt m}e_m, \frac12 J[V(\g),J]\frac1{\sqrt{n}}e_n\right\>$
vanishes when $m>0,n>0$ or $m<0,n<0$.
The remaining cases are $m<0, n>0$ and $m>0, n<0$ and
\begin{align*}
 \left|\left\<\frac1{\sqrt m}e_m, \frac12 J[V(\g),J]\frac1{\sqrt{n}}e_n\right\>\right|^2
&= \frac{|m|}{|n|} \left|\<e_m, V(\g)e_n\>_{L^2(S^1)}\right|^2 \\
&= \frac{|m|}{|n|} |\lambda_{m,n}|^2.
\end{align*}
With $A_{V(\g)} = \frac{1}{2}J[V(\g),J]$, by Lemma \ref{lm:lambda} we have
\begin{align*}
\frac14\Vert A_{V(\g)}\Vert_{\mathrm{HS}}^2=\sum_{m>0,n<0}\frac{|m|}{|n|}|\lambda_{m,n}|^2
\le \sum_{m>0,n<0}\frac{|m|}{|n|}\frac{C_{s,\g}^2}{\left(|m|+|n|\right)^{2(s-1)}}.
\end{align*}
Let $p\coloneqq |m|+|n|$, then
\begin{align*}
\sum_{m>0,n>0}\frac{m}{n\left(m+n\right)^{2(s-1)}}=\sum_{p>0}\frac{1}{p^{2(s-1)}}\sum_{n=1}^{p-1}\frac{p-n}{n}\le \sum_{p>0}\frac{p-1}{p^{2(s-1)}}\sum_{n=1}^{p-1}\frac{1}{n}\le \sum_{p>0}\frac{(p-1)\left(2+\log(p)\right)}{p^{2(s-1)}}
\end{align*}
which converges if $s>2$, therefore, $\Vert A_{V(\g)}\Vert_{\mathrm{HS}}^2 < \infty$.
\end{proof}

\begin{theorem}\label{th:repcontinuity}
The map $\alpha:\D^s(S^1)\rightarrow\Aut(\B(\G_+(\H_1)))$ such that $\g\mapsto \a_\g \coloneqq\Ad U(\g)$
is pointwise strongly continuous if $s>2$.
\end{theorem}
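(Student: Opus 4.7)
The plan is to reduce the statement to strong continuity of $\g \mapsto U(\g)$ (up to an irrelevant phase), because for unitary operators $U_n \to U$ strongly one also has $U_n^* \to U^*$ strongly (via $(U_n^* - U^*)\psi = U_n^*(U - U_n)U^*\psi$), and therefore
\[
 \|U_n X U_n^*\psi - UXU^*\psi\| \le \|X\|\,\|(U_n^* - U^*)\psi\| + \|(U_n - U)XU^*\psi\| \longrightarrow 0
\]
for every bounded $X$. Since $\Ad U(\g)$ does not see the phase of $U(\g)$, it suffices to produce, for each $\g_n \to \g_0$ in $\D^s(S^1)$, phases $\chi_n \in \mathbb{T}$ such that $\chi_n U(\g_n) \to U(\g_0)$ strongly on $\Gamma_+(\H_1)$.

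For strong convergence on $\Gamma_+(\H_1)$ it is enough to test on the total family $\{W([f])\Omega : f\in C^\infty(S^1,\RR)\}$. Using Bogoliubov covariance,
\[
 U(\g_n)\,W([f])\Omega = W(V(\g_n)[f])\,U(\g_n)\Omega.
\]
The first factor converges strongly to $W(V(\g_0)[f])$: by Lemma \ref{lm:sobolevcomp}, $f\circ\g_n^{-1} \to f\circ\g_0^{-1}$ in $H^s(S^1)$, hence in $\H_1$ (whose norm is controlled by the $H^{1/2}$-norm, and so by any $H^s$-norm with $s>2$), and Weyl operators depend strongly continuously on their argument. The remaining task is to show $U(\g_n)\Omega \to U(\g_0)\Omega$ in norm after a suitable choice of phase. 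For this I would invoke the explicit Shale--Stinespring formula: if $V$ is symplectic with $A_V = \tfrac{1}{2}J[V,J]$ Hilbert--Schmidt and complex-linear part $T_V = \tfrac12(V - JVJ)$ invertible, then the implementing unitary applied to $\Omega$ is given by an expression (see \cite[\S5.3]{Ottesen95}) depending norm-continuously on $T_V$ in the operator norm and on $A_V$ in the Hilbert--Schmidt norm.

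The heart of the argument is therefore the convergence $\|A_{V(\g_n)} - A_{V(\g_0)}\|_{\mathrm{HS}} \to 0$. In the basis $\{k^{-1/2}e_k\}_{k\neq 0}$, the nonzero matrix entries of $A_{V(\g)}$ are (up to the weights $\sqrt{|m|/|n|}$ that appeared in the proof of Proposition \ref{pr:implementation}) the numbers $\lambda_{m,n}(\g)$ of Lemma \ref{lm:lambda}. Since $\g_n \to \g_0$ in $\D^s(S^1) \subset C^1(S^1)$, $\tilde\g_n \to \tilde\g_0$ uniformly, and dominated convergence on the defining integral gives $\lambda_{m,n}(\g_n) \to \lambda_{m,n}(\g_0)$ for each pair $(m,n)$. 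The bound from Lemma \ref{lm:lambda} is uniform in $n$:
\[
 |\lambda_{m,n}(\g_n)| \le \frac{C}{(|m|+|n|)^{s-1}},
\]
because $C_{s,\g_n}$ is controlled by $\sup_{0\le t\le 1}\|(\tilde\g_{n,t}^{-1})'\|_{H^{s-1}}$, which stays bounded along the convergent sequence $\g_n\to\g_0$ in $\D^s(S^1)$ by continuity of inversion and composition in $\D^s(S^1)$ and the uniform lower bound on first derivatives. Since the dominating weighted series $\sum_{m>0,\,n<0}\tfrac{|m|}{|n|}(|m|+|n|)^{-2(s-1)}$ converges for $s>2$, as shown at the end of the proof of Proposition \ref{pr:implementation}, the Lebesgue dominated convergence theorem on $\mathbb{Z}^2$ with counting measure yields the required Hilbert--Schmidt convergence, and analogous (easier) arguments give operator-norm convergence of $T_{V(\g_n)} \to T_{V(\g_0)}$.

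The main technical obstacle will be the phase bookkeeping in the Shale--Stinespring formula and the verification that $T_{V(\g_0)}$ is invertible with $T_{V(\g_n)}^{-1}$ uniformly bounded, which is what makes the formula for $U(\g)\Omega$ jointly continuous. A convenient way to bypass phase choices altogether is to argue directly at the level of normal states: it suffices to prove that $\langle \xi, \Ad U(\g_n)(W([f])) \eta\rangle \to \langle \xi, \Ad U(\g_0)(W([f])) \eta\rangle$ for $\xi,\eta$ in the total family $W([g])\Omega$, which boils down to continuity of the characteristic functional on $\H_1$, and then extend from Weyl operators to arbitrary $X \in \B(\Gamma_+(\H_1))$ using $\|\Ad U(\g)(X)\| = \|X\|$ and the fact that the $C^*$-algebra generated by the Weyl operators is strongly dense in $\B(\Gamma_+(\H_1))$.
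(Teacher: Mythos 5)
Your overall strategy is genuinely different from the paper's. The paper never proves convergence of the implementers $U(\g_n)$ themselves: it only shows $\Ad U(\g_n)(W([f]))=W([f\circ\g_n^{-1}])\to W([f\circ\g^{-1}])$ strongly (which requires no information about $U(\g_n)\Omega$), and then passes from the linear span $\W$ of Weyl operators to all of $\B(\G_+(\H_1))$ by a Baire-category argument: for fixed $A$ and a dense sequence $\{\xi_n\}$, the maps $\g\mapsto\Ad U(\g)(A)\xi_n$ are pointwise limits (via Kaplansky) of continuous functions, hence continuous off a meager set by the Baire--Osgood theorem, and continuity at one point propagates everywhere by the homomorphism property. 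You instead aim at the stronger statement that $\g\mapsto U(\g)$ is continuous into $\mathcal{U}(\G_+(\H_1))/\mathbb{T}$ --- essentially Corollary \ref{cr:diffext}, which the paper deduces \emph{from} the theorem rather than the other way around --- and your opening computation correctly shows that this would make the passage to arbitrary $X\in\B(\G_+(\H_1))$ immediate. The reduction to convergence of $U(\g_n)\Omega$ on the total family $\{W([f])\Omega\}$, and the dominated-convergence proof of $\|A_{V(\g_n)}-A_{V(\g_0)}\|_{\mathrm{HS}}\to 0$ from pointwise convergence of the $\lambda_{m,n}$ plus the uniform majorant of Lemma \ref{lm:lambda}, are both sound.

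Two steps, however, fail as stated. First, $T_{V(\g_n)}\to T_{V(\g_0)}$ in operator norm is false: for rotations $\g_n=R_{1/n}\to\id$ one has $V(R_{1/n})e_k=e^{-ik/n}e_k$, so $A_{V(\g_n)}=0$, $T_{V(\g_n)}=V(\g_n)$ and $\|T_{V(\g_n)}-\1\|=\sup_{k\neq 0}|e^{-ik/n}-1|=2$ for every $n$; composition operators converge only strongly. So you cannot invoke a Shale--Stinespring vacuum formula that is ``norm-continuous in $T_V$''. The repair is that the formula depends on $T_V$ only through $Z_V=A_VT_V^{-1}$ and $\det(\1-Z_V^*Z_V)$, and HS-convergence of $Z_{V(\g_n)}$ already follows from HS-convergence of $A_{V(\g_n)}$ together with \emph{strong} convergence of $T_{V(\g_n)}^{-1}$ and of its adjoint (using $\g_n^{-1}\to\g_0^{-1}$ in $\D^s(S^1)$) and the uniform bound on $\|T_{V(\g)}^{-1}\|$ coming from the symplectic relation $T_V^*T_V=\1+A_V^*A_V\ge\1$: writing $A$ in its Schmidt decomposition, $\|A(C_n-C)\|_{\mathrm{HS}}^2=\sum_j s_j^2\|(C_n^*-C^*)v_j\|^2\to 0$ by dominated convergence. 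Second, the ``bypass'' in your last paragraph hits exactly the difficulty the theorem is about: pointwise strong (or weak) convergence of the uniformly bounded automorphisms $\Ad U(\g_n)$ on a strongly dense $*$-subalgebra does \emph{not} extend to all of $\B(\G_+(\H_1))$ by density, because in $\Ad U(\g_n)(A-A_m)\xi=U(\g_n)(A-A_m)U(\g_n)^*\xi$ the vector $U(\g_n)^*\xi$ moves with $n$, so the approximation is not uniform in $n$ unless one already controls $U(\g_n)^*$ --- which is what you set out to prove. This is precisely the gap the paper closes with Baire--Osgood. Keep the main route (convergence of the implementers, with the repair above) and drop the bypass.
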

\begin{proof}
Let $f\in C^{\infty}(S^1,\RR)$ and $\lbrace\g_n\rbrace\subset\D^s(S^1)$ a sequence converging to $\g$ in $\D^s(S^1)$. Recall that $C^{\infty}(S^1,\RR)\subset H^s(S^1)$ for every $s$ and that if $f\in H^s(S^1)$, $s\ge 1/2$,
then $\Vert f\Vert\le\Vert f\Vert_s$, where $\Vert f\Vert \coloneqq\sum_{k\in\mathbb{N}}k|\hat{f}_k|^2$.
By Lemma \ref{lm:sobolevcomp}, the map $(f,\g)\mapsto f\circ\g^{-1}$ is continuous for $s>3/2$.
Using Proposition \ref{pr:implementation} and the strong continuity of the Weyl operators,
it follows that for $s>2$, the map $\alpha_{\g_n}(W([f]))\rightarrow\alpha_{\g}(W([f]))$, $f\in C^{\infty}(S^1,\RR)$.

Let $\W$ be the linear span of Weyl operators $W([f])$. By the previous paragraph, we have
$\lim_{n\rightarrow \infty} \Ad U(\gamma_n)(x) = \Ad U(\gamma)(x)$ in the strong topology for every $x\in \W$,
and $\W$ is dense in $\B(\G_+(\H_1))$ in the strong operator topology.
Now let $\{\xi_n\}\subset\Gamma_+(\H_1)$ be a dense sequence. Let $A\in B(\Gamma_+(\H_1))$.
By Kaplanski's density theorem we can choose a sequence $\{A_m\}\subset \W$ such that $A_m\rightarrow A$ strongly.
Thus we have for every $\xi_n$
\[
 \lim_{m\rightarrow \infty} \Ad U(\gamma)(A_m)\xi_n = \Ad U(\gamma)(A)\xi_n,
\]
i.e.\! $f_n(\gamma):= \Ad U(\gamma)(A)\xi_n$ is the pointwise limit of $f_{n,m}(\gamma):=\Ad U(\gamma)(A_m)\xi_n$.
Note that $\D^s(S^1)$ is a Baire space, since it is an open set of a complete metric space
\cite[Lemma B.2, cf.\! Corollary 2.1(ii)]{IKT13}.
By Baire-Osgood's theorem \cite[Theorem 11.20]{Carothers00}\cite{PW-Baire-Osgood} applied to the maps $f_{n,m}$ and $f_n$
from a Baire space $\D^s(S^1)$ into the Hilbert space $\G_+(\H_1)$,
we get that the set
\[
 D(f_n):=\{\gamma\in \D^s(S^1): f_n\text{\hspace{1mm} is not continuous in } \gamma\}
\]
is meager. Thus also $\bigcup_n D(f_n)$ is meager. It follows that $\D^s(S^1)\setminus \bigcup_n D(f_n)$
is nonempty and hence there is $\gamma_0 \in \D^s(S^1)$ for which all $f_n$ are continuous.
Since $\{\xi_n\}$ is dense,
\begin{align*}
\gamma\mapsto \Ad U(\gamma)(A)\xi\eqqcolon f^A_{\xi}(\g)
\end{align*}
is continuous at $\gamma_0$ for every $\xi\in\G_+(\H_1)$. 
Set $\g_1\coloneqq \g_0^{-1}\g$, then 
\begin{align*}
g^A_{\xi}(\g_1)\coloneqq \Ad U(\g_1)(A)\xi = \Ad [U(\g_0)^*U(\g)](A)\xi = U(\g_0^{-1})f^A_{U(\g_0)\xi}(\g)
\end{align*}
converges to $U(\g_0)^*f^A_{U(\g_0)\xi}(\g_0)$ as $\g \to \g_0$ for every $A\in\B(\G_+(\H_1))$ and for every $\xi\in\G_+(\H_1)$.
In other words, the map $\g_1 \mapsto \Ad U(\g_1)$ is pointwise continuous in the strong operator topology
at the identity $\id \in\D^s(S^1)$.

Since the map 
\begin{align*}
\gamma\mapsto \text \Ad U(\gamma)\in{\Aut(\B(\G_+(\H_1)))}
\end{align*}
 is a group homomorphism and is continuous at $\id$
 it is continuous for every $\g\in\D^s(S^1)$.
\end{proof}

As we saw in Lemma \ref{lm:smoothaction}, $\psone(S^1)\subset\D^s(S^1)$ if $s<5/2$.
\begin{corollary}
The $\mathrm{U}(1)$-current net $\A_{U(1)}$ is continuously $\D^s(S^1)$-covariant, $s>2$,
and in particular is $\psone(S^1)$-covariant.
\end{corollary}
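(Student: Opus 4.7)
The plan is to assemble the corollary directly from the two preceding results, Proposition \ref{pr:implementation} and Theorem \ref{th:repcontinuity}, by verifying that the resulting assignment $\g\mapsto U(\g)$ satisfies the geometric and locality axioms that together constitute continuous $\D^s(S^1)$-covariance.

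First I would take the unitaries $U(\g)$ produced by Proposition \ref{pr:implementation} for each $\g\in\D^s(S^1)$, $s>2$, so that $\Ad U(\g)(W([f]))=W([f\circ\g^{-1}])$ on every smooth real function $f$. The covariance of local algebras is then immediate from the definition
\[
 \A_{U(1)}(I)=\{W([f]):f\in C^\infty(S^1,\RR),\ \supp f\subset I\}'',
\]
together with the observation $\supp(f\circ\g^{-1})=\g(\supp f)\subset \g I$; applying the same reasoning to $\g^{-1}$ gives $\Ad U(\g)(\A_{U(1)}(I))=\A_{U(1)}(\g I)$.

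Next I would check the localization property \eqref{eq:diffcov2}. If $\supp\g\subset I'$ then $\g$ is the identity on $\overline I$, so $\g^{-1}$ fixes $I$, and because $\g^{-1}$ maps $I'$ into $I'\cup\partial I$ while $\supp f\subset I$, one obtains $f\circ\g^{-1}=f$ as elements of $C^\infty(S^1,\RR)$. Consequently $\Ad U(\g)(W([f]))=W([f])$ for all generators of $\A_{U(1)}(I)$, hence $\Ad U(\g)$ acts trivially on $\A_{U(1)}(I)$.

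Strong continuity of the associated automorphic action follows from Theorem \ref{th:repcontinuity}. Finally, to deduce the $\psone(S^1)$-statement, I would invoke the embedding $\psone(S^1)\subset\D^s(S^1)$ for any $s\in(2,5/2)$, which was established in the proof of Lemma \ref{lm:smoothaction} (any piecewise smooth $C^1$-function lies in $H^s(S^1)$ for such $s$, and $\D^s(S^1)$ is defined by the $H^s$-regularity of the lifted diffeomorphism). Restricting the extended representation to this subgroup yields $\psone(S^1)$-covariance of $\A_{U(1)}$. The only non-routine ingredients have already been done, namely the Shale-Steinspring estimate in Proposition \ref{pr:implementation} and the Baire-Osgood argument in Theorem \ref{th:repcontinuity}; no further obstacle is expected in this corollary.
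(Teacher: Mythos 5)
Your proposal is correct and follows essentially the same route as the paper: the paper disposes of this corollary by citing the analogous argument of \cite[Proposition 4.1]{CDIT18+}, which is precisely the verification you spell out (covariance of the local algebras under $\Ad U(\g)$ via $\supp(f\circ\g^{-1})=\g(\supp f)$, triviality of the action for $\supp\g\subset I'$, continuity from Theorem \ref{th:repcontinuity}, and the inclusion $\psone(S^1)\subset\D^s(S^1)$ for $2<s<5/2$ from Lemma \ref{lm:smoothaction}). The only point left implicit, and worth a half-sentence, is that the individually chosen implementers of Proposition \ref{pr:implementation} automatically form a projective representation extending the original $U$, since any two unitaries inducing the same automorphism of the irreducibly represented Weyl algebra differ by a phase.
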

\begin{proof}
The proof is the same as in \cite[Proposition 4.1]{CDIT18+}.
\end{proof}
Note that this is stronger than the general result, Proposition \ref{pr:psone},
as here we have the continuity of $\psone(S^1)$-action as a subgroup of $\D^s(S^1)$.

\begin{corollary}\label{cr:diffext}
The projective unitary representation $U$ of $\diff(S^1)$ on $\G_+(\H_1)$
extends continuously to $\D^s(S^1), s>2$.
\end{corollary}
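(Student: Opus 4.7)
The plan is to combine the existence of implementers provided by Proposition \ref{pr:implementation} with the pointwise strong continuity of the automorphism action established in Theorem \ref{th:repcontinuity}. For each $\g\in\D^s(S^1)$ with $s>2$, Proposition \ref{pr:implementation} produces a unitary $U(\g)$ on $\G_+(\H_1)$ implementing $\alpha_\g(W([f]))=W([f\circ\g^{-1}])$. Because the Weyl operators generate $\B(\G_+(\H_1))$, which has trivial centre, $U(\g)$ is unique modulo a scalar in $U(1)$. The assignment $\g\mapsto U(\g)$ therefore descends to an unambiguous map $\bar U:\D^s(S^1)\to PU(\G_+(\H_1))$, and on $\diff(S^1)\subset\D^s(S^1)$ it coincides with the original projective representation, since both unitaries implement the same Bogoliubov automorphism and thus agree modulo a phase.

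Next, I would verify the cocycle relation. From $\alpha_{\g_1}\circ\alpha_{\g_2}=\alpha_{\g_1\g_2}$ and the uniqueness of the implementer up to a scalar one obtains $U(\g_1)U(\g_2)=c(\g_1,\g_2)U(\g_1\g_2)$ with $c(\g_1,\g_2)\in U(1)$, so $\bar U$ is a projective unitary representation of $\D^s(S^1)$ in the usual sense. The continuity assertion is then precisely that $\bar U$ is continuous as a map into $PU(\G_+(\H_1))$. This is equivalent to the pointwise strong continuity of $\g\mapsto\Ad U(\g)=\alpha_\g$ in $\Aut(\B(\G_+(\H_1)))$, which is the statement of Theorem \ref{th:repcontinuity}. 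Equivalently, one may produce local strongly continuous lifts into $U(\G_+(\H_1))$ via the principal $U(1)$-bundle $U(\G_+(\H_1))\to PU(\G_+(\H_1))$; a concrete section around $\id$ is obtained by fixing the phase of $U(\g)$ through $\<\Omega,U(\g)\Omega\>\in\RR_{>0}$, where $\Omega$ is the Fock vacuum, which is well defined in a neighbourhood of $\id$ since $U(\id)=\1$.

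The corollary is therefore essentially an interpretation of Proposition \ref{pr:implementation} and Theorem \ref{th:repcontinuity}, and the only substantive point is the projective bookkeeping: one must recognise that continuity of a projective representation is continuity at the level of $PU$, together with the observation that the implementer is unique up to a phase. I do not anticipate a genuine analytical obstacle, since no new estimate beyond those already in hand is required.
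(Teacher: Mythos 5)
Your proposal is correct and follows essentially the same route as the paper: the paper's proof simply observes that pointwise strong continuity of $\g\mapsto \Ad U(\g)$ (Theorem \ref{th:repcontinuity}) is equivalent to continuity of $\g\mapsto U(\g)$ in $\U(\G_+(\H_1))/\mathbb{T}$, with the implementers supplied by Proposition \ref{pr:implementation}. Your additional remarks on uniqueness of the implementer up to a phase and on local sections of the $\uone$-bundle are correct but are just the standard bookkeeping the paper leaves implicit.
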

\begin{proof}
 The map $\g \mapsto \Ad U(\g)$ is continuous by Theorem \ref{th:repcontinuity},
 and this is equivalent to the continuity of $\g \mapsto U(\g)$ in $\U(\G_+(\H_1))/\mathbb{T}$.
\end{proof}

\begin{corollary}\label{cr:virasorocovariance}
The Virasoro net $\vir_1$ with central charge $c=1$ is continuously $\D^s(S^1)$-covariant, $s>2$,
and in particular is $\psone(S^1)$-covariant.
\end{corollary}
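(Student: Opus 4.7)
The plan is to realize $\vir_1$ as the Virasoro subnet of the $\mathrm{U}(1)$-current net $\A_{\mathrm{U}(1)}$ via the Sugawara construction at $c=1$, and then extract the covariance from the $\D^s(S^1)$-covariance of $\A_{\mathrm{U}(1)}$ established just above.

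First I would identify the relevant subspace and subnet: on $\G_+(\H_1)$ the Sugawara stress-energy tensor of $\A_{\mathrm{U}(1)}$ gives a unitary representation of $\vir$ at $c=1$, and the closed subspace $\H(1,0) := \overline{\bigvee_I \vir_1(I)\Omega}$ is naturally isomorphic to the vacuum representation space of $\vir$ at $c=1$, with $\vir_1$ being precisely the net of local algebras generated on it by the stress-energy tensor. Because $U$ on $\G_+(\H_1)$ is constructed as the integration of the Sugawara tensor (Corollary \ref{cr:virasorocovariance}'s context), its restriction to $\H(1,0)$ is the standard $c=1$ vacuum projective representation of $\diff(S^1)$, and $U(\g)$ preserves $\H(1,0)$ for every $\g \in \diff(S^1)$: indeed, $U(\g)\Omega$ lies in the Virasoro-cyclic closure of $\Omega$, and $\Ad U(\g)(\vir_1(I)) = \vir_1(\g I)$.

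Next I would invoke Corollary \ref{cr:diffext} to extend $U$ continuously as a projective representation of $\D^s(S^1)$ on $\G_+(\H_1)$, and use density of $\diff(S^1)$ in $\D^s(S^1)$ (which follows from density of $C^\infty(S^1)\cap H^s(S^1)$ in $H^s(S^1)$) together with strong continuity to conclude that $\H(1,0)$ remains $U(\g)$-invariant for every $\g \in \D^s(S^1)$. Writing $U^{\vir_1}(\g) := U(\g)|_{\H(1,0)}$, this gives a strongly continuous projective representation of $\D^s(S^1)$ on $\H(1,0)$ satisfying $\Ad U^{\vir_1}(\g)(\vir_1(I)) = \vir_1(\g I)$, which is exactly the $\D^s(S^1)$-covariance of $\vir_1$.

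Finally, since $\psone(S^1) \subset \D^s(S^1)$ for $2 < s < 5/2$ (cf.\! the computation in Lemma \ref{lm:smoothaction}), continuous $\psone(S^1)$-covariance follows at once by restriction. The step requiring the most care is the invariance of the Virasoro-cyclic subspace $\H(1,0)$ under the extended action; however, since both the strong-continuous extension of $U$ and the density of smooth diffeomorphisms in $\D^s(S^1)$ are already in hand, this reduces to a routine approximation argument, so I expect no substantial obstruction.
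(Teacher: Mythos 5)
Your proposal is correct and follows essentially the same route as the paper: realize $\vir_1$ as the subnet of $\A_{\mathrm{U}(1)}$ generated by the diffeomorphism representation, use Corollary \ref{cr:diffext} to see that the cyclic subspace $\overline{\bigcup_I \vir_1(I)\Omega}$ is invariant under $U(\g)$ for $\g\in\D^s(S^1)$, and restrict. The only difference is that you spell out the invariance step via density of $\diff(S^1)$ in $\D^s(S^1)$ and strong continuity, a detail the paper leaves implicit.
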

\begin{proof}
Let $\vir_1$ the Virasoro net of central charge $c=1$,
where
\[
 \vir_1(I) = \{U(\g):\g \in \diff(S^1),\; \supp \g \subset I\}'',
\]
which is a subnet of $\A_{\uone}$.
The subspace $\H_{\vir_1}\coloneqq \overline{\bigcup_{I}\vir_1(I)\Omega}$ is invariant
for $U(\g), \g \in \D^s(S^1)$ by Corollary \ref{cr:diffext},
hence the representation $U$ restricts to $\H_{\vir_1}$, and the covariance follows.
\end{proof}

Let $U_{1,0}$ the irreducible positive energy projective unitary representation of $\diff(S^1)$
with central charge $1$ and lowest weight $0$. The finite tensor product $U_n\coloneqq \bigotimes_n U_{1,0}$,
is a positive energy projective representation of $\diff(S^1)$ which contains $U_{n,0}$ as a subrepresentation.
By Corollary \ref{cr:virasorocovariance}, all the representations $U_{n,0}$ of $\diff(S^1)$
extend to $\D^s(S^1), s > 2$. This is a partial improvement of the results of \cite{CDIT18+},
where all $U_{c,h}$ were extended to $\D^s(S^1), s > 3$.

We now show that for these conformal nets the representations of $B_0$ constructed in Section \ref{sec:B0} are strongly continuous.

\begin{lemma}\label{lm:continuityalpha}
 Let $\mathring{\g}\in B_0$, $\mathring{\nu}\in\diff(S^1,-1)$ and $2<s<5/2$.
 The homomorphism $\a_{\mathring{\nu}}:B_0\longrightarrow \D^s(S^1)$,
 $\mathring{\g}\mapsto \a_{\mathring{\nu}}(\mathring{\g})\coloneqq \mathring{\nu}\circ \mathring{\g}\circ\mathring{\nu}^{-1}$,
 where $B_0$ is equipped with the $C^\infty$-topology, is continuous.
\end{lemma}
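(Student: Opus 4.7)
My plan is to reduce to continuity at the identity via the group homomorphism property of $\alpha_{\mathring{\nu}}$, and then use a partition of unity to decompose $\alpha_{\mathring{\nu}}(\mathring{\g})$ into pieces on which it coincides with conjugation by an honest smooth diffeomorphism of $S^1$. First, observe that $\alpha_{\mathring{\nu}}$ is a group homomorphism: $\alpha_{\mathring{\nu}}(\mathring{\g}_1\circ\mathring{\g}_2) = \alpha_{\mathring{\nu}}(\mathring{\g}_1)\circ\alpha_{\mathring{\nu}}(\mathring{\g}_2)$. Since $B_0$ with its $C^\infty$-topology and $\D^s(S^1)$ are both topological groups (the latter by \cite[Theorem B.2]{IKT13}), it is enough to establish continuity at the identity $\id\in B_0$.

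Next, following the construction in Section \ref{nonsmoothsoliton}, pick intervals $I_\pm\in\overline{\I}_\RR$ with $-1$ as a common boundary point on the left and right respectively, and let $\mathring{\nu}_\pm := \mathring{\nu}_{I_\pm}\in\diff(S^1)$ be smooth extensions that agree with $\mathring{\nu}$ on $I_\pm$. Choose a smooth partition of unity $\{\chi_-,\chi_+\}$ on $S^1$ with $\chi_-+\chi_+\equiv 1$ such that $\mathring{\nu}^{-1}(\supp\chi_\pm)\Subset I_\pm$. For any $\mathring{\g}$ sufficiently close to $\id$ in $C^0$, continuity together with $\mathring{\g}(-1)=-1$ ensures $\mathring{\g}(\mathring{\nu}^{-1}(\supp\chi_\pm))\subset I_\pm$, on which $\mathring{\nu}=\mathring{\nu}_\pm$. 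Hence, as functions on $S^1$,
\[
\chi_\pm\cdot(\mathring{\nu}\circ\mathring{\g}\circ\mathring{\nu}^{-1}) \;=\; \chi_\pm\cdot(\mathring{\nu}_\pm\circ\mathring{\g}\circ\mathring{\nu}_\pm^{-1}),
\]
and the right-hand side involves only genuine smooth diffeomorphisms of $S^1$.

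Given $\mathring{\g}_k\to\id$ in $C^\infty(S^1)$, the $C^\infty$-continuity of composition with fixed smooth diffeomorphisms gives $\mathring{\nu}_\pm\circ\mathring{\g}_k\circ\mathring{\nu}_\pm^{-1}\to\id$ in $\diff(S^1)$, hence in $C^\infty$. Writing
\[
\alpha_{\mathring{\nu}}(\mathring{\g}_k)-\iota \;=\; \chi_-\cdot(\mathring{\nu}_-\circ\mathring{\g}_k\circ\mathring{\nu}_-^{-1}-\iota) \;+\; \chi_+\cdot(\mathring{\nu}_+\circ\mathring{\g}_k\circ\mathring{\nu}_+^{-1}-\iota)
\]
and using that multiplication by a fixed smooth function is continuous from $C^\infty(S^1)$ into $H^s(S^1)$ together with the trivial inclusion $C^\infty(S^1)\hookrightarrow H^s(S^1)$, we conclude that $\alpha_{\mathring{\nu}}(\mathring{\g}_k)-\iota\to 0$ in $H^s(S^1)$, i.e. $\alpha_{\mathring{\nu}}(\mathring{\g}_k)\to\id$ in $\D^s(S^1)$.

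The main obstacle is the second step: arranging $I_\pm$ and $\chi_\pm$ so that the pointwise identity $\chi_\pm\cdot\alpha_{\mathring{\nu}}(\mathring{\g}) = \chi_\pm\cdot(\mathring{\nu}_\pm\circ\mathring{\g}\circ\mathring{\nu}_\pm^{-1})$ persists for every $\mathring{\g}$ in an entire $C^\infty$-neighborhood of $\id$, so that the global nonsmoothness at $-1$ is absorbed by the cutoffs and the remaining estimates reduce to the familiar $C^\infty\hookrightarrow H^s$ chain. Once this decomposition is in place, the restriction $2<s<5/2$ is never actually needed here (the argument works for any $s>3/2$), and the hypothesis serves only to place the image inside $\D^s(S^1)$ via the inclusion $\psone(S^1)\subset\D^s(S^1)$ valid in that range.
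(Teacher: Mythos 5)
Your strategy---reduce to continuity at the identity and then localize so that $\nu$ can be replaced by smooth extensions $\nu_\pm$---founders on a genuine obstruction at the point $-1$, and it is not one you can arrange away. The sets $I_\pm$ are disjoint open intervals having $-1$ as a common boundary point, so $-1\notin \nu(I_-)\cup\nu(I_+)$; any pair of cutoffs with $\supp\chi_\pm\Subset\nu(I_\pm)$ must both vanish at $-1$ and therefore cannot satisfy $\chi_-+\chi_+\equiv 1$ there. More fundamentally, the identity $\alpha_{\nu}(\gamma)=\chi_-\cdot(\nu_-\circ\gamma\circ\nu_-^{-1})+\chi_+\cdot(\nu_+\circ\gamma\circ\nu_+^{-1})$ with smooth $\chi_\pm$ cannot hold for any choice of data, because the right-hand side is a smooth function while $\nu\circ\gamma\circ\nu^{-1}$ is in general only $C^1$ at $-1$: from \eqref{eq:variation}, for $\gamma$ with $\gamma'(-1)=1$ and $\gamma''(-1)\neq 0$ the one-sided limits of $(\nu\circ\gamma\circ\nu^{-1})''$ at $-1$ are $\gamma''(-1)/\partial_\pm\nu(-1)$, which differ whenever $r(\nu)\neq 1$. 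For the same reason your closing remark that the restriction $2<s<5/2$ is unnecessary is a symptom of the error rather than a bonus: if the argument ran for every $s>3/2$ it would place $\alpha_\nu(\gamma)$ in $\bigcap_{s}H^s(S^1)=C^\infty(S^1)$, contradicting the failure of $C^2$-regularity just noted (see the Remark after Lemma \ref{lm:smoothaction}: a $C^1$ function that is not $C^2$ does not lie in $H^s$ for $s>\frac52$). The upper bound $s<\frac52$ is exactly what makes the statement true.

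The paper's proof confronts the nonsmooth point directly instead of cutting it away: it shows $\nu\circ\gamma_n\circ\nu^{-1}\to\nu\circ\gamma\circ\nu^{-1}$ in $L^1(S^1)$ and bounds the Fourier coefficients by $\mathrm{Var}\bigl((\nu\circ\gamma_n\circ\nu^{-1})''\bigr)/|k|^3$, verifying through the explicit expression \eqref{eq:variation} and total-variation estimates for products and compositions that this variation is bounded uniformly in $n$; dominated convergence over $k\in\ZZ$ then yields convergence in $H^s$ precisely for $s<\frac52$. If you wish to salvage a localization, you would need a third cutoff equal to $1$ near $-1$, and on its support $\nu$ cannot be replaced by a single smooth diffeomorphism---so you are returned to estimating the nonsmooth conjugate directly, which is the computation the paper performs. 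The preliminary reduction to continuity at the identity via the homomorphism property is fine, but it does not touch the real difficulty.
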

\begin{proof}
 Let $\lbrace \mathring{\g_n}\rbrace\subset B_0$ be a sequence converging to $\mathring{\g}\in B_0$
 with respect to the $C^\infty$-topology. We denote with $\nu$ the lift to $\widetilde{\diff^0(S^1)}$ of $\mathring{\nu}$
 and with $\g_n$ and $\g$ the lift to $\widetilde{B_0}$ of $\mathring{\g_n}$ and $\mathring{\g}$, respectively. 
 We use the same strategy of Lemma \ref{lm:nonsmoothaction}. Namely, the convergence
 $\nu\circ\g_n\circ\nu^{-1} \to \nu\circ\g\circ\nu^{-1}$ in the $L^1(S^1)$-topology (actually, even in the uniform topology)
 is straightforward. Then, by
  \begin{align*}
   \left\vert\left(\widehat{\nu\circ \g_n\circ\nu^{-1}}\right)_k\right\vert
   \leq\frac{\mathrm{Var}\left(\left(\nu\circ \g_n\circ\nu^{-1}\right)''\right)}{k^3}
\end{align*}
 it is sufficient to show that the right-hand side is uniformly bounded in $n$.
 The second derivative of $\nu\circ \g_n\circ\nu^{-1}$ is
 \begin{align}\label{eq:variation}
 \frac{d^2}{d\theta^2}\left(\nu\circ \g_n\circ\nu^{-1}\right)(\theta)
 &=\nu''(\g_n(\nu^{-1}(\theta)))\g_n'(\nu^{-1}(\theta))^2\frac{1}{\nu'(\nu^{-1}(\theta))^2} \nonumber\\
 &\qquad +\nu'(\g_n(\nu^{-1}(\theta)))\g_n''(\nu^{-1}(\theta))\frac{1}{\nu'(\nu^{-1}(\theta))^2}\\
 &\qquad -\nu'(\g_n(\nu^{-1}(\theta)))\g_n'(\nu^{-1}(\theta))\frac{\nu''(\nu^{-1}(\theta))}{\nu'(\nu^{-1}(\theta))^3}. \nonumber
 \end{align}
 To evaluate its total variation, we use the following facts:
 for every pair of functions $f_1,f_2$ with bounded variation, it holds \cite[Theorem 3.7]{Pausinger15} that
 \begin{align*}
  \mathrm{Var}(f_1\cdot f_2)
  &\leq\Vert f_1\Vert_\infty\mathrm{Var}(f_2)+\Vert f_2\Vert_\infty\mathrm{Var}(f_1)+3\mathrm{Var}(f_1)\mathrm{Var}(f_2) \\
   \mathrm{Var}(f_1\circ f_2)&\leq L_{f_1}\mathrm{Var}(f_2),
 \end{align*}
 where $f_1$ is Lipschitz and $L_{f_1}$ is the Lipschitz constant of $f_1$.
 Now, the total variations of the second and the third terms are uniformly bounded in $n$
 since $L_{\g_n^{(k)}}$ are uniformly bounded in $n$.
 As for the first term, we have $\mathrm{Var}(\nu''\circ \g_n\circ\nu^{-1})
 \leq 2\pi\left\Vert\left(\nu''\circ \g_n\circ\nu^{-1}\right)'\right\Vert_{L^\infty(0,2\pi)}+|\nu''(2\pi)-\nu''(0)|$,
 and this is again uniformly bounded since $\nu''$ has a bounded derivative on the open interval $(0,2\pi)$
 and $L_{\g_n^{(k)}}$ are uniformly bounded in $n$.
\end{proof}

\begin{proposition}\label{pr:continuityu}
Let $2<s<5/2$ and $\g\in\diff(S^1,-1)$. The map $U_\g\coloneqq U\circ\alpha_\g$ is a strongly continuous unitary projective representation of $B_0$ when $U=U_{n,0}$, $n\in\ZZ_+$, or $U$ is as in Corollary \ref{cr:diffext}.

Let $\A$ be the $\mathrm{U}(1)$-current net or the Virasoro net $\A_{\vir_c}$ with $c\in\ZZ_+$ and $\g\in\diff(S^1,-1)$.
Every soliton $\sigma_\g$ of $\A$ as in Section \ref{nonsmoothsoliton} is continuously $B_0$-covariant
with respect to the representation $U_\g$.
\end{proposition}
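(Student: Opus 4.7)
For the first assertion, with $\nu \in \diff(S^1,-1)$ (denoted $\g$ in the statement), the strategy would be to factor $U_\nu = U\circ\alpha_\nu$ through the Sobolev group. Lemma~\ref{lm:continuityalpha} tells us that, for $2<s<5/2$, the map $\alpha_\nu\colon B_0 \to \D^s(S^1)$ is continuous with respect to the $C^\infty$-topology on $B_0$. When $U$ is the $\uone$-current representation on $\G_+(\H_1)$, Corollary~\ref{cr:diffext} extends it to a strongly continuous projective representation of $\D^s(S^1)$, so $U\circ\alpha_\nu$ is automatically strongly continuous. For $U = U_{n,0}$, I would invoke the remark made just before the proposition: $U_{n,0}$ is a subrepresentation of the $n$-fold tensor product $U_n = \bigotimes^n U_{1,0}$, and $U_{1,0}$ itself sits inside the $\uone$-current representation on the $\vir_1$-cyclic subspace, which is $\D^s(S^1)$-invariant by Corollary~\ref{cr:virasorocovariance}. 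Since strong continuity is preserved under tensor products and restriction to invariant subspaces, the same conclusion follows.

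For the second assertion, the task is to check the covariance identity
\[
\Ad U_\nu(\phi)(\sigma_\nu^I(x)) = \sigma_\nu^{\phi I}(\Ad U(\phi)(x))
\]
for $\phi \in B_0$, $I\in\overline{\I}_\RR$ and $x\in \A(I)$, together with strong continuity of $U_\nu$. Note first that $\phi I \in \overline{\I}_\RR$ because $B_0$ fixes the point $-1$. Unwinding the definitions $U_\nu = U\circ\alpha_\nu$ and $\sigma_\nu^I = \Ad U(\nu_I)$, the left-hand side becomes $\Ad U(\nu\circ\phi\circ\nu^{-1}\circ\nu_I)(x)$ while the right-hand side becomes $\Ad U(\nu_{\phi I}\circ\phi)(x)$. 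The key observation is that both diffeomorphisms restrict to $\nu\circ\phi$ on $I$: for the first, $\nu^{-1}\circ\nu_I = \id$ on $I$ since $\nu_I$ agrees with $\nu$ there; for the second, $\nu_{\phi I}\circ\phi = \nu\circ\phi$ on $I$ since $\nu_{\phi I}$ agrees with $\nu$ on $\phi I$. Therefore their quotient is a $\D^s(S^1)$-diffeomorphism supported in $I'$, which must act trivially on $\A(I)$, yielding the covariance identity. Strong continuity of the covariance then follows from the first assertion composed with normality of each $\sigma_\nu^I$.

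The main obstacle I foresee is justifying that elements of $\D^s(S^1)$ supported in $I'$ act trivially on $\A(I)$, since the locality axiom (CN\ref{eq:diffcov2}) is formulated only for $\diff(S^1)$. I would handle this by approximating such a $\D^s$-diffeomorphism by smooth ones supported in a slight enlargement of $I'$, available by standard mollification in $H^s$, and passing to the limit using the strong continuity from Corollary~\ref{cr:diffext}: the approximating unitaries commute with $\A(I)$ for each $n$, and strong convergence in $\U(\H)/\mathbb{T}$ passes this commutation to the limit.
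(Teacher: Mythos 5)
Your proposal is correct and follows essentially the same route as the paper, whose entire proof reads ``This is clear from Corollary \ref{cr:diffext} and Lemma \ref{lm:continuityalpha}'': your factorization $U_\nu = U\circ\alpha_\nu$ through $\D^s(S^1)$, together with the reduction of $U_{n,0}$ to a subrepresentation of $\bigotimes_n U_{1,0}$ via Corollary \ref{cr:virasorocovariance}, is exactly the argument the paper has in place. The extra details you supply for the second assertion (the explicit covariance identity and the triviality on $\A(I)$ of $\D^s(S^1)$-elements fixing $I$ pointwise) are sound, with the minor caveat that your mollified approximants are localized only in the complement of a slightly shrunk interval $I_0$ with $\overline{I_0}\subset I$, so one first obtains triviality on each $\A(I_0)$ and then passes to $\A(I)$ by additivity.
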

\begin{proof}
This is clear from Corollary \ref{cr:diffext} and Lemma \ref{lm:continuityalpha}.
\end{proof}

\section{Outlook}\label{outlook}
Let us collect some open problems.
\begin{itemize}
 \item There appears to be no known soliton which is not dilation-covariant.
Is dilation covariance automatic in solitons? This is not obvious,
because we cannot implement dilations by cutting the generators as we did
for translations.
 \item Is it possible to classify all solitons for some specific conformal nets?
For example, for Virasoro nets any such soliton should give rise to
a representation of the group of the diffeomorphisms of $\RR$ with compact support.
Yet, the lack of any compact subgroup makes it difficult to classify such
representations.
 \item As the action $\g_*(f)$ of a diffeomorphism on vector fields involves
 the derivative of $\g$, it may decrease the regularity, especially, it may have discontinuous derivative.
 On the other hand, if $\g$ and $\Exp(f)$ are implementable, so is $\Exp(\g_*(f))$
 by $U(\g)U(\Exp(f))U(\g)^*$,
 hence implementability of $\Exp(g)$ is not directly related with the regularity of $g$.
 What is the precise relationship? Is $\dom(L_0)$ a core for such $T(g)$?
 \item Which is the smallest $s>0$ for which conformal nets are $\D^s(S^1)$-covariant?
 Does $s$ depend on the net? For which $s$ do $(c,h)$-representations of $\diff(S^1)$ extend to
 $\D^s(S^1)$?
\end{itemize}

\appendix
\section{Piecewise smooth \texorpdfstring{$C^1$}{C1}-diffeomorphisms}\label{c1pws}
Here we show that any $\g \in \psone(S^1)$ is implementable in any conformal net.
The strategy is due to Andr\'e Henriques. We thank him for permitting us to include this in the present paper.

An element $\g \in \psone(S^1)$ has only finitely many nonsmooth points, hence if we show that
any $\g$ with one nonsmooth point is implemented, the thesis follows by composing such elements
finitely many times. Furthermore, by composing with rotations and dilations, we may assume that the nonsmooth
point is $-1$ and $\g(-1) = -1, \g'(-1) = 1$.

If $\gamma\in\psone(S^1)$, let $\tilde{\gamma}$ be a lift of $\gamma$ to the universal covering $\widetilde{\diff^1(S^1)}$.
As we observed in Section \ref{nonsmoothsoliton},
there exists an open interval $I$ of $S^1$ which contains $-1$ and $\g_{I_{-}},\g_{I_{+}}\in\diff(S^1)$ such that $\g$ agrees with $\g_{I_{-}}$ in $I_-$ and with $\g_{I_{+}}$ in $I_+$, where $I_-$ and $I_+$ are the connected components of $I\setminus\lbrace -1\rbrace$. Denote the derivative of $\g$ from the right and from the left by
\begin{align*}
\partial_\pm \g(-1)\coloneqq \lim_{\theta \to \pi_\pm}\frac{\widetilde{\g_{I_{\pm}}}(\theta)-\widetilde{\g_{I_{\pm}}}(\pi)}{\theta-\pi}.
\end{align*}
For $f\in C^{\infty}(S^1,\RR)$ and $\g\in\diff(S^1)$ we define
\begin{align*}
f^{(k)}(e^{i\theta})\coloneqq \frac{d^k}{d\theta^k}f(e^{i\theta})
\end{align*}
and
\begin{align*}
\g^{(k)}(e^{i\theta})\coloneqq \frac{d^k}{d\theta^k}\tilde{\g}(\theta)
\end{align*}
where $\tilde{\g}$ is the lift of $\g$ in $\widetilde{\diff(S^1)}$
and $\widetilde{\diff(S^1)}$ is identified with the group of maps $\tilde \g:\RR\to\RR$
satisfying $\tilde \g(\theta + 2\pi) = \tilde\g(\theta) + 2\pi$.

Recall that $\vect(S^1)$ is a Lie algebra with the bracket $[f,g]\coloneqq f'g-g'f$.
As in \cite{Tanimoto10-2}, for $0\le k \le \infty$, consider the following Lie subalgebras of $\vect(S^1)$
\begin{align*}
\mathfrak{b}_n&=\left\lbrace f\in C^{\infty}(S^1,\RR): f^{(k)}(-1)=0,\text{for }0\le k\le n\right\rbrace, \\
\mathfrak{b}_{\infty}&=\left\lbrace f\in C^{\infty}(S^1,\RR): f^{(k)}(-1)=0,\text{for all } k\in\NN\right\rbrace.
\end{align*}
To each algebra corresponds a Lie subgroup of $\diff(S^1)$,
\begin{align*}
B_0&\coloneqq\left\lbrace\gamma\in\diff(S^1): \gamma(-1)=-1 \right\rbrace, \\
B_1&\coloneqq\left\lbrace\gamma\in\diff(S^1): \gamma(-1)=-1,\quad \g^{(1)}(-1)=1\right\rbrace, \\
B_n&\coloneqq\left\lbrace\gamma\in\diff(S^1): \gamma(-1)=-1,\quad \g^{(1)}(-1)=1, \g^{(k)}(-1)=0,\text{ for } 2\leq k\leq n \right\rbrace, \\
B_{\infty}&\coloneqq\left\lbrace \gamma\in\diff(S^1): \gamma(-1)=-1, \g^{(1)}(-1)=1, \g^{(k)}(-1)=0,\text{ for all } k\ge 2 \right\rbrace.\end{align*}
By explicit calculations, $\mathfrak b_n$'s are normal Lie subalgebras of $\mathfrak b_0$.
Correspondingly, $B_n$ is a normal subgroup of $B_0$ for every $n\ge 1$: indeed, if $\g\in B_n$ and
$\g_0 \in B_0$, then $\g^{-1}\g_0\g$ has the same $k$-th derivatives at $-1$ as $\g_0$ for $k=1,\cdots, n$,
hence $\g^{-1}\g_0\g\g_0^{-1} \in B_n$ and $\g_0\g\g_0^{-1} \in B_n$.
From this, it is immediate that $\mathfrak b_n$ is a normal Lie subalgebra of $\mathfrak b_1$
and $B_n$ is a normal subgroup of $B_1$.

The quotient Lie algebra $\mathfrak{b}_1/\mathfrak{b}_n$ is finite dimensional and
every element $[g]$ can be identified with the $(n-1)$-tuple of the real numbers $(g^{(2)}(-1),\cdots, g^{(n)}(-1))$.
Furthermore, it follows from straightforward computations that 
$\mathfrak{b}_1/\mathfrak{b}_n$ is nilpotent.
Similarly, the quotient $B_1/B_n$ is a finite-dimensional Lie group and
an element $[\g]\in B_1/B_n$ can be identified with the $(n-1)$-tuple of real numbers $(\g^{(2)}(-1),\cdots, \g^{(n)}(-1))$.
As we see below, $\mathfrak{b}_1/\mathfrak{b}_n$ is the Lie algebra of $B_1/B_n$,
and the latter is connected and simply connected as it is diffeomorphic to $\RR^{n-1}$,
hence the exponential map is surjective \cite[Theorem 1.2.1]{CG90}, which is the key of the following Lemma \ref{lm:inverselimit}.

\begin{lemma}\label{lm:exp}
The Lie algebra of the group $B_1/B_n$ is $\mathfrak{b}_1/\mathfrak{b}_n$.
Let $\Exp_n$ be the exponential map from $\mathfrak{b}_1/\mathfrak{b}_n$ to $B_1/B_n$.
With the natural quotient maps $[\cdot]$, the following diagram commutes.
\[
\begin{tikzcd}
\mathfrak{b}_1 \arrow{r}{[\cdot]} \arrow[swap]{d}{\Exp} & \mathfrak{b}_1/\mathfrak{b}_n \arrow[swap]{d}{\Exp_n}\\
B_1 \arrow{r}{[\cdot]} & B_1/B_n
\end{tikzcd}
\]
\end{lemma}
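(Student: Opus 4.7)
The strategy is to exploit the fact that $B_1/B_n$ is a finite-dimensional Lie group with explicit global coordinates, and then apply naturality of the exponential map for the group homomorphism $[\cdot]\colon B_1\to B_1/B_n$.

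First, I would establish the identification of Lie algebras. Since $B_1/B_n$ is parametrized by $(n-1)$-tuples $(\g^{(2)}(-1),\ldots,\g^{(n)}(-1))$ and its group law is smooth in these coordinates (each derivative $(\g\circ\delta)^{(k)}(-1)$ is polynomial in the derivatives of $\g$ and $\delta$ at $-1$ via the chain rule), I would compute the Lie algebra as the tangent space at the identity. The quotient map $[\cdot]\colon B_1\to B_1/B_n$ is a group homomorphism, and its differential at the identity sends $f\in\mathfrak{b}_1$ to $(f^{(2)}(-1),\ldots,f^{(n)}(-1))$, which is exactly the coordinate description of $[f]\in\mathfrak{b}_1/\mathfrak{b}_n$. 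This differential is visibly surjective with kernel $\mathfrak{b}_n$, and hence descends to a linear isomorphism $\mathfrak{b}_1/\mathfrak{b}_n\cong\mathrm{Lie}(B_1/B_n)$. Compatibility of brackets is automatic: the bracket on $\mathfrak{b}_1/\mathfrak{b}_n$ is by definition induced from the bracket on $\mathfrak{b}_1$, which in turn realizes the group commutator on $B_1$, and this structure is transported by the homomorphism $[\cdot]$ to the Lie algebra bracket of $B_1/B_n$.

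Second, for commutativity of the diagram, fix $f\in\mathfrak{b}_1$ and consider the smooth curve $c(t):=[\Exp(tf)]$ in $B_1/B_n$. Since $[\cdot]$ is a homomorphism and $t\mapsto\Exp(tf)$ is a one-parameter subgroup of $B_1$, $c$ is a one-parameter subgroup of the finite-dimensional (in fact nilpotent, simply connected) Lie group $B_1/B_n$. Its initial velocity $c'(0)$ equals the image of $f$ under the differential of $[\cdot]$, which by the previous step is $[f]$. A one-parameter subgroup of a finite-dimensional Lie group is uniquely determined by its initial velocity and equals the exponential of that velocity, so $c(t)=\Exp_n(t[f])$; setting $t=1$ gives $[\Exp(f)]=\Exp_n([f])$.

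The main technical point is the explicit computation that $c'(0)=[f]$, i.e.\ that
\[
  \tfrac{d}{dt}\big|_{t=0}\Exp(tf)^{(k)}(-1)=f^{(k)}(-1)\qquad(2\le k\le n).
\]
This follows from the defining ODE $\tfrac{d}{dt}\Exp(tf)(z)=f(\Exp(tf)(z))$ combined with smooth dependence on $t$: the flow admits a first-order Taylor expansion $\Exp(tf)(e^{i\theta})=e^{i\theta}+tf(e^{i\theta})+O(t^2)$ uniformly in $\theta$, and differentiating $k$ times in $\theta$ yields the claimed identity. Once this is in place, the two halves of the argument glue together to give both assertions of the lemma.
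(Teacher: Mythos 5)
Your proposal is correct and follows essentially the same route as the paper: both identify $\mathrm{Lie}(B_1/B_n)$ with $\mathfrak{b}_1/\mathfrak{b}_n$ via the one-parameter subgroups $t\mapsto[\Exp(tf)]$ and deduce the commutativity of the diagram from the flow ODE, smooth dependence on $(t,\theta)$, and the finite-dimensional Lie theory of the quotient. The only organizational difference is that the paper first proves well-definedness of $[f]\mapsto[\Exp(f)]$ by an inductive ODE argument on the derivatives $\Exp(tf)^{(k)}(-1)$, whereas you obtain this as a byproduct of computing the initial velocity $c'(0)=[f]$ and invoking uniqueness of one-parameter subgroups of $B_1/B_n$; both rest on the same interchange of $t$- and $\theta$-derivatives.
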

\begin{proof}
We first prove that if $[f_1] = [f_2]$ in $\mathfrak{b}_1/\mathfrak{b}_n$, then $[\Exp(f_1)] = [\Exp(f_2)]$.
It is enough to show that 
$\Exp(f_1)^{(k)}(-1)=\Exp(f_2)^{(k)}(-1)$ for $0\le k\le n-1$. The case $k=0$ is obvious by definition of $\Exp$.
We show this by induction in $k$.
By \cite[Theorem 7.2]{CL55} we have that
\begin{align}\label{eq:ode}
\frac{\partial }{\partial t}\left(\frac{\partial ^k}{\partial \theta^k}\Exp(tf_j)(e^{i\theta})\bigg\vert_{e^{i\theta}=-1}\right)
=\frac{\partial ^k}{\partial \theta^k}\left(\frac{\partial }{\partial t}\Exp(tf_j)(e^{i\theta})\right)\bigg\vert_{e^{i\theta}=-1}
=\frac{\partial ^k}{\partial \theta^k}f_j(\Exp(tf_j)(e^{i\theta}))\bigg\vert_{e^{i\theta}=-1}
\end{align}
for $j=1,2$.
By the chain rule, we observe that the last expression
can be written in terms of $\frac{\partial ^\ell}{\partial \theta^\ell}\Exp(tf_j)(e^{i\theta})\bigg\vert_{e^{i\theta}=-1}$
with $0 \le \ell \le k-1$ and
$f_j^{(\ell)}(-1)$ with $0\le \ell \le k$, because $f_j^{(1)}(-1) = 0$ and
$\frac{\partial ^k}{\partial \theta^k}\Exp(tf)(e^{i\theta})\bigg\vert_{e^{i\theta}=-1}$ does not appear.
Therefore, 
$\frac{\partial ^k}{\partial \theta^k}\Exp(tf_j)(e^{i\theta})\bigg\vert_{e^{i\theta}=-1}, j=1,2$
satisfy the same differential equation with respect to $t$ \eqref{eq:ode}
with the same initial data, we can conclude that
$\Exp(tf_1)^{(k)}(-1)=\Exp(tf_2)^{(k)}(-1)$.

Let $f \in \mathfrak{b}_1$. Then $[\Exp(tf)]$ is a one-parameter group in $B_1/B_n$
and it does not depend on the representative in $[f]$ by the previous paragraph.
The Lie bracket $[[f],[g]]$ can be computed from $[\Exp(tf)], [\Exp(sg)]$
and it gives $[[f,g]] = [f'g-fg']$, namely, $\mathfrak b_1/\mathfrak b_n$ is the Lie algebra of $B_1/B_n$.
\end{proof}

\begin{lemma}\label{lm:inverselimit}
Let $\{\lambda_n\}_{n\geq 2}$ be a sequence of real numbers.
There exists $g\in C^{\infty}(S^1,\RR)$ such that $\Exp(g)^{(n)}(-1)=\lambda_n $ for all $n\geq 2$.
\end{lemma}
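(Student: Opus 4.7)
The plan is to exploit Lemma \ref{lm:exp} together with the fact, already noted, that the finite-dimensional quotient $B_1/B_n$ is a connected, simply connected (since diffeomorphic to $\RR^{n-1}$) nilpotent Lie group with Lie algebra $\mathfrak b_1/\mathfrak b_n$. Classical Lie theory \cite[Theorem 1.2.1]{CG90} then implies that $\Exp_n$ is not merely surjective but a diffeomorphism $\mathfrak b_1/\mathfrak b_n \to B_1/B_n$. Via the identifications $B_1/B_n \cong \RR^{n-1}$ and $\mathfrak b_1/\mathfrak b_n \cong \RR^{n-1}$ by the $(n-1)$-tuple of derivatives at $-1$, the prescribed tuple $(\lambda_2,\ldots,\lambda_n)$ determines an element $[\g_n]\in B_1/B_n$, and I set $[g_n] := \Exp_n^{-1}([\g_n])$, writing its coordinates as $(\mu_2^{(n)},\ldots,\mu_n^{(n)})$.

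The next step is compatibility. The natural quotient maps $\mathfrak b_1/\mathfrak b_{n+1}\twoheadrightarrow \mathfrak b_1/\mathfrak b_n$ and $B_1/B_{n+1}\twoheadrightarrow B_1/B_n$ are morphisms of (Lie algebras resp.) Lie groups, so they intertwine $\Exp_{n+1}$ and $\Exp_n$. The image of $[g_{n+1}]$ in $\mathfrak b_1/\mathfrak b_n$ therefore maps under $\Exp_n$ to the image of $[\g_{n+1}]$ in $B_1/B_n$, which is exactly $[\g_n]$ (since $\g_{n+1}$ and $\g_n$ share the same derivatives at $-1$ up to order $n$). By injectivity of $\Exp_n$, the image of $[g_{n+1}]$ equals $[g_n]$, i.e.\ $\mu_k^{(n+1)}=\mu_k^{(n)}$ for $2\le k\le n$. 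This yields a well-defined sequence $\{\mu_k\}_{k\ge 2}$.

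To conclude, I apply Borel's theorem \cite[Theorem 1.2.6]{Hoermander90} to obtain $g\in C^\infty(S^1,\RR)$ with $g(-1)=0$, $g'(-1)=0$, and $g^{(k)}(-1)=\mu_k$ for every $k\ge 2$. Then $g\in\mathfrak b_1$ and its image in $\mathfrak b_1/\mathfrak b_n$ is $[g_n]$. By Lemma \ref{lm:exp}, $[\Exp(g)] = \Exp_n([g_n]) = [\g_n]$ in $B_1/B_n$ for every $n$, so $\Exp(g)^{(k)}(-1)=\lambda_k$ for $2\le k\le n$. Letting $n\to\infty$ gives the claim.

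The main obstacle is the bijectivity (in particular injectivity) of $\Exp_n$, which is what makes the inverse-limit argument work; but this is guaranteed by the classical fact that exp is a diffeomorphism for simply connected nilpotent Lie groups, and all the hypotheses needed (nilpotency of $\mathfrak b_1/\mathfrak b_n$, simple connectedness of $B_1/B_n$, and the identification of Lie algebras in Lemma \ref{lm:exp}) have already been established in the preceding text. A purely hands-on alternative, which avoids invoking nilpotent Lie group theory, would be to observe directly from equation \eqref{eq:ode} that $\Exp(g)^{(k)}(-1) = g^{(k)}(-1) + P_k\bigl(g^{(2)}(-1),\ldots,g^{(k-1)}(-1)\bigr)$ for some universal polynomial $P_k$, and then to solve triangularly $\mu_k := \lambda_k - P_k(\mu_2,\ldots,\mu_{k-1})$ before applying Borel's theorem.
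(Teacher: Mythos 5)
Your proof is correct and follows essentially the same route as the paper's: both rely on Lemma \ref{lm:exp}, on the surjectivity (in fact bijectivity) of $\Exp_n$ for the simply connected nilpotent group $B_1/B_n$ via \cite[Theorem 1.2.1]{CG90}, and on Borel's theorem to assemble the compatible family $\{\mu_k\}$ into a single $g\in\mathfrak b_1$. The only difference is presentational: you spell out the compatibility of the preimages $[g_n]$ under the quotient maps (using injectivity of $\Exp_n$), a step the paper's inverse-limit phrasing leaves implicit.
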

\begin{proof}
Note that $\mathfrak{b}_1/\mathfrak{b}_n\leftarrow \mathfrak{b}_1/\mathfrak{b}_{n+1}$ is a Lie algebra homomorphism,
since $\mathfrak{b}_n\supset\mathfrak{b}_{n+1}$.
Recall the inverse limit of the sequence of Lie algebras
\begin{align*}
\mathfrak{b}_1/\mathfrak{b}_2\longleftarrow\cdots\longleftarrow \mathfrak{b}_1/\mathfrak{b}_n\longleftarrow \mathfrak{b}_1/\mathfrak{b}_{n+1}\longleftarrow\cdots
\end{align*}
is by definition the Lie algebra of sequences $(g_2,\cdots,g_n\dots)$, $g_n \in \mathfrak b_1/\mathfrak b_n$
such that $g_n/\mathfrak b_{n-1} = g_{n-1}$.
This is isomorphic to $\mathfrak b_1/\mathfrak b_\infty$, because any such sequence corresponds to
a sequence $(\lambda'_2, \cdots, \lambda'_n,\cdots)$ where $g_n^{(k)}(-1) = \lambda'_k$
and by Borel's theorem \cite[Theorem 1.2.6]{Hoermander90} there is $g\in\vect(S^1)$
such that $g(-1) = g'(-1) = 0$ and $g^{(k)}(-1) = \lambda'_n$.

Similarly, the inverse limit of the sequence of groups 
\begin{align*}
B_1/B_2\longleftarrow\cdots\longleftarrow B_1/B_n\longleftarrow B_1/B_{n+1}\longleftarrow\cdots
\end{align*}
is isomorphic to $B_1/B_\infty$, because to any sequence $(\lambda_2, \cdots, \lambda_n,\cdots)$
one can associate $\g\in\diff(S^1)$ such that $\g^{(k)}(-1) = \lambda_n$ by Borel's theorem, as we did in Section \ref{nonsmoothsoliton}.

Since $\mathfrak{b}_1/\mathfrak{b}_n$ is a nilpotent Lie algebra,
the exponential map ${\Exp_n:\mathfrak{b}_1/\mathfrak{b}_n\longrightarrow B_1/B_n}$ is surjective \cite[Theorem 1.2.1]{CG90}.
It follows that the inverse limit $\Exp_\infty$ of the maps $\Exp_n$ is surjective,
because any sequence $(\g_2, \cdots, \g_n, \cdots)$ has an inverse image
$(\Exp_2^{-1}(\g_2),\cdots, \Exp_n^{-1}(\g_n),\cdots)$.

To a given sequence $\{\lambda_n\}_{n\geq 2}$, we take the element
$(\lambda_2,\cdots, \lambda_n,\cdots) \in B_1/B_\infty$. Its inverse image with respect
to $\Exp_\infty$ is a sequence $(\lambda'_2,\cdots, \lambda'_n,\cdots) \in \mathfrak b_1/\mathfrak b_\infty$.
By Borel's theorem, there is $g\in\vect(S^1)$ such that
$g(-1) = g'(-1) =0, g^{(n)}(-1) = \lambda_n'$. This $g$ has the desired property
$\Exp(g)^{(k)}(-1) = \lambda_k$ by Lemma \ref{lm:exp}.

\end{proof}
The inverse limit of $\mathfrak{b}_1/\mathfrak{b}_2\leftarrow\cdots\leftarrow \mathfrak{b}_1/\mathfrak{b}_n\leftarrow\cdots$
is isomorphic to the Lie algebra of formal power series $x^2\CC[[x]]$,
where
the Lie bracket is $[f,g]\coloneqq f'g-g'f$, $f,g\in x^2\CC[[x]]$.
Similarly, the inverse limit of $B_1/B_2\leftarrow\cdots\leftarrow B_1/B_n\leftarrow\cdots$
is the group $x+x^2\CC[[x]]$ with product given by the composition of formal power series.

\begin{lemma}\label{lm:glue}
Let $\gamma\in\psone(S^1)$, smooth on $S^1\setminus\{-1\}$ and $\g'(-1) = 1$.
There exist $g$ which is piecewise smooth, $C^1$ and possibly nonsmooth at $\{-1\}$ and $\underline\gamma\in\diff(S^1)$
such that $\gamma=\Exp(g)\circ\underline\gamma$.
\end{lemma}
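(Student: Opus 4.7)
My plan is to absorb the nonsmoothness of $\g$ at $-1$ into $\Exp(g)$ by matching the one-sided Taylor jets of $\g$ at $-1$ from both sides simultaneously, using a single piecewise smooth $C^1$ vector field $g$. I would realize the left and right jets separately as flows of smooth vector fields (via Lemma \ref{lm:inverselimit}), glue them across $-1$, and then verify that $\Exp(g)^{-1}\circ\g$ is smooth at $-1$, hence an element of $\diff(S^1)$.

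First I set $\lambda_n^\pm \coloneqq \g^{(n)}(-1^\pm)$ for $n\ge 2$, so that $\lambda_0^\pm=-1$ and $\lambda_1^\pm=1$ by hypothesis. Applying Lemma \ref{lm:inverselimit} once to each sequence yields smooth functions $g_-, g_+ \in \mathfrak{b}_1$ with $\Exp(g_\pm)^{(n)}(-1)=\lambda_n^\pm$ for every $n\ge 2$. Next, I pick a smooth cutoff $\chi\in C^\infty(S^1,[0,1])$ equal to $1$ on a small arc around $-1$ and vanishing near $+1$, and replace $g_\pm$ by $\chi g_\pm$. This preserves the germs of $g_\pm$ at $-1$ and therefore also the identities $\Exp(\chi g_\pm)^{(n)}(-1)=\lambda_n^\pm$, since the flow at the fixed point $-1$ depends only on the germ of the vector field there.

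I then define $g\colon S^1\to\RR$ by $g=\chi g_-$ on $\{e^{i\theta}:\theta\in[-\pi,0]\}$ and $g=\chi g_+$ on $\{e^{i\theta}:\theta\in[0,\pi]\}$. At the interface point $+1$ both pieces vanish identically, while at $-1$ they agree to first order since $g_\pm(-1)=g_\pm^{(1)}(-1)=0$; hence $g$ is a piecewise smooth, $C^1$ vector field on $S^1$, smooth off $-1$. Let $\eta\coloneqq\Exp(g)$ be the time-$1$ map of its flow, which is a $C^1$ diffeomorphism, smooth on $S^1\setminus\{-1\}$, with $-1$ as a fixed point of linearization $1$. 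The higher one-sided derivatives of $\eta$ at $-1$ satisfy the variational equations along the constant orbit $t\mapsto -1$, whose coefficients involve only the one-sided derivatives of $g$ at $-1$ on the corresponding side. Therefore $\eta^{(n)}(-1^\pm)=\Exp(g_\pm)^{(n)}(-1)=\lambda_n^\pm=\g^{(n)}(-1^\pm)$ for all $n\ge 2$, so $\eta\in\psoneone(S^1)$ has the same one-sided jet as $\g$ at $-1$.

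Finally, I set $\underline\g\coloneqq\eta^{-1}\circ\g$. Each one-sided derivative of $\underline\g$ at $-1$ is, by Fa\`a di Bruno, a universal polynomial in the corresponding one-sided derivatives of $\eta^{-1}$ and of $\g$, and since these one-sided jets coincide on each side, the polynomial reduces to the formal Taylor series of $\id$. Thus the left and right Taylor data of $\underline\g$ at $-1$ agree at every order, and together with smoothness on $S^1\setminus\{-1\}$ this forces $\underline\g\in\diff(S^1)$. The resulting identity $\g=\eta\circ\underline\g=\Exp(g)\circ\underline\g$ is the desired factorization. The main technical step, which I expect to be the principal obstacle, is the ODE argument showing that the one-sided jets of the time-$1$ flow at the fixed point depend only on the corresponding one-sided jets of the generator: this rests on the fact that an orbit starting at $-1$ never crosses to the other side, so the variational systems determining $\eta^{(n)}(-1^\pm)$ close up in the data $g^{(k)}(-1^\pm)$ alone, an induction parallel to the one in the proof of Lemma \ref{lm:exp}.
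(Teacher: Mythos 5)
Your proof is correct and follows essentially the same route as the paper: apply Lemma \ref{lm:inverselimit} once to each one-sided jet sequence of $\g$ at $-1$, glue the two resulting vector fields across $-1$ into a piecewise smooth $C^1$ field $g$, and check that $\Exp(g)^{-1}\circ\g$ has trivial one-sided jets on both sides and is therefore a smooth diffeomorphism. Your explicit justification that the one-sided jets of the time-one flow of the glued field depend only on the corresponding one-sided jets of $g$ (via invariance of the two arcs under the flow) is a step the paper asserts without comment, so your write-up is, if anything, slightly more detailed.
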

\begin{proof}
Let us first show that, for two sequences of real numbers $\{\lambda_n^+\}_{n\geq 2}, \{\lambda^-_m\}_{m\geq 2}$,
there exists $g\in C^1(S^1,\mathbb{R})$, such that 
\begin{itemize}
 \item $\Exp(g)(-1) = -1, \Exp(g)^{(1)}(-1) = 1$ 
 \item $\Exp(g)\in\psone(S^1)$
 \item $\Exp(g)$ is smooth on $S^1\setminus\{-1\}$
 \item $\partial_+^n\Exp(g)(-1)=\lambda_n^+ $, $\partial_-^m \Exp(g)(-1)=\lambda_m^- $ for all $n,m\geq2$.
\end{itemize}
By applying Lemma \ref{lm:inverselimit} to $\{\lambda_n^+\}_{n\geq 2}, \{\lambda^-_m\}_{m\geq 2}$,
there exist $g_+,g_-\in C^{\infty}(S^1,\RR)$ such that $\Exp(g_+)^{(n)}(-1)=\{\lambda_n^+\}$
and $\Exp(g_-)^{(n)}(-1)=\{\lambda_n^-\}$, $m,n\ge 2$, $\Exp(g_\pm)^{(n)}(-1) = -1$ and $\Exp(g_\pm)^{(1)}(-1) = 1$.

We may assume that $g_\pm$ has compact support around $-1$. By gluing the restrictions of $g_+$ and $g_-$
to $I_+, I_-$ respectively, we obtain $g$ which is smooth on $S^1\setminus\lbrace -1\rbrace$,
is in $C^1(S^1,\RR)$ and $g|_{I_+}=g_+|_{I_+}$, $g|_{I_-}=g_-|_{I_-}$.
As the only nonsmooth point of $g$ is $-1$, we have $\Exp(g)\in\psone(S^1)$ and
$\Exp(g)$ is smooth on $S^1\setminus\{-1\}$.

For $n\ge 2$, we set $\lambda_n^+ \coloneqq \partial_+^n\gamma(-1)$ and
 $\lambda_m^- \coloneqq \partial_-^m\gamma(-1)$.
 By the observation above, there exists $g\in C^1(S^1,\RR)$, smooth on $S^1\setminus\lbrace-1\rbrace$
 such that $\Exp(g)\in\psone(S^1)$ and $\partial_+^n\Exp(g)(-1)=\lambda_n^+ $, $\partial_-^m \Exp(g)(-1)=\lambda_m^- $
 for all $n,m\geq2$, $\Exp(g)$ is smooth on $S^1\setminus\lbrace-1\rbrace$
 and $\Exp(g)(-1) = -1, \Exp(g)^{(1)}(-1) = 1$.
 It follows that $\underline\gamma\coloneqq \g\circ\Exp(-g)$
 has $\partial^k_+\underline\gamma(-1)=\partial^k_-\underline\gamma(-1)=0$ for all $k\ge 2$, therefore, $\underline\gamma \in B_\infty\subset\diff(S^1)$.
\end{proof}

\begin{proposition}\label{pr:psone}
In any conformal net $(\A, U, \Omega)$, $U$ can be extended to $\psone(S^1)$ (not necessarily continuously)
in such a way that the net is covariant with respect to $U$.
\end{proposition}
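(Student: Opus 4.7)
The plan is to define $U(\g)$ for $\g \in \psone(S^1)$ via the decomposition provided by Lemma \ref{lm:glue}, combining the stress-energy tensor with the already existing representation on $\diff(S^1)$. First, since $\g$ has only finitely many nonsmooth points, a standard localization argument lets us write $\g$ as a finite product of elements of $\psone(S^1)$ each carrying at most one nonsmooth point; it is enough to implement each such factor. For a factor $\g$ with single nonsmooth point $z_0$, I would compose with rotations and dilations (which are smooth and already in the image of $U$) to reduce to the case $z_0 = -1$, $\g(-1) = -1$, $\g'(-1) = 1$; the normalization of $\g'(-1)$ is possible because $\g \in \psone(S^1) \subset \mathrm{Diff}^1(S^1)$ has a well-defined positive derivative at $-1$.

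Under this normalization, Lemma \ref{lm:glue} produces $\g = \Exp(g) \circ \underline{\g}$ with $g \in C^1(S^1, \RR)$ piecewise smooth, smooth on $S^1 \setminus \{-1\}$, and $\underline{\g} \in B_\infty \subset \diff(S^1)$. By \cite[Lemma 2.2]{Weiner06}, $g$ belongs to $\tremezzi$, and therefore by the results recalled in Section \ref{piecewise}, $T(g)$ is essentially self-adjoint on $\H^\fin$, so $e^{iT(g)}$ is a well-defined unitary. I then set
\[
 U(\g) := e^{iT(g)}\, U(\underline{\g}),
\]
and verify covariance as follows. For $I \in \I$ with $\overline{I} \cap \{-1\} = \emptyset$, decompose $g = g_+ + g_-$ into its restrictions to the two components of $S^1 \setminus \{-1\}$; each $g_\pm$ still lies in $\tremezzi$ and by \cite[Proposition 2.3]{Weiner06} the operator $e^{iT(g_\pm)}$ is affiliated to the corresponding local algebra. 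Approximating $g_\pm$ by smooth functions in $\Vert\cdot\Vert_\frac32$ and passing to the strong resolvent limit in Proposition \ref{pr:covariance} (as in the proof of Lemma \ref{lm:smoothaction}) shows that $\Ad e^{iT(g_\pm)}$ implements the geometric action of $\Exp(g_\pm)$ on such local algebras, hence so does $\Ad e^{iT(g)}$; composed with $\Ad U(\underline{\g})$, this gives the action of $\g$ on $\A(I)$. Covariance for $I$ with $-1 \in \overline{I}$ then follows from Haag duality.

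The main obstacle I expect is \textbf{well-definedness}, since the decompositions used in the three steps above (localization into single-nonsmooth-point factors, normalization by rotations and dilations, and the choice of $(g, \underline{\g})$ in Lemma \ref{lm:glue}) are all highly non-unique. My approach is to leverage the projective setting: the covariance argument just outlined shows that $\Ad U(\g)$ depends only on $\g$ as a diffeomorphism, and by the irreducibility $\bigvee_{I \in \I} \A(I) = \B(\H)$ from (CN\ref{cn:irreducibility}), any two unitaries with the same adjoint action differ only by a phase. Thus $U(\g)$ is unambiguously defined as an element of $\U(\H)/\mathbb{T}$, yielding a projective unitary extension of $U$ from $\diff(S^1)$ to $\psone(S^1)$ that implements the geometric action on the net.
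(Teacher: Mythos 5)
Your overall route coincides with the paper's: reduce to a single nonsmooth point at $-1$ with derivative $1$, apply Lemma \ref{lm:glue} to write $\g = \Exp(g)\circ\underline{\g}$, set $U(\g)=e^{iT(g)}U(\underline{\g})$ using essential self-adjointness of $T(g)$ for $g\in\tremezzi$, and resolve the non-uniqueness of all the decompositions by observing that any two candidates have the same adjoint action on the algebras $\A(I)$ with $-1\notin\overline I$, which generate $\B(\H)$, so they differ by a phase. That part is fine.

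There is, however, a genuine gap in your covariance step. You verify that $\Ad e^{iT(g_\pm)}$ ``implements the geometric action'' by approximating in $\Vert\cdot\Vert_{\frac32}$ and passing to the limit in Proposition \ref{pr:covariance} / Lemma \ref{lm:smoothaction}. Those statements describe how the \emph{stress-energy tensor} $T(f)$ transforms; taking limits of them only controls the action of $e^{iT(g)}$ on the Virasoro subnet of $\A$, not on the full local algebras of a general conformal net (for which $\A(I)$ is typically strictly larger than the algebra generated by the $e^{iT(f)}$'s). To get covariance of $\A$ itself you must argue directly with arbitrary $x\in\A(I)$: choose smooth approximants (the paper uses mollifications $g_\mu = g*h_\mu$), use the assumed diffeomorphism covariance (CN10) to get $\Ad U(\Exp(tg_\mu))(x)\in\A(\Exp(tg_\mu)I)$, and then pass to the limit. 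This requires controlling \emph{two} convergences simultaneously: $e^{iT(g_\mu)}\to e^{iT(g)}$ strongly (which does follow from $\Vert g_\mu-g\Vert_{\frac32}\to 0$ via \cite[Proposition 4.5]{CW05}) \emph{and} $\Exp(tg_\mu)\to\Exp(tg)$ as flows, so that the limit lands in $\A(\Exp(tg)I)$. The second convergence is not supplied by the $\Vert\cdot\Vert_{\frac32}$-approximation alone in your write-up; the paper obtains it from the continuous-dependence theorem for ODEs \cite[Chapter 1, Theorem 7.4]{CL55} applied to the mollified family. Relatedly, your well-definedness argument needs slightly more than ``$\Ad U(\g)(\A(I))=\A(\g I)$'': mapping the algebras correctly does not pin down the automorphism. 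What you actually need (and what the limit formula $\Ad e^{iT(g)}(x)=\lim_\mu \Ad U(\Exp(g_\mu))(x)$ provides) is that on each $\A(I)$ with $-1\notin\overline I$ the action agrees with that of a smooth local representative of $\g$, after which \eqref{eq:diffcov2} gives independence of the choices.
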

\begin{proof}
 We first show that $\g \in \psone(S^1)$ is implementable if $\g(-1) = -1$ and $\g'(-1) = 1$
 and $\g$ is smooth elsewhere.
 
 By Lemma \ref{lm:glue}, we have $\g = \Exp(g)\circ \underline\gamma$, where
 $g$ is piecewise smooth and $C^1$, and $\underline\gamma \in \diff(S^1)$.
 The smooth element $\underline\gamma$ is already implemented by $U$, therefore, to obtain the desired extension,
 it is enough to prove that $\Exp(g)$ is implementable for $g$ which is piecewise smooth and $C^1$.

 Any such $g$ can be approximated by smooth $g(\theta,\mu) = g*h_\mu(\theta)$ with $0<\mu\le 1$, where
 $h_1$ is a smooth function with support in $[-1,1]$ such that $h_1\geq 0$, $\int h_1=1$
 and $h_\mu(\theta) = \frac1\mu h_1\left(\frac\theta \mu\right)$.
 We set $g(\theta,0) = g(\theta)$. Then it is clear that $g$ is a continuous function of $(\theta, \mu)$
 and uniformly Lipschitz in $\theta$, since $\partial_\theta g(\theta,\mu) = g'*h_\mu(\theta)$
 and $\partial_\theta g(\theta,0) = g'(\theta)$. By \cite[Chapter1, Theorem 7.4]{CL55},
 $\Exp(tg_\mu)(\theta)$ is continuous in $\mu$ at each $\t, \theta$.
 Now, as $g_\mu$ is smooth for $\mu > 0$, we have $\Ad U(\Exp(tg_\mu))(\A(I)) = \A(\Exp(tg_\mu)I)$.
 Note that $U(\Exp(tg_\mu)) = e^{itT(g_\mu)}$ up to a scalar, where $T$ is the stress-energy tensor for $U$,
 and since $g_\mu \to g$ in the $\tremezzi$-topology
 \cite[Lemma 4.6]{CW05}, $e^{itT(g_\mu)} \to e^{itT(g)}$ in the strong operator topology
 \cite[Proposition 4.5]{CW05}.
 Therefore, it holds that $\Ad e^{itT(g)}(x) = \lim_{\mu\to 0}\Ad U(\Exp(tg_\mu))(x)$ for any $x \in \A(I)$, and by the continuity above,
 $\Ad e^{itT(g)}(x) \in \A(\Exp(tg)I)$. If we set $U(\Exp(tg)) = e^{itT(g)}$,
this acts covariantly on the net $\A$.
 
 It remains to show that $U$ gives a well-defined projective representation of $\psone(S^1)$.
 Let us first define $U$ and show the well-definedness.
 \begin{itemize}
 \item First consider $\g$ which has only one nonsmooth point at $-1$ and $\g(-1) = -1, \g'(-1) = 1$.
 We take the decomposition $\g = \Exp(g)\circ\underline{\g}$
 and define $U(\g) = U(\Exp(g))U(\underline{\g})$.
 This is well-defined as a projective representation. Indeed, its adjoint action on $\A(I), -1\notin\overline I$
 is determined by $U$ as the representation of $\diff(S^1)$, and such $\A(I)$'s generate
 $\B(\H)$, therefore, if we take another decomposition
 $U(\g) = U(\Exp(g_1))U(\underline{\g_1})$, the difference must be a scalar.
 \item Second, if $\g$ has only one nonsmooth point,
 then $\g = \g_\L\circ \g_0\circ \g_\R$ with $\g_0$ such that $\g_0(-1)=-1, \g_0'(-1) = 1$
 and smooth elements $\g_\L,\g_\R$. By the well-definedness above and the fact that $U$ is
 already defined on smooth elements, $U(\g) = U(\g_\L)U(\g_0)U(\g_\R)$ is also well-defined.
 \item If $\g$ has finitely many nonsmooth points,
 we decompose it into $\g = \g_0\cdot \check \g$, where
 $\g_0$ fixes all these nonsmooth points, has derivative $1$ and $\supp \g_0$
 is a disjoint union of intervals around these nonsmooth points.
 As the nonsmooth part has disjoint unions, if we take the product
 of $U$ defined above on each component, this does not depend on the order of the product.
 If we consider two such decompositions, the nonsmooth parts cancel each other
 up to a smooth element which is already defined, hence $U$ is well-defined.
 \end{itemize}
 That $U$ is a projective representation is shown as follows.
 \begin{itemize}
  \item For two elements $\g_1,\g_2$ such that $\g_j(-1) = -1$,
  $\Ad U(\g_1\g_2)$ and $\Ad U(\g_1)U(\g_2)$ implement the same action of $\A(I)$ such
  that $-1\notin\overline I$ as before, hence the difference between $U(\g_1\g_2)$ and $U(\g_1)U(\g_2)$
  must be a scalar.
  \item By rotation, the homomorphism property $U(\g_1\g_2) = U(\g_1)U(\g_2)$, up to a scalar, follows also
  when $\g_1$ and $\g_2$ has only one and same nonsmooth point.
  \item For two elements $\g_1,\g_2$ with finitely many nonsmooth points,
  take the decompositions as above: $\g_j = \g_{j,0},\check \g_j$.
  We may assume that the components of $\check \g_1\g_{2,0}\check\g_1^{-1}$
  is either disjoint from the components of $\g_{1,0}$, or have a common nonsmooth point.
  If they are disjoint, their representation by $U$ commute. If they have a common nonsmooth
  point, we can merge them to a single element and we have shown the homomorphism property 
  above. In this way, we have the decomposition
  $\g_1\g_2 = (\g_{1,0}\check \g_1\g_{2,0}\check\g_1^{-1})\cdot (\check \g_1\check\g_2)$,
  where $\g_{1,0}\check \g_1\g_{2,0}\check\g_1^{-1}$ is supported around the nonsmooth points
  and $\check \g_1\check\g_2$ is smooth and we have
  $U(\g_1)U(\g_2) = U(\g_{1,0})U(\check \g_1)U(\g_{2,0})U(\check\g_2) = U(\g_{1,0}\check \g_1\g_{2,0}\check\g_1^{-1})U(\check \g_1\check\g_2) = U(\g_1\g_2)$ up to a scalar.  
 \end{itemize}
 We have seen the covariance of the net with respect to $\g_0$ such that $\g_0(-1) = -1$
 and $\g_0'(-1) = 1$ and smooth elsewhere.
 Any element $\g \in \psone(S^1)$ can be decomposed as a product of such elements
 and smooth elements, and for each of them we have shown the covariance,
 hence the covariance holds also for $\g$. 
\end{proof}

\subsubsection*{Acknowledgements.}
We would like to thank Roberto Longo for suggesting the problem.
We are grateful to Sebastiano Carpi, Andr\'e Henriques, Karl-Hermann Neeb and Stefano Rossi for various interesting discussions.
S.D.\! and Y.T.\! acknowledge the MIUR Excellence Department Project awarded to the Department of Mathematics, University of Rome Tor Vergata, CUP E83C18000100006.

{\small
\newcommand{\etalchar}[1]{$^{#1}$}
\def\cprime{$'$} \def\polhk#1{\setbox0=\hbox{#1}{\ooalign{\hidewidth
  \lower1.5ex\hbox{`}\hidewidth\crcr\unhbox0}}} \def\cprime{$'$}

}
\end{document}